\DeclareRobustCommand{\cev}[1]{%
  \mathpalette\do@cev{#1}%
}
\newcommand{\do@cev}[2]{%
  \fix@cev{#1}{+}%
  \reflectbox{$\m@th#1\vec{\reflectbox{$\fix@cev{#1}{-}\m@th#1#2\fix@cev{#1}{+}$}}$}%
  \fix@cev{#1}{-}%
}
\newcommand{\fix@cev}[2]{%
  \ifx#1\displaystyle
    \mkern#23mu
  \else
    \ifx#1\textstyle
      \mkern#23mu
    \else
      \ifx#1\scriptstyle
        \mkern#22mu
      \else
        \mkern#22mu
      \fi
    \fi
  \fi
}
\newtheorem{thm}{Theorem}
\newtheorem{prop}[thm]{Proposition}
\newtheorem{conj}[thm]{Conjecture}
\newcommand{\N}{\mathcal{N}}
\begin{document}
\preprint{APS/123-QED}

\title{General Communication Enhancement via the Quantum Switch}

\date{\today}


\author{Zhen Wu}
\affiliation{School of Mathematical Sciences, Shanghai Jiao Tong University, 800 Dongchuan Road, Shanghai, China}

\author{James Fullwood}
\email{fullwood@hainanu.edu.cn}
\affiliation{School of Mathematics and Statistics, Hainan University, 58 Renmin Avenue, Haikou, China}

\author{Zhihao Ma}
\email{mazhihao@sjtu.edu.cn}
\affiliation{School of Mathematical Sciences, Shanghai Jiao Tong University, 800 Dongchuan Road, Shanghai, China}

\author{Siqi Zhou}
\affiliation{School of Mathematical Sciences, Shanghai Jiao Tong University, 800 Dongchuan Road, Shanghai, China}

\author{Qi Zhao}
\email{zhaoqithu10@gmail.com}
\affiliation{QICI Quantum Information and Computation Initiative, Department of Computer Science, The University of Hong Kong, Pok Fu Lam Road, Hong Kong}

\author{Giulio Chiribella}
\email{giulio@cs.hku.hk}
\affiliation{QICI Quantum Information and Computation Initiative, Department of Computer Science, The University of Hong Kong, Pok Fu Lam Road, Hong Kong}
\affiliation{Department of Computer Science, University of Oxford, Wolfson Building, Parks Road, Oxford, UK}
\affiliation{Perimeter Institute for Theoretical Physics, 31 Caroline Street North, Waterloo,  Ontario, Canada}

\begin{abstract}
	    Recent studies have shown that quantum information may be effectively transmitted by a finite collection of completely depolarizing channels in a coherent superposition of different orders, via an operation known as the quantum $\tt SWITCH$. Such results are quite remarkable, as completely depolarizing channels taken in isolation and in a definite order can only output white noise. For general channels however, little is known about the potential communication enhancement provided by the quantum $\tt SWITCH$. In this Letter, we define an easily computable quantity $\mathcal{P}_n$ associated with the quantum ${\tt SWITCH}$ of $n$ copies of a fixed channel, and we conjecture that $\mathcal{P}_n>0$ is both a necessary and sufficient condition for communication enhancement via the quantum $\tt SWITCH$. In support of our conjecture, we derive a simple analytic expression for the classical capacity of the quantum $\tt SWITCH$ of $n$ copies of an arbitrary Pauli channel in terms of the quantity $\mathcal{P}_n$, which we then use to show that our conjecture indeed holds in the space of all Pauli channels. Utilizing such results, we then formulate a communication protocol involving the quantum $\tt SWITCH$ which enhances the private capacity of the BB84 channel. 
\end{abstract}

\maketitle

{\em Introduction.~}Quantum Shannon theory---the extension of Shannon's communication theory to the quantum domain~\cite{Wilde}---has achieved communication advantages not possible with the classical theory, such as secure quantum key distribution and distributed quantum computation   \cite{BB84,E91,quantuminternet}. Recently, there has been interest in a further extension of quantum Shannon theory where quantum systems are not only used to carry information, but also to control the configuration of the communication devices~\cite{IncomPRA,Comm_Phys,QuanResPRA,PRSA}. In particular, it has been observed that the ability to combine  quantum devices in a coherent superposition of different orders can give rise  to   advantages over the standard scenario where the corresponding channels are used in parallel or in a sequence~\cite{IncomPRL,QSWofCDC,CSwitchPRL,CSwitchPRA,QSwitchPRL1,QSwitchNJP,PhysScri,Liu_2023,Ther_Liu,Entropy,JSAC,QuanCommPRR,IncreCommPRR}.  These advantages are based on a specific instance of indefinite causal order known as the quantum ${\tt SWITCH}$~\cite{Switch2},  an operation whose input is a finite collection of quantum channels and whose output is a new quantum channel.  In the quantum ${\tt SWITCH}$, the input channels are executed in a superposition of causal orders controlled by an auxiliary quantum system. Besides quantum communication,  it has been shown that the quantum ${\tt SWITCH}$ yields advantages in various tasks, such as charging quantum batteries~\cite{QuanBattery,ChargeBatteryPRR}, distinguishing quantum processes~\cite{ChannelDiscrimination,UnitaryJMP,QIP,PerfectDisPRA,IJQI},  reducing quantum communication complexity~\cite{CommComple}, and improving the precision of quantum metrology~\cite{MetrologyPRL,MetrologyNP,MetrologyPRA}.

Notably, it has been observed that the quantum ${\tt SWITCH}$ of two completely depolarizing channels has the capacity to transmit classical information~\cite{CSwitchPRL}, despite the fact that any combination of completely depolarizing channels in a definite configuration is only able to transmit white noise. More generally, it has been shown that the quantum ${\tt SWITCH}$ of a finite collection of completely depolarizing channels can even transmit \textit{quantum} information when arranged in a superposition of cyclic orders~\cite{QSWofCDC}, and some entanglement-breaking channels can achieve perfect communication via the quantum ${\tt SWITCH}$~\cite{QSwitchNJP,PhysScri}.

Although many studies have  illustrated the capacity enhancement of the quantum ${\tt SWITCH}$, such results however are limited to specific examples, and little is known about the communication enhancement provided by the quantum ${\tt SWITCH}$ for generic channels. Due to the fact that in general there is no analytic expression for the quantum/classical capacity of quantum channels, novel approaches must be developed to gain insight into the general problem of characterizing the  communication enhancements offered by the quantum $\tt SWITCH$. A clue for such an approach appears in Ref.~\cite{IncomPRA}, where the quantum ${\tt SWITCH}$  was applied to two copies of a  randomly generated a mixture $\mathcal{C}$ of unitary qubit channels, thus generating a new  quantum  channel $\mathcal{S}(\mathcal{C},\mathcal{C})$. Numerical analysis 
suggested that the Holevo information of the output channel $\mathcal{S}(\mathcal{C},\mathcal{C})$ may be correlated with a real-valued measure
of non-commutativity of the Kraus operators of the original channel $\mathcal{C}$. However, no theoretical explanation for this possible link between Holevo information and non-commutativity has been found thus far. Moreover, the Holevo information is only a lower bound for the classical capacity, and therefore an increase of the  Holevo information does not necessarily imply an increase in the classical capacity. 

In this Letter, we establish precise mathematical results which lay the foundation for a systematic study of communication enhancement via the quantum $\tt SWITCH$. In particular, we define an easily computable quantity $\mathcal{P}_n$ associated with the quantum $\tt SWITCH$ of $n$ channels, and we conjecture that $\mathcal{P}_n>0$ is a necessary and sufficient condition for the quantum $\tt SWITCH$ to increase capacity (outside of a set of measure-zero). In support of this conjecture, we consider the three-dimensional simplex of all Pauli channels (minus the point corresponding to the completely depolarizing channel when $n$ is odd), and prove that the quantum ${\tt SWITCH}$ of the forward and backward orders of the the $n$-fold composition of an arbitrary Pauli channel enhances the classical capacity if and only if $\mathcal{P}_n>0$. In the case that $n$ is even, we show $\mathcal{P}_n=0$ if and only if the Kraus operators of the associated Pauli channel pair-wise commute, thus the link between capacity enhancement provided by the quantum ${\tt SWITCH}$ and non-commutativity as suggested in~\cite{IncomPRA} is made mathematically precise by our results. 

We also show that in the simplex of all Pauli channels, the locus $\mathcal{P}_n=0$ is contained in the edges of the simplex, thus the superposition of the forward and backward orders of $n$ compositions of a Pauli channel as implemented by the quantum ${\tt SWITCH}$ enhances classical capacity outside a set of measure-zero. In addition to the case of Pauli channels, we consider a family of qudit depolarizing channels (of arbitrary dimension) parametrized by an interval, and prove that $\mathcal{P}_n>0$ is  again a necessary and sufficient condition for capacity enhancement via the quantum $\tt SWITCH$ of the forward and backward orders of the $n$-fold composition of a fixed channel in the family. Finally, we provide a potential application to quantum cryptography, by showing that the quantum $\tt SWITCH$ increases the private capacity of a general BB84 channel.

{\em The quantity $\mathcal{P}_n$ and causal gains.~}In quantum Shannon theory, a communication process is mathematically described by a quantum channel, i.e., a completely positive trace-preserving (CPTP) linear map $\mathcal{C}:\text{Lin}(\mathcal{H}_A)\to \text{Lin}(\mathcal{H}_B)$, where $\mathcal{H}_A$ and $\mathcal{H}_B$ are the Hilbert spaces of the input and output systems of the channel, $\text{Lin}(\mathcal{H})$ contains all linear operators of $\mathcal{H}$. Any such  map admits a Kraus representation $\mathcal{C}(\rho) = \sum_i C_i\rho C_i^\dagger$, where the Kraus operators $\{C_i\}$ satisfy $\sum_i C_i^\dagger C_i = I$. 

The quantum ${\tt SWITCH}$ is a \emph{supermap}~\cite{SwitchWD,Switch2} which takes as its input a finite collection of quantum channels and outputs a channel that uses the input channels in a superposition of orders which is entangled with an auxiliary control system~\cite{Switch2,SwitchN}. In particular, given a collection of CPTP maps $\{\mathcal{C}^i:\text{Lin}(\mathcal{H}_A)\to \text{Lin}(\mathcal{H}_B)\}_{i=1}^n$, then the quantum ${\tt SWITCH}$ of the channels $\{\mathcal{C}^{i}\}_{i=1}^{n}$ with respect to a subset $\mathbf{S}$ of permutations on $n$ letters is a channel of the form $\mathcal{S}(\mathcal{C}^{1},\ldots,\mathcal{C}^{n}):\text{Lin}(\mathcal{H}_A)\otimes \text{Lin}(\mathcal{H}_C)\to \text{Lin}(\mathcal{H}_B)\otimes \text{Lin}(\mathcal{H}_C)$, where $\mathcal{H}_C$ is the Hilbert space of the control system. If $\{C^i_{s_i}\}$ is a collection of Kraus operators for the channel $\mathcal{C}^i$, then fixing a control state $\omega \in \mathcal{H}_C$ yields an effective channel $\mathcal{S}^n:\text{Lin}(\mathcal{H}_A)\to \text{Lin}(\mathcal{H}_B)\otimes \text{Lin}(\mathcal{H}_C)$, which may be written as~\cite{QSWofCDC}
\begin{equation}\label{switchN}
	\mathcal{S}^n(\rho) = \sum_{\pi^k,\pi^l\in \mathbf{S}} C_{\pi^k \pi^l}(\rho) \otimes \omega_{k l}|k\rangle\langle l|_C,
\end{equation}
where $\mathbf{S}$ is the given subset of permutations,
$
	C_{\pi^k \pi^l}(\rho) := \sum_{s_1,\dots,s_n} C_{\pi^k(s_1,\dots,s_n)} \rho C_{\pi^l(s_1,\dots,s_n)}^\dagger
$, and $C_{\pi^k(s_1,\dots,s_n)}:= C^{\pi^k(1)}_{s_{\pi^k(1)}}\cdots C^{\pi^k(n)}_{s_{\pi^k(n)}}$.
We note that while we have written the effective channel $\mathcal{S}^n$ in terms of Kraus operators for the channels $\{\mathcal{C}^i\}$, the quantum ${\tt SWITCH}$ is in fact independent of a choice of Kraus operators for its input channels.

Given such data defining the effective channel $\mathcal{S}^n$, we define the quantity~\cite{footnote1}
\begin{equation}\label{generalPn}
\mathcal{P}_{n}=
1-\frac{1}{m^2} \min_{\rho} \sum_{\pi^k,\pi^l\in \mathbf{S}} \text{Tr}\bigg(C_{\pi^k \pi^l}(\rho)\bigg)\, ,
\end{equation}
where $m:=|\mathbf{S}|$. For a given control state of the form $\omega = |\omega\rangle\langle \omega|$ with $|\omega\rangle = \sum_{i\in  \mathbf S} |i\rangle /\sqrt{m}$, $\mathcal{P}_{n}$ may be viewed as the maximum probability (as $\rho$ varies over all input states) of obtaining the measurement outcome $F_2$ associated with the projective measurement $\{F_1 = |\omega\rangle\langle\omega|, F_2 = I-|\omega\rangle\langle\omega|\}$. In the case $n=m=2$, the quantity $\mathcal{P}_{n}$ was used to characterize the non-commutativity of generic channels~\cite{IncomPRA,IncomPRL}, which is closely related to measurement incompatibility~\cite{MeasurementInc}. For general $n$ and $\mathbf{S}$, we show in the Appendix that $\mathcal{P}_{n}$ is \emph{not} necessarily a measure of non-commutativity of channels, as we prove $\mathcal{P}_{n}$ is zero if and only if the product of Kraus operators is $\mathbf{S}$-invariant, i.e., for all $s_1,\dots,s_n$ and $\pi^k,\pi^l\in \mathbf{S}$, we have $C_{\pi^k(s_1,\dots,s_n)} = C_{\pi^l(s_1,\dots,s_n)}$. However, our main interest in the quantity $\mathcal{P}_n$ lies in its connection to communication enhancement via the quantum ${\tt SWITCH}$, which we now explain.

Consider the case where $\mathcal{C}^i=\N$ for $i=1,...,n$, where $\mathcal{N}:\text{Lin}(\mathcal{H}_A)\to \text{Lin}(\mathcal{H}_B)$ is a fixed quantum channel, and let $\mathcal{N}^n$ denote the $n$-fold composition $\N\circ \cdots \circ \N$ (which may be obtained from the effective channel $\mathcal{S}^n$ by tracing out the control system).   For every  capacity measure $f$ satisfying the data-processing inequality $f  ({\cal A} \circ {\cal B} ) \le  f  (\cal B)$, for arbitrary quantum channels $\cal A$ and $\cal B$,  one  then obtains the \textit{bottleneck inequality}  
\begin{equation}\label{bottleneckineq}
	f(\mathcal{N}^n) \leq f(\mathcal{S}^n)\, .
\end{equation}
The classical, quantum, and private capacity, as well as the  Holevo information and coherent information are all capacity measures satisfying the above inequality.  Hence, we define the associated \textit{causal gain} to be the non-negative quantity $\delta_f$ given by \begin{equation}\label{CausalGain}
	 \delta_f = f(\mathcal{S}^n) - f(\mathcal{N}^n).
\end{equation}
The causal gain is a direct measure of the communication enhancement of the channel $\N$ which is achieved by inputting $n$-copies of the channel into the quantum $\tt SWITCH$.

In the appendix we show $\mathcal{P}_n=0$ implies $\delta_f=0$, so that $\mathcal{P}_n>0$ is a necessary condition for $\delta_f>0$. A natural question is whether or not $\mathcal{P}_n>0$ is also a \emph{sufficient} condition for positive causal gain. Since the only known channels for which $\mathcal{P}_n>0$ does not imply $\delta_f>0$ are completely dephasing channels composed with a unitary gate or discard-preparation channels, we put forth the following: 
\begin{conj}\label{conj}
For all channels outside a set of measure-zero, the condition $\mathcal{P}_n>0$ is necessary and sufficient for $\delta_f>0$.
\end{conj}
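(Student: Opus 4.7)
The necessity direction, $\mathcal{P}_n=0\Rightarrow \delta_f=0$, is already proved in the Appendix, so the conjecture reduces to the sufficiency statement that outside of a measure-zero subset of channels, $\mathcal{P}_n>0$ implies $\delta_f>0$. My plan is to proceed in three stages: (i) identify the mechanism by which the control register of $\mathcal{S}^n$ carries the extra information, (ii) exhibit an explicit input ensemble whose Holevo quantity separates the switched and composed channels whenever a mild genericity condition on $\mathcal{N}$ holds, and (iii) show that the channels violating this genericity condition form a set of measure zero in the manifold of CPTP maps.

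For the first two stages, I would read Eq.~\eqref{switchN} as a block matrix indexed by permutations, with diagonal blocks equal to $\mathcal{N}^n(\rho)$ and off-diagonal blocks $C_{\pi^k\pi^l}(\rho)$ for $k\neq l$ governed by the non-commutativity of the Kraus operators. By construction, $\mathcal{P}_n>0$ is equivalent to the existence of a state $\rho_\star$ whose control marginal is strictly non-maximally coherent in the basis $\{|k\rangle\}$. Choosing a binary ensemble $\{(1/2,\rho_\star),(1/2,U\rho_\star U^\dagger)\}$ for a unitary $U$ tailored to swap the two distinguishable control marginals should yield an analytic lower bound on the Holevo information of $\mathcal{S}^n$ that exceeds the Holevo information of $\mathcal{N}^n$ by an amount proportional to $\mathcal{P}_n$. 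Via the Holevo--Schumacher--Westmoreland theorem this would upgrade the Pauli-channel calculation of the Letter to generic $\mathcal{N}$, and by choosing $f$ to be the private capacity an analogous argument would recover the BB84 application.

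The principal obstacle is that the coefficient relating $\delta_f$ to $\mathcal{P}_n$ can itself vanish. The exceptions flagged after the conjecture---completely dephasing channels composed with a unitary, and discard-preparation channels---are cut out by algebraic identities on the Kraus operators (simultaneous diagonalizability in a fixed basis, or rank-one Kraus sums), and therefore lie in a proper real algebraic subvariety of channel space. The hard part will be proving that no further, as-yet-unidentified exceptional family exists: one must show that the vanishing locus of the proportionality coefficient inside $\{\mathcal{N}:\mathcal{P}_n(\mathcal{N})>0\}$ is contained in the union of the two known subvarieties above. Since the coefficient is polynomial in the Kraus data while $f$ is only lower semicontinuous, a clean argument will likely require a representation-theoretic analysis of how the switch distributes Kraus products across permutation orbits, and this is where a full proof will demand substantially more machinery than the Pauli-channel argument already presented.
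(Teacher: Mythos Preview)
The statement you are attempting to prove is labeled a \emph{Conjecture} in the paper, and the paper does not prove it. The authors explicitly write that superadditivity of Holevo and coherent information makes the conjecture ``quite difficult to prove in full generality,'' and they only establish it in two special families: Pauli channels (Theorem~\ref{DeltaCX}) and qudit depolarizing channels (Theorem~\ref{TMX87}), both restricted to the forward/backward permutation set. There is therefore no ``paper's own proof'' to compare your proposal against; any valid argument you produce would go strictly beyond the paper.

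As to the proposal itself, there are genuine gaps beyond those you flag. First, your ensemble construction is too vague to guarantee strict separation: the control marginal of $\mathcal{S}^n(\rho)$ is determined by the scalars $\mathrm{Tr}\,C_{\pi^k\pi^l}(\rho)$, and there is no reason a unitary $U$ acting on the \emph{target} system should ``swap'' these control marginals---you would need $U$ to alter $\mathrm{Tr}\,C_{\pi^k\pi^l}(U\rho_\star U^\dagger)$ in a controlled way, which is not automatic. Second, even granting a strict gain in Holevo information, this does not settle the conjecture for $f$ equal to the classical, quantum, or private capacity: the paper's Pauli argument works precisely because Holevo information is additive for unital qubit channels, and that additivity is what lets a single-letter gain propagate to the regularized capacity. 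For generic channels no such additivity is available, and a Holevo gap for $\mathcal{S}^n$ versus $\mathcal{N}^n$ does not by itself imply a gap in $C$, $Q$, or $C_p$. Third, your measure-zero step assumes the obstruction is cut out by polynomial equations in the Kraus data, but $f$ is a regularized limit and need not be algebraic (or even analytic) in those parameters, so the ``proper real algebraic subvariety'' picture is not justified. In short, your outline identifies the right qualitative mechanism but does not overcome the core obstacles the paper itself cites as the reason the statement remains a conjecture.
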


As the set of channels satisfying $\mathcal{P}_n = 0$ forms a measure-zero set for any permutation set $\mathbf{S}$ with $|\mathbf{S}|>1$, establishing the validity of Conjecture~\ref{conj} would imply that the quantum $\tt SWITCH$ enhances communication almost surely. However, due to superadditivity of Holevo information and coherent information, it is quite challenging to determine the classical and quantum capacities for a general quantum channel, thus making Conjecture~\ref{conj} quite difficult to prove in full generality. In fact, even evaluating Holevo or coherent information for generic channels is also challenging, requiring a non-convex optimization over all input states. Moreover, the computation of Holevo information for a generic channel is actually an NP-complete problem, even for entanglement breaking channels~\cite{HolevoNPC}. Nevertheless, in what follows we establish some substantial results in support of Conjecture~\ref{conj}.

{\em Pauli channels with forward and backward orders.~}
In this section, we establish the validity of Conjecture~\ref{conj} when restricted to the set of Pauli channels (i.e., for channels $\N$ of the form $\mathcal{N}(\rho)=\sum_{i=0}^3 p_i \sigma_i\rho\sigma_i$ with $(p_0,\dots,p_3)$ a probability vector), and for $\mathbf{S}$ the permutation set consisting of the identity permutation $(1,\ldots,n)$ and its reversal $(n,\ldots,1)$. In such a case, the quantum $\tt SWITCH$ places $n$ copies of a Pauli channel $\N$ in a superposition of the forward and backward orders $\mathcal{N}_n\circ\cdots\circ\mathcal{N}_1$ and $\mathcal{N}_1\circ\cdots\circ\mathcal{N}_n$ (where $\N_i=\N$ for all $i$), as illustrated in Fig.~\ref{Fig1}. While when used individually the forward and backward orders are indistinguishable, we show that when such orders are placed in a superposition via the quantum ${\tt SWITCH}$, an enhancement of classical capacity and coherent information occurs almost surely.

Since we consider a setting where the permutation set $\mathbf{S}$ only contains two elements, we fix the control state to be $\omega = |+\rangle\langle+|$, and then apply the projective  measurement $\{F_1 = |\omega\rangle\langle\omega|, F_2 = I-|\omega\rangle\langle\omega|\}$. We then find that the probability of obtaining $F_2$ is in fact independent of the initial state $\rho$, and therefore  it is equal to the quantity $\mathcal{P}_n$ (as given by \eqref{generalPn}), so that for all states $\rho$,
\begin{equation}\label{NC}
	\begin{aligned}
		\text{Tr}\bigg( (I\otimes F_2)\ \mathcal{S}^n(\rho) \bigg) = \mathcal{P}_n \, .
	\end{aligned}
\end{equation}

\begin{figure}

	\includegraphics[width=0.5\textwidth]{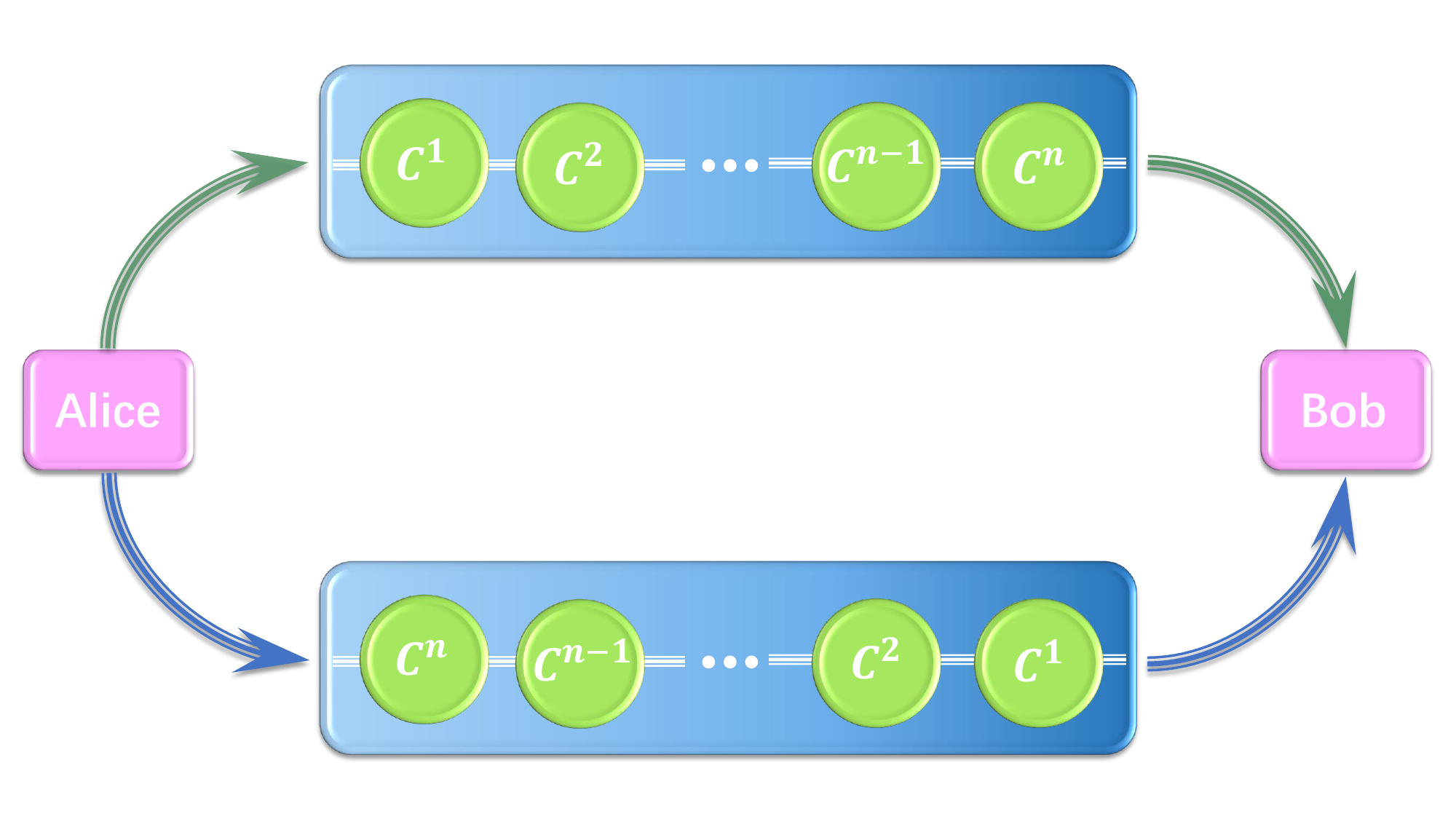}
	\caption{{\label{Fig1}} Alice sends messages to Bob through $n$ copies of a Pauli channel $\mathcal{C}^1,\dots,\mathcal{C}^n$. These channels are then placed in a superposition forward and backward orders as implemented by the quantum $\tt SWITCH$. }
\end{figure}

When $n$ is even, $\mathcal{P}_n$ is a faithful measure of non-commutativity, i.e., $\mathcal{P}_n = 0$ if and only if the Kraus operators of $\mathcal{N}$ pairwise-commute. More importantly, for generic $n$, $\mathcal{P}_n = 0$ happens only if the Choi rank of $\mathcal{N}$ is no more than 2, and such Pauli channels have been shown either to be degradable or anti-degradable~\cite{degradable,Structure}. Moreover, as the set of Pauli channels is parametrized by the 3-dimensional simplex of probability vectors in $\mathbb{R}^4$, and since the subset satisfying $\mathcal{P}_n = 0$ is contained in the edges of this simplex (which is a subset of measure-zero), it follows that $\mathcal{P}_n >0$ almost surely.

In regards to the establishment Conjecture~\ref{conj} in the context at hand, we now relate the quantity $\mathcal{P}_n$ to the causal gain $\delta_f$ when $f$ is either the classical capacity $C$ or the coherent information $I_c$. For this, we first note that the independence of equation \eqref{NC} on the state $\rho$ implies that  the effective channel $\mathcal{S}^n$ is given by
\begin{equation}\label{EffectiveChannel}
	\mathcal{S}^n(\rho) = (1-\mathcal{P}_n) \Phi_+ (\rho) \otimes |+\rangle\langle+| + \mathcal{P}_n \Phi_-(\rho) \otimes |-\rangle\langle-|,
\end{equation}
where $\Phi_+(\rho) = \sum_i s_i \sigma_i\rho \sigma_i$ and $\Phi_-(\rho)=\sum_i t_i \sigma_i\rho \sigma_i$ are two Pauli channels. Moreover, since $\mathcal{N}^n = \sum_{i=0}^3 q_i \sigma_i\rho\sigma_i$ is the composite of $n$ copies of $\mathcal{N}$, it follows that $\mathcal{N}^n = (1-\mathcal{P}_n) \Phi_+ + \mathcal{P}_n \Phi_-$. In the Appendix, we then prove the following formula for the classical capacity $C$ of the effective channel $\mathcal{S}^n$: 
\begin{thm}\label{thm1}
Let $\Phi_{\pm}$ be as in equation \eqref{EffectiveChannel}. Then 
	\begin{equation}\label{classicalcapa}
		C(\mathcal{S}^n) = (1-\mathcal{P}_n) C(\Phi_+) + \mathcal{P}_n C(\Phi_-).
	\end{equation}
\end{thm}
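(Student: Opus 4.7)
The plan is to establish matching upper and lower bounds on $C(\mathcal{S}^n)$, leveraging three structural facts: the orthogonality of the flags $|\pm\rangle$; the Pauli-channel nature of both branches $\Phi_\pm$; and the additivity of the Holevo information (hence $C=\chi$) for Pauli channels. The starting point is the single-letter identity $\chi_{\{q_x,\rho_x\}}(\mathcal{S}^n) = (1-\mathcal{P}_n)\chi_{\{q_x,\rho_x\}}(\Phi_+) + \mathcal{P}_n\chi_{\{q_x,\rho_x\}}(\Phi_-)$, valid for every input ensemble, which follows from the flag-orthogonality decomposition $S(\mathcal{S}^n(\sigma)) = h(\mathcal{P}_n) + (1-\mathcal{P}_n)S(\Phi_+(\sigma)) + \mathcal{P}_n S(\Phi_-(\sigma))$ (with $h$ the binary entropy); the $h(\mathcal{P}_n)$ contributions cancel between the output entropy and its ensemble average.

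For the upper bound, I would apply the same identity to $(\mathcal{S}^n)^{\otimes k}$: its $2^k$ orthogonal flag sectors are indexed by $s\in\{+,-\}^k$ with weights $p_s=(1-\mathcal{P}_n)^{k_+(s)}\mathcal{P}_n^{k_-(s)}$ and branches $\Phi_s=\bigotimes_i\Phi_{s_i}$, each a Pauli channel. Bounding each ensemble term by $\chi(\Phi_s)$, invoking additivity of Holevo for tensor products of Pauli channels to write $\chi(\Phi_s)=k_+(s)\chi(\Phi_+)+k_-(s)\chi(\Phi_-)$, and averaging over the binomial weights yields $\chi((\mathcal{S}^n)^{\otimes k})/k\leq(1-\mathcal{P}_n)\chi(\Phi_+)+\mathcal{P}_n\chi(\Phi_-)$. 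Passing to the HSW limit and using $C(\Phi_\pm)=\chi(\Phi_\pm)$ then delivers the desired upper bound on $C(\mathcal{S}^n)$.

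For the lower bound, I would exhibit an ensemble that jointly saturates the Holevo capacities of $\Phi_+$ and $\Phi_-$; the natural candidate is the two-outcome eigenbasis of their (hoped-for) common dominant Pauli axis. To justify that such a common axis exists, I would derive the explicit cross-term $C_{12}(\sigma_j)=\mu_j\sigma_j$ by tracking the commutation signs of $\sigma_{s_n}\cdots\sigma_{s_1}\,\sigma_j\,\sigma_{s_1}\cdots\sigma_{s_n}$, then extract the Bloch contractions via $(1-\mathcal{P}_n)\lambda_j^+ + \mathcal{P}_n\lambda_j^- = (\lambda_j^{\mathcal{N}})^n$ and $(1-\mathcal{P}_n)\lambda_j^+ - \mathcal{P}_n\lambda_j^- = \mu_j$, and argue that the modulus ordering of the triple $(\lambda_j^{\mathcal{N}})$ is inherited by both $(\lambda_j^+)$ and $(\lambda_j^-)$. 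A single-letter achievability on the common eigenbasis then attains the matching lower bound.

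The main obstacle is precisely this co-ranking of the Bloch contractions: although $\Phi_\pm$ have very different Pauli spectra—at $n=2$, for instance, one checks that $\Phi_-$ has vanishing identity weight—one still expects their dominant axes to coincide with that of $\mathcal{N}$, making a common optimal eigenbasis available. Should joint single-letter optimality fail in some regime, I would fall back on a multi-letter compound-channel argument: because flag orthogonality lets Bob measure the flag sequence non-destructively, $N$-fold transmission reduces to a classical-quantum compound channel with receiver side information, whose capacity is achieved by a universal code at rate $(1-\mathcal{P}_n)C(\Phi_+)+\mathcal{P}_n C(\Phi_-)$.
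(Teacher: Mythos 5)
Your upper bound is essentially the paper's: the flag-orthogonality decomposition of the output entropy, the tensor-power expansion of $(\mathcal{S}^n)^{\otimes k}$ into $2^k$ flag sectors, additivity of Holevo information for qubit Pauli channels, and the HSW limit all appear in the paper's Appendix C in the same roles. The structure of your lower bound is also the right one. But the proof has a genuine gap at exactly the step you flag as ``the main obstacle'': you never actually prove that $\Phi_+$ and $\Phi_-$ admit a \emph{common} dominant Pauli axis (equivalently, a common minimal-output-entropy pure state, hence a common optimal two-state ensemble). You write that you would ``argue that the modulus ordering \dots is inherited'' and that ``one still expects their dominant axes to coincide,'' but no argument is supplied. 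This is precisely where the paper does its real work: it derives explicit polynomial identities for the coefficient differences $s_j^n-s_k^n$ and $t_j^n-t_k^n$ of $\mathcal{S}^n_\pm$ in terms of $\vec p$ (Propositions~\ref{prop1} and~\ref{prop2}), showing e.g.\ $s^n_3-s^n_2=(p_3-p_2)f(\vec p)$ with $f>0$ and $t^n_3-t^n_2=-(p_3-p_2)\widetilde f(\vec p)$ with $\widetilde f>0$, and then runs a sign/case analysis on $(\mu_3^2-\mu_2^2)=4(s_0^n-s_1^n)(s_3^n-s_2^n)$ versus $(\nu_3^2-\nu_2^2)=4(t_0^n-t_1^n)(t_3^n-t_2^n)$ to conclude the argmax indices coincide. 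Note the signs of $s_0^n-s_i^n$ and $t_0^n-t_i^n$ are \emph{opposite}, so the coincidence of dominant axes is a product-of-two-sign-flips phenomenon, not an obvious monotonicity; without Propositions~\ref{prop1}--\ref{prop2} the claim is not self-evident. Also, your phrasing ``inherited from $(\lambda_j^{\mathcal N})$'' is not quite the right statement: what is needed (and what the paper proves) is co-ranking of $\Phi_+$ with $\Phi_-$; co-ranking with $\mathcal{N}^n$ can fail under sign cancellation in $\gamma_i=(1-\mathcal{P}_n)\mu_i+\mathcal{P}_n\nu_i$ and is only analyzed later, for the causal-gain theorem.

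Your fallback does not repair this. With the flag readable only at the receiver, the achievable rate of the resulting channel-with-receiver-side-information is $\max_{\{q_x,\rho_x\}}\bigl[(1-\mathcal{P}_n)\,\chi_{\{q_x,\rho_x\}}(\Phi_+)+\mathcal{P}_n\,\chi_{\{q_x,\rho_x\}}(\Phi_-)\bigr]$ --- a single maximization over a common input ensemble, since the encoder does not know the flag realization. This equals $(1-\mathcal{P}_n)C(\Phi_+)+\mathcal{P}_n C(\Phi_-)$ if and only if a jointly optimal ensemble exists, which is the very fact left unproved. So the compound-channel route is a restatement of the gap, not an alternative to closing it.
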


As the classical capacity of a Pauli channel is equal to its Holevo information~\cite{Additivity}, and since the Holevo information may be computed in terms of the minimal entropy via the formula $\chi=1-H^{\text{min}}$~\cite{HolevoPauli3}, we can then use Theorem~\ref{thm1} to derive an explicit expression for the classical causal gain $\delta_C$. More precisely, given a Pauli channel $\mathcal{M}$ with eigenvalues $\{\lambda_i\}$, corresponding to the solution set of  the characteristic equation $\mathcal{M} (A) = \lambda A$, the minimal entropy of $\mathcal{M}$ is given by $H^{\text{min}}(\mathcal{M})=h(\lambda)$, where $h(x) = H(\frac{1+x}{2})$, $H$ is the binary entropy and $\lambda = \max_i |\lambda_i|$. Therefore, setting $\lambda$, $\mu$ and $\nu$ to be the maximum eigenvalues of $\N^n$, $\Phi_{+}$ and $\Phi_{-}$ respectively, the classical capacity of the effective channel is $C(\mathcal{S}^n) = 1-(1-\mathcal{P}_n)h(\mu) - \mathcal{P}_n h(\nu)$, so that the classical causal gain may be written as
\begin{equation} \label{DELTAC17}
    \delta_C = h(\lambda)-(1-\mathcal{P}_n)h(\mu) - \mathcal{P}_n h(\nu)\, .
\end{equation}
Utilizing Eq.~\eqref{DELTAC17}, one may show that Conjecture~\ref{conj} indeed holds for classical causal gains of Pauli channels with forward and backward orders.

Contrary to classical capacity, the quantum capacity of Pauli channels is still unknown since coherent information is not additive, even for qubit channels. Hence, we consider the causal gain associated with coherent information $I_c$~\cite{Schumacher}. For a Pauli channel $\mathcal{N}$, the coherent information attains a maximum on the completely mixed state $\frac{I}{2}$, which is called the \textit{hashing bound}~\cite{Wilde,Hashing}: $I_c(\mathcal{N}) = 1 - H(\vec{p})$, where $\vec{p} = (p_0,p_1,p_2,p_3)$ is the probability vector associated with the Pauli channel $\mathcal{N}$.
On the other hand, since $\Phi_\pm$ are both Pauli channels, it follows from Eq.~\eqref{EffectiveChannel} that the coherent information of $\mathcal{S}^n$ also attains a maximum on $\frac{I}{2}$:
$
	I_c(\mathcal{S}^n) = 1- (1-\mathcal{P}_n)H(\vec{s}) -  \mathcal{P}_nH(\vec{t}).
$
As such, the causal gain $\delta_I$ associated with the coherent information takes a similar form to that of $\delta_C$, as
\begin{equation}\label{CICG}
	\delta_I = H(\vec{q}) - (1-\mathcal{P}_n)H(\vec{s}) -  \mathcal{P}_nH(\vec{t})\, .
\end{equation}
Similar to the case of the classical causal gain, it is then straightforward to deduce that $\mathcal{P}_n>0$ is a necessary and sufficient condition for $\delta_I>0$ almost surely, in support of Conjecture~\ref{conj}.

It is worth noting that the increase of coherent information does not necessarily imply the rise of quantum capacity, since quantum capacity describes the highest rate that quantum information can be transmitted over many uses of a channel. However, since quantum coherent information quantifies the information lost to the environment during quantum communication~\cite{SurveyCapa}, the condition $\delta_I>0$ implies that the quantum $\tt SWITCH$ will protect quantum communication. 
Moreover, as coherent information is directly related to the entanglement-assisted classical capacity~\cite{ECC}, and since the coherent information of $\mathcal{S}^n$ attains its maximum on $\frac{I}{2}$, one may also use equation \eqref{CICG} to deduce necessary and sufficient conditions for causal gains associated with entanglement-assisted classical capacity of channels.

In summary, the results established in this section may be used to prove the following theorem (whose detailed proof appears in the Appendix).

\begin{thm}\label{DeltaCX}
Let $f$ be the classical capacity or coherent information, and let $\delta_f$ be the causal gain associated with forward and backward orders. Then in the simplex of all Pauli channels, the condition $\mathcal{P}_n>0\iff \delta_f>0$ holds almost surely (i.e., outside a set of measure-zero). 
\end{thm}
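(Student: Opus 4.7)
The plan is to exploit the affine decomposition $\mathcal{N}^n = (1-\mathcal{P}_n)\Phi_+ + \mathcal{P}_n \Phi_-$ together with concavity of the relevant entropic quantities, and then to push the equality case into a measure-zero subset of the Pauli simplex. From this decomposition, the probability vector satisfies $\vec{q} = (1-\mathcal{P}_n)\vec{s} + \mathcal{P}_n \vec{t}$, and because the nontrivial eigenvalues of a Pauli channel are linear functionals of its probability vector, the corresponding eigenvalues of $\mathcal{N}^n$, $\Phi_+$, and $\Phi_-$ obey $\lambda_k = (1-\mathcal{P}_n)\mu_k + \mathcal{P}_n \nu_k$ coordinate-wise. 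The implication $\mathcal{P}_n=0 \Rightarrow \delta_f = 0$ is already proved in the Appendix, so what remains is to establish the genericity of $\delta_f > 0$.

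For the coherent information, formula \eqref{CICG} combined with strict concavity of the Shannon entropy gives $\delta_I \geq 0$, with strict inequality precisely when $\mathcal{P}_n \in (0,1)$ and $\vec{s} \neq \vec{t}$. For the classical capacity, the triangle inequality applied to $\lambda_k = (1-\mathcal{P}_n)\mu_k + \mathcal{P}_n \nu_k$ yields $\lambda \leq (1-\mathcal{P}_n)\mu + \mathcal{P}_n \nu$; since $h$ is strictly decreasing and strictly concave on $[0,1]$, the chain $h(\lambda) \geq h((1-\mathcal{P}_n)\mu + \mathcal{P}_n \nu) \geq (1-\mathcal{P}_n) h(\mu) + \mathcal{P}_n h(\nu)$ holds, and by \eqref{DELTAC17} the final inequality is strict unless $\mathcal{P}_n \in \{0,1\}$ or $\mu = \nu$. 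In either case the equality locus is contained in the vanishing set of certain polynomials in $\vec{p}$, because $\vec{s}(\vec{p})$ and $\vec{t}(\vec{p})$ are obtained from counting forward and backward Pauli products in the expansion \eqref{switchN} projected onto the $|\pm\rangle$ control basis, hence are polynomial in $\vec{p}$. Each such locus is either the whole simplex or a proper semi-algebraic subset, and exhibiting any single Pauli channel with all $p_i$ positive and pairwise distinct at which the relevant polynomial does not vanish forces the locus to be proper, hence of Lebesgue measure zero.

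The main obstacle is verifying that these equality polynomials do not vanish identically on the simplex; that is, that at a generic $\vec{p}$ the channels $\Phi_+$ and $\Phi_-$ really differ, so that $\vec{s} \neq \vec{t}$ and $\mu \neq \nu$. This requires tracking, in the forward/backward expansion \eqref{switchN}, which Pauli strings arise and with what multiplicities under the two orders, and verifying that their weight distributions are generically distinct. Once this combinatorial input is in place, the measure-zero conclusion follows from the standard fact that a proper real-algebraic subset of the 3-simplex carries Lebesgue measure zero, which combined with the almost-sure vanishing of $\{\mathcal{P}_n = 0\}$ already noted in the excerpt yields the biconditional $\mathcal{P}_n > 0 \iff \delta_f > 0$ almost surely.
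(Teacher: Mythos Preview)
Your outline is correct in spirit and takes a genuinely different route from the paper. The paper proves two detailed combinatorial lemmas (its Propositions~\ref{prop1} and~\ref{prop2}) giving explicit signed factorizations of the differences $s_i^n-s_j^n$ and $t_i^n-t_j^n$; with these in hand it shows first that the maxima $\mu=\max_i|\mu_i|$ and $\nu=\max_j|\nu_j|$ are always attained at the \emph{same} index, and then that the equality cases $\gamma=\mu=\nu$ (for $\delta_C$) and $\vec{s}=\vec{t}$ (for $\delta_I$) force $n$ odd and $\mathcal{N}$ completely depolarizing. So the paper identifies the exceptional set \emph{exactly} as a single point. Your approach bypasses this combinatorics: you observe that the equality locus is cut out by polynomial conditions in $\vec p$ and invoke the fact that a nontrivial real polynomial vanishes on a set of Lebesgue measure zero, reducing the problem to exhibiting one witness where the polynomial does not vanish. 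This is more economical and would suffice for the ``almost surely'' statement, at the cost of losing the exact description of the exceptional set.

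Two cautions. First, in the classical-capacity case the condition $\mu=\nu$ is \emph{not} a single polynomial equation, because $\mu$ and $\nu$ are maxima of absolute values of rational functions of $\vec p$. You should make explicit the containment $\{\mu=\nu\}\subseteq\bigcup_{i,j}\{\mathcal{P}_n^2 M_i^2=(1-\mathcal{P}_n)^2 N_j^2\}$, where $M_i:=(1-\mathcal{P}_n)\mu_i$ and $N_j:=\mathcal{P}_n\nu_j$ are genuine polynomials in $\vec p$, and then argue that each of the nine polynomials $\mathcal{P}_n^2 M_i^2-(1-\mathcal{P}_n)^2 N_j^2$ is not identically zero. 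Merely calling the locus ``semi-algebraic'' is not enough, since a proper semi-algebraic set can have full measure; it is the containment in a finite union of proper \emph{algebraic} hypersurfaces that gives measure zero. Second, the gap you flag---verifying nontriviality---requires checking these nine polynomials (for $\delta_C$) and at least one component of $\mathcal{P}_n\,s_i^n-(1-\mathcal{P}_n)\,t_i^n$ (for $\delta_I$). A single well-chosen interior point of the simplex with pairwise distinct $p_i$ should dispatch all of them simultaneously for each fixed $n$; carrying out that computation (or citing the paper's explicit formulas for $s_i^n,t_i^n$) would close the argument.
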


We note than when $n$ is even, the almost surely quantifier may be removed from Theorem~\ref{DeltaCX}, as $\mathcal{P}_n>0 \iff\delta_f>0$ for all Pauli channels. In the case that $n$ is odd however, the only counter-example to the equivalence $\mathcal{P}_n>0 \iff\delta_f>0$ is the completely depolarizing channel.

{\em Beyond qubits: classical causal gains of depolarizing channels.~}
In this section, we establish the validity of Conjecture~\ref{conj} for a collection of qudit depolarizing channels parametrized by the unit interval $[0,1]$. In particular, given $p\in [0,1]$ and $d>1$, let $\mathcal{D}_p^{d}$ be the channel given by 
\[\mathcal{D}_p^d(\rho) = (1-p)\rho + p\text{Tr}(\rho)\frac{I}{d}\, .
\]
Inputting $n$ copies of $\mathcal{D}_p^{d}$ into the quantum $\tt SWITCH$ with respect to forward and backward orders, we find
\[
\mathcal{P}_n = \frac{1}{2}-\frac{1}{4}\bigg((d+1)(1-p+\frac{p}{d})^n - (d-1)(1-p-\frac{p}{d})^n\bigg)\, .
\]
Furthermore, in Appendix E we prove a simple formula for the classical capacity of the effective channel $\mathcal{S}^n$ in terms of the quantity $\mathcal{P}_n$, thus yielding an explicit expression for the classical causal gain $\delta_C$. We are then able to prove the following:
\begin{thm} \label{TMX87}
Let $\delta_C$ be the classical causal gain for qudit depolarizing channels with respect to forward and backward orders. Then the condition $\mathcal{P}_n>0\iff \delta_f>0$ holds almost surely (i.e., outside a set of measure-zero).
\end{thm}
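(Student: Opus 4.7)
The strategy I would follow mirrors that of Theorem~\ref{DeltaCX} in the Pauli setting. The first step is to exploit the $U(d)$-covariance of $\mathcal{D}_p^d$—namely $U\mathcal{D}_p^d U^\dagger = \mathcal{D}_p^d$ for every unitary $U$—to argue that the branch channels $\Phi_+$ and $\Phi_-$ appearing in the depolarizing analogue of the decomposition~\eqref{EffectiveChannel} inherit the same covariance, and are therefore themselves depolarizing: $\Phi_\pm = \mathcal{D}_{p_\pm}^d$ for explicit parameters $p_\pm$ that may be read off from the Kraus products associated with the forward and backward orderings of the $n$ copies.

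Once this structural reduction is in place, I would derive the depolarizing analogue of Theorem~\ref{thm1}, namely $C(\mathcal{S}^n) = (1-\mathcal{P}_n)\,C(\Phi_+) + \mathcal{P}_n\,C(\Phi_-)$. Achievability follows by measuring the control in the $\{|+\rangle\langle+|,|-\rangle\langle-|\}$ basis and decoding each branch with its capacity-achieving ensemble, while the matching converse invokes additivity of Holevo information for the depolarizing family~\cite{Additivity} together with the fact that Holevo information equals the classical capacity there. Writing the capacity of a qudit depolarizing channel as $C(\mathcal{D}_q^d) = \log d - H_2(s(q)) - (1-s(q))\log(d-1)$ with $s(q) = 1 - q(d-1)/d$, one sees that $C(\mathcal{D}_q^d)$ is strictly convex in $q$ on $(0,1)$, since $s$ is affine in $q$ and $H_2$ is strictly concave. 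Partial-tracing the control in the displayed decomposition yields the mixture $\mathcal{N}^n = (1-\mathcal{P}_n)\Phi_+ + \mathcal{P}_n\Phi_-$, so the classical causal gain becomes a Jensen-type difference and strict convexity immediately delivers $\delta_C > 0$ whenever $\mathcal{P}_n \in (0,1)$ and $p_+ \neq p_-$; the reverse implication $\mathcal{P}_n = 0 \Rightarrow \delta_C = 0$ is already covered by the general necessary-condition result proved in the Appendix.

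To upgrade to the almost-surely quantifier, I would finally verify that the failure locus of $\mathcal{P}_n > 0 \iff \delta_C > 0$ on the parameter interval $[0,1]$ is contained in the union of $\{p : \mathcal{P}_n(p) = 0\}$ and $\{p : \mathcal{P}_n(p) > 0 \text{ and } p_+(p) = p_-(p)\}$. Each set is defined by a polynomial equation in $p$: the explicit formula exhibits $\mathcal{P}_n(p)$ as a polynomial of degree at most $n$ that is nonzero at $p = 1$ for every $d \geq 2$, so its zero set is finite; an analogous non-identical-vanishing argument (via evaluation at a single non-trivial value of $p$) handles $p_+ - p_-$. Both sets are therefore finite and Lebesgue-negligible, completing the proof. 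The principal obstacle is the explicit combinatorial bookkeeping needed to identify $\Phi_\pm$ and prove the capacity formula: $U(d)$-covariance alone pins $\Phi_\pm$ down to the depolarizing family, but extracting closed-form expressions for $p_\pm(p,n,d)$ and rigorously verifying the capacity identity requires patient tracking of Kraus products across both orderings, and this is the main technical content I would relegate to the appendix.
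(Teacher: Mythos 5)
Your proposal is correct and follows the same overall architecture as the paper's Appendix~E: decompose the effective channel into the two flagged branches $\Phi_\pm$, prove the splitting formula $C(\mathcal{S}^n)=(1-\mathcal{P}_n)C(\Phi_+)+\mathcal{P}_n C(\Phi_-)$ via additivity of the Holevo information for depolarizing channels together with covariance (so that $\chi=\log d-H^{\min}$ and the bound is achieved on a uniform orthonormal ensemble), and then compare with $C(\mathcal{D}_p^n)$. You differ in two places, both defensibly. First, you identify $\Phi_\pm$ as members of the depolarizing family by a $U(d)$-covariance argument rather than by the paper's explicit resummation of Kraus products; this is sound (the switch is Kraus-representation independent and the control projections commute with the system unitary, so each branch inherits full unitary covariance and Schur's lemma pins it to $\alpha\,\mathrm{id}+(1-\alpha)\mathrm{Tr}(\cdot)I/d$), but note that the explicit computation cannot really be skipped, since your final step needs closed forms for $\mathcal{P}_n$ and $p_\pm$ anyway, and it reveals that $\Phi_-$ sits at $\lambda_2\in[1,d^2/(d^2-1)]$, i.e.\ outside the usual parameter range $[0,1]$ --- you should check that the formula $C=\log d-H_2(s(q))-(1-s(q))\log(d-1)$, its strict convexity, and the additivity input (King for depolarizing, or Shor since this branch is entanglement-breaking) all remain valid on that extended range; they do, since $s(q)\in[1/(d+1),1)$ there. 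Second, where the paper disposes of the vanishing locus with unshown ``simple but lengthy calculations,'' you observe that the depolarizing parameter is affine under mixing, so $\delta_C$ is a strict Jensen gap and vanishes (for $0<\mathcal{P}_n<1$) only when $p_+=p_-$, after which a non-identically-vanishing polynomial argument gives a finite failure set. This is cleaner and suffices for the almost-surely statement of Theorem~\ref{TMX87}, though it yields slightly less than the paper's exact characterization (failure only at $p=0$, or at $p=1$ with $n$ odd); to recover that you would still have to solve $\lambda_1=\lambda_2$ explicitly.
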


Similar to Theorem~\ref{DeltaCX}, the almost surely quantifier in Theorem~\ref{TMX87} is needed to exclude the cases where $n$ is odd and $\mathcal{D}_p^d$ is completely depolarizing.

Somewhat counterintuitively, we find that the behavior of $\delta_C$ depends heavily on the level of noise determined by the value of $p$. For $p<0.5$, the classical causal gain benefits both from increasing $d$ and $n$, and in such a case, the optimal number $n_{opt}$ of channels for communication enhancement via quantum $\tt SWITCH$ approximately satisfies $p\cdot n_{opt} \approx 1.$ However, when the noise is slightly larger than 0.5, $\delta_C$ decreases with respect to $n$ for dimensions $d \leq 5$, while $\delta_C$ increases for $d>5$. Finally, as the noise approaches 1, classical causal gains tend to zero as $d$ gets large, which agrees the result about classical causal gains of completely depolarizing channel~\cite{CSwitchPRL} (see the corresponding figures in the Appendix).

{\em Noise reduction in quantum cryptography.~}
Quantum cryptography is a branch of quantum Shannon theory that uses quantum properties to achieve cryptographic tasks~\cite{QuanCryp} . In 1984, Charles Bennett and Gilles Brassard proposed the first quantum cryptography protocol---known as the \textit{BB84 protocol}---which is provably secure~\cite{Simpleproof,unconditionalsecurity}. Suppose a sender (Alice) and a receiver (Bob) use the so-called BB84 channel $\mathcal{N}_q$ given by
\[
\mathcal{N}_q(\rho) = (1-q)^2 \rho + q(1-q) X\rho X + q^2 Y\rho Y + q(1-q) Z\rho Z
\]
to communicate classical messages, whose private capacity is equal to the maximal achievable key rate with a quantum bit error rate of $q$ in the BB84 protocol~\cite{AdditiveExtension}.

As the private capacity of the BB84 channel is still unknown, we consider the following upper bound~\cite{AdditiveExtension}
\begin{equation}\label{BB84}
    C_p(\mathcal{N}_q) \leq H\bigg(\frac{1}{2} - 2q(1-q)\bigg) - H(2q(1-q))\, ,
\end{equation}
where $H$ is the binary entropy~\cite{footnote2}.

While $\mathcal{N}_q$ and $\mathcal{N}_{1-q}$ have the same private capacity due to the fact that they differ by a unitary transformation, 
the causal gains provided by the quantum $\tt SWITCH$ are in fact different for $\mathcal{N}_q$ and $\mathcal{N}_{1-q}$. This follows from the fact that the quantity $\mathcal{P}_n$ and causal gains are in general \textit{not} unitarily covariant, in the sense that the output channel of two copies $\mathcal{U}\circ \mathcal{C}$ via quantum $\tt SWITCH$ is not equal to $(\mathcal{U} \otimes \mathcal{I}) \circ \mathcal{S}^2(\mathcal{C})$ for general unitary channels $\mathcal{U}$. Based on this feature, we now formulate a new communication protocol which enhances the private capacity of the BB84 channel.

\begin{figure}
        \subfigure[Protocol]{\includegraphics[width=\columnwidth]{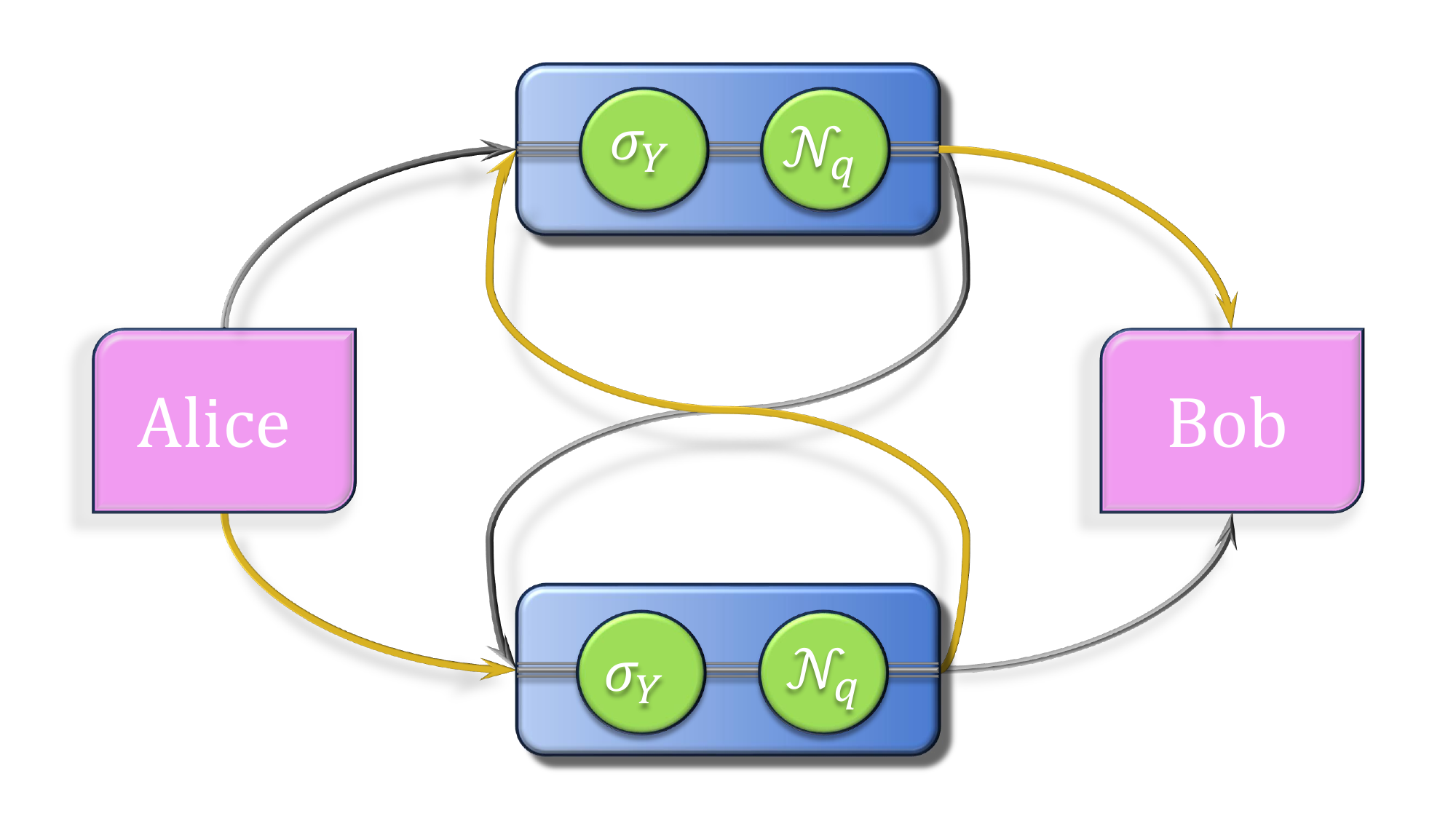}}
        \subfigure[Private capacity]{\includegraphics[width=\columnwidth]{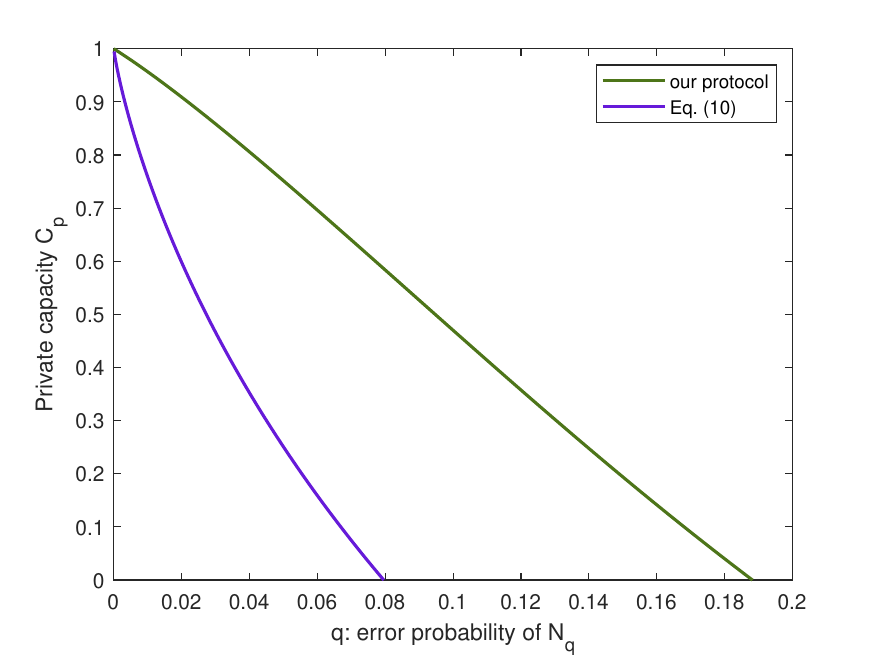}}
	\caption{{\label{Fig3}} Noise reduction in private communication of the quantum $\tt SWITCH$. (a) A diagram of our protocol, in which only two copies of $\sigma_Y \circ \mathcal{N}_q$ are put into quantum $\tt SWITCH$. (b) Private capacity of two channels. The violet curve is an upper bound for private capacity of the composite channel generated by two copies $\mathcal{N}_q$, as given by Eq.~\eqref{BB84}. The green curve corresponds to the coherent information of the output channel of the associated quantum $\tt SWITCH$ in our protocol. }
\end{figure}

Suppose that Alice uses two identical copies of a BB84 channel $\mathcal{N}_q$ to communicate with Bob. If she uses the two copies sequentially, for example, if she sends a message to a repeater who then transmits the message to Bob, then such a strategy is equivalent to using a composite BB84 channel with an error rate of $r = 2q-2q^2$ in a single pass. By Eq.~\eqref{BB84}, it follows that in such a case Alice cannot transmit classical information securely if $q$ is over 8\%. The situation changes however with the aid of the quantum $\tt SWITCH$. As shown in Fig.~\ref{Fig3}, suppose Alice composes both copies of the BB84 channel with a $\sigma_Y$ gate to obtain two copies of an $\mathcal{N}_{1-q}$ channel, and then inputs these two channels into a quantum $\tt SWITCH$. As the coherent information is a lower bound of private capacity, a non-zero coherent information of the output channel of the quantum $\tt SWITCH$ indicates Alice can transmit private classical messages at a non-zero rate. Indeed, the quantum $\tt SWITCH$ can effectively enhance private communication in such a protocol. In particular, when the error probability is at (0.08, 0.188), the conventional scheme with definite causal orders is no longer able to securely transmit classical information, while the scheme making use of the quantum $\tt SWITCH$ still can.

{\em Conclusions.~}
In this Letter, we defined an easily computable quantity $\mathcal{P}_n$ associated with the quantum $\tt SWITCH$ of $n$ channels, and have shown in a large class of examples that the condition $\mathcal{P}_n>0$ is necessary and sufficient for communication enhancement via the quantum $\tt SWITCH$ (outside a set of measure-zero in the associated parameter spaces). In particular, in the case of $n$ copies of a Pauli channel, we were able to prove an explicit formula for the classical capacity of the effective channel associated with the quantum $\tt SWITCH$ of the forward and backward orders, enabling us to effectively compute causal gains associated with the quantum $\tt SWITCH$ in terms of the quantity $\mathcal{P}_n$. We then extended such results to a set of depolarizing qudit channels (of arbitrary dimension) parametrized by an interval, and again showed that $\mathcal{P}_n>0$ is necessary and sufficient for positive causal gains associated with the quantum $\tt SWITCH$ almost surely. Such results then lead us to conjecture that the condition $\mathcal{P}_n>0$ is both necessary and sufficient for communication enhancement via the quantum $\tt SWITCH$ for generic channels. We then concluded by using our results to formulate a communication protocol involving the quantum $\tt SWITCH$ which increases the private capacity of the BB84 channel, which plays a prominent role in quantum cryptography.

{\em Acknowledgements.~} This work is supported by the Fundamental
Research Funds for the Central Universities, the National Natural Science Foundation of China (no. 12371132).

\bibliography{channelcapacity}

\newpage

\begin{widetext}

\appendix

\section{The condition $\mathcal{P}_n = 0$.}

\subsection{The general case.}

For $n$ copies of quantum channel $\mathcal{C}$ with Kraus operators $\{C_i\}$ and a given set $\textbf{S}$ containing $m$ permutations, the quantum ${\tt SWITCH}$ of the channels $\{\mathcal{C}^{i}\}_{i=1}^{n}$ with respect to the set $\mathbf{S}$ and a control state $\omega$ yields an effective channel $\mathcal{S}^n$ given by
\begin{equation}
	\mathcal{S}^n(\rho) = \sum_{\pi^i,\pi^j\in \mathbf{S}} C_{\pi^i \pi^j}(\rho) \otimes \omega_{i j}|i\rangle\langle j|_C\, ,
\end{equation}
where $C_{\pi^i \pi^j}(\rho) = \sum_{s_1,\dots,s_n} C^{\pi^i(1)}_{s_{\pi^i(1)}}\cdots C^{\pi^i(n)}_{s_{\pi^i(n)}} \rho \bigg( C^{\pi^j(1)}_{s_{\pi^j(1)}}\cdots C^{\pi^j(n)}_{s_{\pi^j(n)}}\bigg)^\dagger$ and $\{C^{k}_{s_k}\}$ are Kraus operators of $k$-th channel $\mathcal{C}$.

As mentioned in main text, for a given control state $\omega = |\omega\rangle\langle\omega|$ with $|\omega\rangle = \sum_{i\in \textbf{S}} |i\rangle /\sqrt{m}$, the quantity $\mathcal{P}_n$ denotes the maximum probability of obtaining the measurement outcome $F_2$ associated with the projective measurement $\{F_1 = |\omega\rangle\langle\omega|, F_2 = I-|\omega\rangle\langle\omega|\}$, i.e.,
\begin{equation}
\mathcal{P}_n = \max_{\rho} \text{Tr}\bigg( (I\otimes F_2)\ \mathcal{S}^n(\rho) \bigg) =  1-\frac{1}{m^2} \min_{\rho} \sum_{i,j} \text{Tr}\bigg(C_{\pi^i \pi^j}(\rho)\bigg)\, .
\end{equation}

In this subsection, we prove the following:
\begin{prop}
    The quantity $\mathcal{P}_n = 0$ if and only if the Kraus operators $\{C_i\}$ are $\textbf{S}$-invariant, that is $C^{\pi^i(1)}_{s_{\pi^i(1)}}\cdots C^{\pi^i(n)}_{s_{\pi^i(n)}} = C^{\pi^j(1)}_{s_{\pi^j(1)}}\cdots C^{\pi^j(n)}_{s_{\pi^j(n)}}$ for all $s_1,\dots,s_n$ and $\pi^i,\pi^j$ are arbitrary two permutations of $\textbf{S}$.
\end{prop}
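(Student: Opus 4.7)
The plan is to recast the condition $\mathcal{P}_n = 0$ as a single operator identity among sums of Kraus products, and then extract the $\mathbf{S}$-invariance via a Stinespring-type dilation combined with the equality case of the triangle inequality in Hilbert space.

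The reverse direction ($\mathbf{S}$-invariance $\Rightarrow \mathcal{P}_n = 0$) is immediate: if $C^{\pi^i(1)}_{s_{\pi^i(1)}}\cdots C^{\pi^i(n)}_{s_{\pi^i(n)}}$ is independent of $\pi^i\in\mathbf{S}$, then every $C_{\pi^i\pi^j}(\rho)$ coincides with the $n$-fold composition $\mathcal{N}^n(\rho)$, so $\sum_{\pi^i,\pi^j}\mathrm{Tr}(C_{\pi^i\pi^j}(\rho))=m^2$ identically in $\rho$, forcing $\mathcal{P}_n=0$ by definition.

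For the forward direction, the first step is to note that $\mathcal{P}_n$ is a maximum of nonnegative probabilities, so $\mathcal{P}_n = 0$ is equivalent to $\mathrm{Tr}\bigl((I\otimes F_2)\mathcal{S}^n(\rho)\bigr) = 0$ for every $\rho$. Grouping terms sharing the same multi-index $\vec{s}=(s_1,\dots,s_n)$ and abbreviating $K_{\pi,\vec{s}} := C_{\pi(s_1,\dots,s_n)}$, this condition becomes the operator identity $\sum_{\vec{s}} M_{\vec{s}}^\dagger M_{\vec{s}} = m^2 I$ with $M_{\vec{s}}:=\sum_{\pi\in\mathbf{S}} K_{\pi,\vec{s}}$. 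Next, promote each ordering to a Stinespring-type isometry $V_\pi := \sum_{\vec{s}} K_{\pi,\vec{s}}\otimes|\vec{s}\rangle$ into an auxiliary Hilbert space spanned by the multi-indices; trace-preservation of $\mathcal{N}^n$ in each individual order yields $\sum_{\vec{s}} K_{\pi,\vec{s}}^\dagger K_{\pi,\vec{s}} = I$, so $V_\pi^\dagger V_\pi = I$, and the identity above rewrites as $\bigl(\sum_\pi V_\pi\bigr)^\dagger\bigl(\sum_\pi V_\pi\bigr) = m^2 I$, i.e., $\bigl\|\bigl(\sum_\pi V_\pi\bigr)|\psi\rangle\bigr\| = m\|\psi\|$ for every $|\psi\rangle$.

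The final step is the triangle inequality $\bigl\|\bigl(\sum_\pi V_\pi\bigr)|\psi\rangle\bigr\| \le \sum_\pi \|V_\pi|\psi\rangle\| = m\|\psi\|$, where the last equality uses that each $V_\pi$ is an isometry. Saturation forces the vectors $\{V_\pi|\psi\rangle\}_{\pi\in\mathbf{S}}$ to be non-negative scalar multiples of a single unit vector; since they all have the common norm $\|\psi\|$, they must in fact coincide. Varying $|\psi\rangle$, we conclude $V_\pi = V_{\pi'}$ as operators, and by linear independence of the basis $\{|\vec{s}\rangle\}$ this is exactly $K_{\pi,\vec{s}} = K_{\pi',\vec{s}}$ for all $\vec{s}$ and all $\pi,\pi'\in\mathbf{S}$, the desired $\mathbf{S}$-invariance. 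The main obstacle is the equality analysis in the triangle inequality: one must avoid concluding merely that the images have equal norms and instead leverage the common-norm property of isometric images to promote alignment to pointwise equality. Once this is in hand, the rest of the argument is essentially bookkeeping.
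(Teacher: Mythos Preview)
Your argument is correct but takes a genuinely different route from the paper. The paper works pairwise: for each pair $\pi^i,\pi^j\in\mathbf{S}$ and each multi-index $\vec{s}$, it applies the Cauchy--Schwarz inequality to the Hilbert--Schmidt inner product $\mathrm{Tr}(AB^\dagger)$ with $A=K_{\pi^i,\vec{s}}\sqrt{\rho}$ and $B=K_{\pi^j,\vec{s}}\sqrt{\rho}$, obtaining $\mathrm{Tr}(C_{\pi^i\pi^j}(\rho))\le 1$; then $\mathcal{P}_n=0$ forces equality termwise, and the Cauchy--Schwarz equality condition yields $K_{\pi^i,\vec{s}}\sqrt{\rho}=K_{\pi^j,\vec{s}}\sqrt{\rho}$, which upon letting $\rho$ range over pure states gives $K_{\pi^i,\vec{s}}=K_{\pi^j,\vec{s}}$. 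You instead sum over all permutations at once, dilate each ordering to a Stinespring isometry $V_\pi=\sum_{\vec{s}}K_{\pi,\vec{s}}\otimes|\vec{s}\rangle$, and rewrite the condition as $\bigl\|\sum_\pi V_\pi|\psi\rangle\bigr\|=\sum_\pi\|V_\pi|\psi\rangle\|$, extracting $V_\pi=V_{\pi'}$ from the triangle-inequality equality case together with the common norm of isometric images. The paper's approach is more elementary and stays at the level of density matrices; yours is more structural, treats all permutations uniformly rather than in pairs, and makes the equality analysis particularly clean since the isometry property removes any ambiguity of scale in the saturation step.
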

\begin{proof}
Using Cauchy-Schwarz inequality: $\text{Tr}(AB^\dagger) \leq \sqrt{\text{Tr}(AA^\dagger)} \sqrt{\text{Tr}(BB^\dagger)} \leq \frac{1}{2}\bigg(\text{Tr}(AA^\dagger) + \text{Tr}(BB^\dagger)\bigg)$, the equality holds if and only if $A=B$, and since $\rho$ is positive semi-definite, we have a Hermitian operator $H$ such that $\rho = H^2 = HH^\dagger$, so that $H=\sqrt{\rho}$. Therefore, for arbitrary two permutations $\pi^i,\pi^j \in \textbf{S}$,
\begin{equation}
    \begin{aligned}
       \text{Tr}\bigg(C_{\pi^i \pi^j}(\rho)\bigg) &= \sum_{s_1,\dots,s_n} \text{Tr}\bigg[C^{\pi^i(1)}_{s_{\pi^i(1)}}\cdots C^{\pi^i(n)}_{s_{\pi^i(n)}} \sqrt{\rho} \bigg( C^{\pi^j(1)}_{s_{\pi^j(1)}}\cdots C^{\pi^j(n)}_{s_{\pi^j(n)}} \sqrt{\rho}\bigg)^\dagger\bigg] \\
       & \leq \sum_{s_1,\dots,s_n} \frac{1}{2} \text{Tr}\bigg[ C^{\pi^i(1)}_{s_{\pi^i(1)}}\cdots C^{\pi^i(n)}_{s_{\pi^i(n)}} \sqrt{\rho} \bigg( C^{\pi^i(1)}_{s_{\pi^i(1)}}\cdots C^{\pi^i(n)}_{s_{\pi^i(n)}} \sqrt{\rho}\bigg)^\dagger + C^{\pi^j(1)}_{s_{\pi^j(1)}}\cdots C^{\pi^j(n)}_{s_{\pi^j(n)}} \sqrt{\rho} \bigg( C^{\pi^j(1)}_{s_{\pi^j(1)}}\cdots C^{\pi^j(n)}_{s_{\pi^j(n)}} \sqrt{\rho}\bigg)^\dagger \bigg] \\
       & = \frac{1}{2} \bigg[  \text{Tr}\bigg(C_{\pi^i \pi^i}(\rho)\bigg) +  \text{Tr}\bigg(C_{\pi^j \pi^j}(\rho)\bigg)\bigg] \\
       & = 1\, ,
    \end{aligned}
\end{equation}
where the second equality is for Kraus representation of channels $C_{\pi^i \pi^i}(\rho), C_{\pi^j \pi^j}(\rho)$, the third equality holds because $C_{\pi^i \pi^i}(\rho), C_{\pi^j \pi^j}(\rho)$ are quantum channels, which are trace preserving. Hence, $\mathcal{P}_n = 0$ if and only if $\text{Tr}\bigg(C_{\pi^i \pi^j}(\rho)\bigg) = 1$ holds for all $\rho$ and $\pi^i,\pi^j$.

By the equality condition for Cauchy-Schwarz inequality, we know $\text{Tr}\bigg(C_{\pi^i \pi^j}(\rho)\bigg) = 1$ if and only if $C^{\pi^i(1)}_{s_{\pi^i(1)}}\cdots C^{\pi^i(n)}_{s_{\pi^i(n)}} \sqrt{\rho} = C^{\pi^j(1)}_{s_{\pi^j(1)}}\cdots C^{\pi^j(n)}_{s_{\pi^j(n)}} \sqrt{\rho}$ for all $s_1,\dots,s_n$. Specifically, it is enough to check this equality holds for all pure states $|\psi\rangle\langle\psi|$, that is $C^{\pi^i(1)}_{s_{\pi^i(1)}}\cdots C^{\pi^i(n)}_{s_{\pi^i(n)}} |\psi\rangle\langle\psi| = C^{\pi^j(1)}_{s_{\pi^j(1)}}\cdots C^{\pi^j(n)}_{s_{\pi^j(n)}} |\psi\rangle\langle\psi|$, which is equivalent to $C^{\pi^i(1)}_{s_{\pi^i(1)}}\cdots C^{\pi^i(n)}_{s_{\pi^i(n)}} = C^{\pi^j(1)}_{s_{\pi^j(1)}}\cdots C^{\pi^j(n)}_{s_{\pi^j(n)}}$ for all $s_1,\dots,s_n$.

Finally, as $\text{Tr}\bigg(C_{\pi^i \pi^j}(\rho)\bigg) = 1$ holds for all $\pi^i,\pi^j\in \textbf{S}$, it follows that $\mathcal{P}_n = 0$ if and only if  the Kraus operators $\{C_i\}$ are $\textbf{S}$-invariant, as desired.
\end{proof}
In particular for $n=m=2$, $\textbf{S}$-invariance is equivalent to commutativity.

\subsection{$\mathcal{P}_n = 0$ implies $\delta_f = 0$}
In this subsection, we will show that a channel $\mathcal{C}$, satisfying $\mathcal{P}_n = 0$ for a fixed permutation set $\textbf{S}$, can not increase capacity even with the assistance of quantum $\tt SWITCH$.

As shown above, $\mathcal{P}_n = 0$ if and only if $C^{\pi^i(1)}_{s_{\pi^i(1)}}\cdots C^{\pi^i(n)}_{s_{\pi^i(n)}} = C^{\pi^j(1)}_{s_{\pi^j(1)}}\cdots C^{\pi^j(n)}_{s_{\pi^j(n)}}$ for all $s_1,\dots,s_n$. Hence, for a given subset of permutations $\mathbf{S}$, suppose that $n$ copies of channels $\mathcal{C}$ are put into quantum $\tt SWITCH$, using Eq.~\eqref{switchN} in main text, the effective channel is given by
\begin{equation}
	\mathcal{S}^n(\rho) = \sum_{\pi^k,\pi^l\in \mathbf{S}} C_{\pi^k \pi^l}(\rho) \otimes \omega_{k l}|k\rangle\langle l|_C\, ,
\end{equation}
where
$
	C_{\pi^k \pi^l}(\rho) := \sum_{s_1,\dots,s_n} C_{\pi^k(s_1,\dots,s_n)} \rho C_{\pi^l(s_1,\dots,s_n)}^\dagger
$, and $C_{\pi^k(s_1,\dots,s_n)}:= C^{\pi^k(1)}_{s_{\pi^k(1)}}\cdots C^{\pi^k(n)}_{s_{\pi^k(n)}}$. We can find that $\mathcal{P}_n = 0$ implies that $C_{\pi^k \pi^l}(\rho) = C_{\pi^k \pi^k}(\rho) = C_{\pi^l \pi^l}(\rho) = \mathcal{C}(\rho)$ for every pair of permutations $\pi^k, \pi^l \in \textbf{S}$. 

Therefore, $\mathcal{S}^n(\rho) = \mathcal{C}(\rho) \otimes \omega$, where $\omega$ is a control state and is independent to the input state $\rho$. Hence, the capacity of $\mathcal{S}^n$ is always equal to the capacity of $\mathcal{C}$, that is $\delta_f = 0$.

\subsection{The condition $\mathcal{P}_n = 0$ for Pauli channels.}

In this subsection, we study the condition for $\mathcal{P}_n = 0$ in Pauli channels associated with forward and backward orders. For this, suppose the control qubit is $\omega = |+\rangle\langle+|$, so that by Eq.~\eqref{switchN} in the main text, the effective channel $\mathcal{S}^n$ is given by
\begin{equation}\label{nswitch}
	\begin{aligned}
		\mathcal{S}^n(\rho) & = \frac{1}{4} \bigg(\mathcal{S}_+^n(\rho) \otimes |+\rangle\langle+| + \mathcal{S}_-^n(\rho)\otimes |-\rangle\langle-|\bigg) \, ,
	\end{aligned}
\end{equation}
where the maps $\mathcal{S}_+^n$ and $\mathcal{S}_-^n$ are given by
\begin{equation}
	\mathcal{S}_\pm^n(\rho) = \sum_{\vec{i}\in \{0,...,3\}^n} (C_{\vec{i}}\ \pm\ \widetilde{C}_{\vec{i}})\rho(C_{\vec{i}}\ \pm\  \widetilde{C}_{\vec{i}})^\dagger\, .
\end{equation}
Here, $|\pm\rangle = \big(|0\rangle \pm |1\rangle\big)/2$ is a Fourier bases, $\vec{i} = (i_1,\ldots,i_n)$ is the $n$-tuple of indices, $C_{\vec{i}} = C_{i_1}^1\cdots C_{i_n}^n$ and $\widetilde{C}_{\vec{i}} = C_{i_n}^n\cdots C_{i_1}^1$, where $C_{i_j}^j = \sqrt{p_{i_j}}\sigma_{i_j}$ is the $i_j$-th Kraus operator of $\mathcal{N}$.

Now since $C_{\vec{i}}$ is a product of Pauli matrices and error probabilities, it follows that $C_{\vec{i}} = a_{\vec{i}}\, \sigma_{\vec{i}}$ for some scalar $a_{\vec{i}}$ and some Pauli matrix $\sigma_{\vec{i}}$, from which one may show that $\widetilde{C}_{\vec{i}}= C_{\vec{i}}^\dagger$. Furthermore, as $C_{\vec{i}} - \widetilde{C}_{\vec{i}} = 2i\ \text{Im}(a_{\vec{i}}) \sigma_{\vec{i}}$, measuring the the control system with respect to the Fourier basis will yield $|-\rangle$ with probaility $\mathcal{P}_n(\rho)$ given by
\begin{equation}\label{PnPauli}
	\begin{aligned}
		\mathcal{P}_n(\rho) = \frac{1}{4} \sum_{\vec{i}} \text{Tr}\bigg(\rho \lvert C_{\vec{i}} - \widetilde{C_{\vec{i}}}\rvert^2 \bigg) = \sum_{\vec{i}} \lvert \text{Im}( a_{\vec{i}})\rvert^2 = 1 -  \sum_{\vec{i}} \lvert \text{Re}( a_{\vec{i}})\rvert^2,
	\end{aligned}
\end{equation}
where $\lvert A \rvert ^2 = A^\dagger A$ is Hermitian square of $A$. Since Eq.~\eqref{PnPauli} is independent to the initial state $\rho$, this probability is always equal to $\mathcal{P}_n$. Furthermore, as there always exist some $n$-tuple of indices $\vec{i}$ such that $\text{Re}( a_{\vec{i}}) > 0$, we can know $\mathcal{P}_n < 1$. A more in-depth analysis might give a tighter upper bound for $\mathcal{P}_n$, but this is beyond the scope of this paper, so we ignore it.

Since $\mathcal{P}_n = 0$ if and only if $\text{Im}(a_{\vec{i}}) = 0$ for every $\vec{i}$, we now consider two cases:
\begin{itemize}
	\item[(1)] \underline{$n$ is even}: If there exists two non-commutative Kraus operators of the channel $\mathcal{N}$, for example if $C_1 = \sqrt{p_1} \sigma_1$ and $C_2 = \sqrt{p_2} \sigma_2$, consider the product 
	$$C_{\vec{i}} = C_1^1\cdots C_{1}^{n-1} C_2^n = \sqrt{p_1^{n-1}p_2}\sigma_1^{n-1}\sigma_2 = i\sqrt{p_1^{n-1}p_2} \sigma_3\, , $$ 
	i.e., $\text{Im}(a_{\vec{i}}) =  \sqrt{p_1^{n-1}p_2} >0$, thus $\mathcal{P}_n > 0$. In such a case the Kraus operators of $\mathcal{N}$ are all commutative, so that $\mathcal{N}$ is either a unitary Pauli channel or the Kraus operators of $\mathcal{N}$ are $\sqrt{p_0} I$ and $\sqrt{p_i} \sigma_i$. 
	\item[(2)] \underline{$n$ is odd}: We claim that the number of Kraus operators of $\mathcal{N}$ is less than 3. If there exist at least 3 Kraus operators of $\mathcal{N}$, for example $C_0 = \sqrt{p_0} I$, $C_1 = \sqrt{p_1} \sigma_1$ and $C_2 = \sqrt{p_2} \sigma_2$, then consider the product
	$$C_{\vec{i}} = C_0^1\cdots C_{0}^{n-2}C_1^{n-1} C_2^n = \sqrt{p_0^{n-2}p_1p_2}\sigma_1\sigma_2 = i\sqrt{p_0^{n-2}p_1p_2} \sigma_3\, , $$ 
	hence, $\mathcal{P}_n > 0$.
	
	As such, the number of Kraus operators of $\mathcal{N}$ is not more than 2. If $\mathcal{N}$ is a unitary Pauli channel or its the Kraus operators are all commutative, $\mathcal{P}_n = 0$ is clear. On the other hand, if $\mathcal{N}$ has two non-commutative Kraus operators, for example $C_1 = \sqrt{p_1} \sigma_1$ and $C_2 = \sqrt{p_2} \sigma_2$, as $C_1$ and $C_2$ are anti-commutative, for a fixed index vector $\vec{i}$, we have
	$$C_{\vec{i}} = C_{i_1}^1\cdots C_{i_{n-1}}^{n-1} C_{i_n}^n = (-1)^{s} \sqrt{p_1^{n-r}p_2^r}\sigma_1^{n-r}\sigma_2^r\, , $$ 
	here $s$ depends on the index vector. Since $n$ is odd, we know that $\sigma_1^{n-r}\sigma_2^r$ is $\sigma_1$ if $r$ is even, otherwise it equals $\sigma_2$ when $r$ is odd. Therefore, $|\text{Im}(a_{\vec{i}})|^2 = 0$ for every index vector $\vec{i}$, that is $\mathcal{P}_n = 0$.
\end{itemize}
\begin{figure}
	\includegraphics[width=0.7\textwidth]{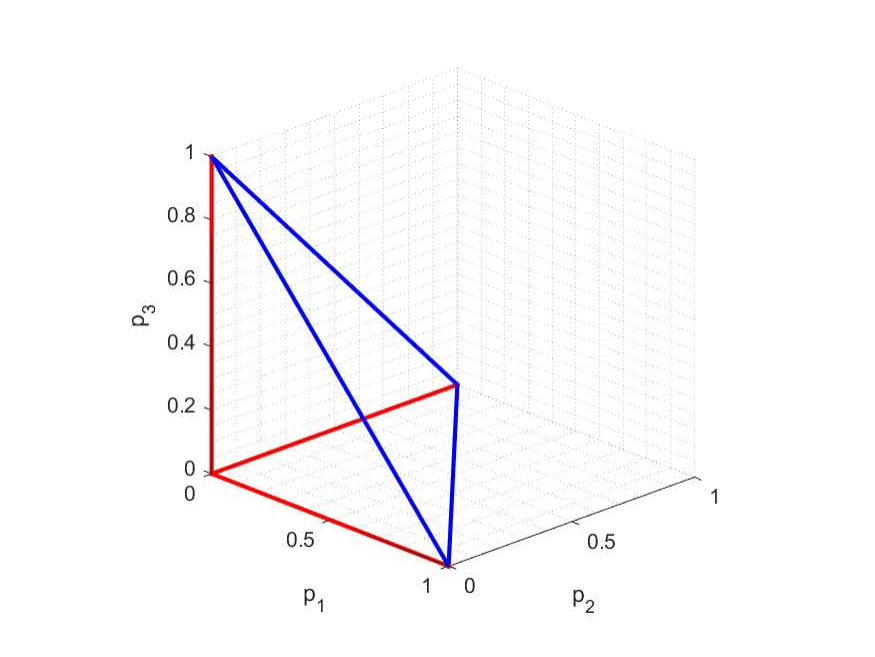}
	\caption{{\label{Fig4}} The set $\mathcal{V}$ consists of all probability vectors of Pauli channels. Three axes represent different noise error probability. Since $\sum_{i=0}^3 p_i = 1$, we can deduce that $\mathcal{V}$ is a tetrahedron. As discussion below, red edges represent the channel with commutative Kraus operators and blue edges indicate Pauli channels contained only two non-commutative Kraus operators. Let $\mathcal{U}$ be the set of Pauli channels satisfying $\mathcal{P}_n=0$. If $n$ is even, these three red edges are the set $\mathcal{U}$; if $n$ is odd, $\mathcal{U}$ is all these six edges. }
\end{figure}

Altogether, we have following:
\begin{prop}\label{PauliPn}
	The quantity $\mathcal{P}_n$ is zero if and only if:
	\begin{itemize}
		\item[(i)] $n$ is even and the Kraus operators of $\mathcal{N}$ are commutative.
		\item[(ii)] $n$ is odd and the number of Kraus operators of $\mathcal{N}$ is not more than 2.
	\end{itemize}
\end{prop}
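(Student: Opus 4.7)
The plan is to build directly on the explicit formula $\mathcal{P}_n = \sum_{\vec{i}} |\text{Im}(a_{\vec{i}})|^2$ derived immediately above, which reduces the proposition to the following purely algebraic question: given the Kraus support of $\mathcal{N}$ (the set of indices $i \in \{0,1,2,3\}$ with $p_i > 0$), for which pairs $(n,\text{support})$ does every ordered product $\sigma_{i_1}\sigma_{i_2}\cdots\sigma_{i_n}$ collapse to a \emph{real} multiple of a Pauli matrix? The invariant I want to track is the parity of anti-commuting adjacencies encountered when reducing the product, since this parity determines the power of $i$ appearing in the resulting scalar coefficient $a_{\vec{i}}$.

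For sufficiency in case (i), I would observe that pairwise-commuting Pauli Kraus operators can involve at most one non-identity Pauli $\sigma_k$, so every product reduces to $I$ or $\sigma_k$ via $\sigma_k^2 = I$, forcing $a_{\vec{i}} \in \mathbb{R}$. For sufficiency in case (ii) with two anti-commuting Paulis $\sigma_a, \sigma_b$ (the only non-trivial sub-case), a short induction using $\sigma_a\sigma_b = -\sigma_b\sigma_a$ shows that any product with $k_a$ factors of $\sigma_a$ and $k_b = n - k_a$ factors of $\sigma_b$ reduces to $\pm\,\sigma_a^{k_a \bmod 2}\sigma_b^{k_b \bmod 2}$, whose coefficient is imaginary precisely when both $k_a$ and $k_b$ are odd. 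That parity configuration forces $k_a + k_b$ to be even, contradicting $n$ odd, so no imaginary coefficient can arise and $\mathcal{P}_n = 0$.

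For necessity I would argue the contrapositive in each parity, exhibiting a single index vector with $\text{Im}(a_{\vec{i}}) \neq 0$, as anticipated by the examples in the discussion just above the proposition. If $n$ is even and two non-commuting Kraus operators $\sqrt{p_a}\sigma_a, \sqrt{p_b}\sigma_b$ are both present, the tuple $\vec{i} = (a,\ldots,a,b)$ gives $\sigma_a^{n-1}\sigma_b = \pm i\,\sigma_c$ because $n-1$ is odd and $\sigma_a\sigma_b = \pm i\,\sigma_c$ for the third Pauli index $c$. If $n$ is odd and $\mathcal{N}$ has at least three Kraus operators, the support includes two distinct non-identity Paulis $\sigma_a,\sigma_b$, and the tuple consisting of $n-2$ identities together with a single $\sigma_a$ and a single $\sigma_b$ yields $\sigma_a\sigma_b = \pm i\,\sigma_c$; either way the imaginary part is nonzero, so $\mathcal{P}_n > 0$.

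The main obstacle is the careful sign and phase bookkeeping in the sufficiency argument for (ii), where one needs to justify rigorously that the only mechanism producing an $i$-phase in the reduced product is an odd-odd parity of the two non-identity Pauli counts; every other step is a transparent consequence of the formula for $\mathcal{P}_n$ and the Pauli multiplication table.
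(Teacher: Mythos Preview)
Your approach mirrors the paper's proof almost exactly: both start from the formula $\mathcal{P}_n=\sum_{\vec{i}}|\mathrm{Im}(a_{\vec{i}})|^2$, argue sufficiency by parity analysis of Pauli-word reductions, and argue necessity by exhibiting a single tuple $\vec{i}$ with imaginary coefficient. The one case your necessity argument for odd $n$ does not cover (and which the paper's illustrative example also elides) is support $\{1,2,3\}$ with $p_0=0$, where no identity factors are available for your ``$n-2$ identities'' tuple; the fix is immediate, e.g.\ take $\vec{i}=(1,\dots,1,2,3)$ so that $\sigma_1^{\,n-2}\sigma_2\sigma_3=\sigma_1\sigma_2\sigma_3=iI$.
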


In both cases, the number of Kraus operators of $\mathcal{N}$ is not more than 2. Such Pauli channels are known either to be degradable or anti-degradable~\cite{degradable,Structure}. If $\mathcal{N}$ is anti-degradable, then $\mathcal{N}^n$ also is, hence its quantum capacity is zero. However, if $\mathcal{N}$ is degradable, $\mathcal{N}^n$ may not be degradable, and its quantum capacity is difficult to determine. On the other hand, although the effective channel $\mathcal{S}^n$ can be regarded as flagged extension of $\mathcal{N}^n$, it is not easy to determine whether $\mathcal{S}^n$ is degradable, which means that evaluating the quantum capacity of $\mathcal{S}^n$ is also difficult.

Let $\mathcal{V} = \{\vec{p}=(p_0,\dots,p_3)| \sum_i p_i =1, p_i\geq 0\}$ be the simplex of all probability vectors of Pauli channels, and let $\mathcal{U}\subset \mathcal{V}$ be the locus satisfying $\mathcal{P}_n = 0$. We show these two sets in Fig.~\ref{Fig4}, the former set $\mathcal{V}$ is the tetrahedron and the latter set $\mathcal{U}$ is all six edges of this tetrahedron. 

We can choose a probability measure $dm$ satisfying $m(\mathcal{V}) = 1$. For example, since this tetrahedron's volume, i.e., its Lebesgue measure $dL$ is 1/6, we can set the probability measure as $dm = 6 dL$, and importantly, $\mathcal{U}$ is a measure-zero set. Hence, the condition $\mathcal{P}_n > 0$ is satisfied almost surely for all $n$.

\section{Calculation of the effective channel}
Although for generic channels it is difficult to give an exact expression of the output channel of quantum $\tt SWITCH$, in the case Pauli channels with forward and backward orders an explicit expression for the effective channel $\mathcal{S}^n$ may be obtained, as we show in this section. Moreover, an important property of $\mathcal{S}^n$ will be provided, which is key to Theorem~\ref{DeltaCX}.

\subsection{An explicit expression for $\mathcal{S}^n(\rho)$}

Let $\mathcal{N}(\rho) = \sum_{i=0}^{3} p_i \sigma_i \rho \sigma_i$ be a Pauli channel. Suppose that there are $n$ copies of $\mathcal{N}$, where $\{C^k_{i_k}\}_{i_k = 0}^3$ is a collection of Kraus operators for the $k$-th channel $\mathcal{N}$, that is $C_{i_k}^k = \sqrt{p_{i_k}}\sigma_{i_k}$, and let $\mathcal{S}^n$ be the effective channel associated with the quantum ${\tt SWITCH}$ of the forward and backward orders, so that
\begin{equation}\label{SS}
	\mathcal{S}^n(\rho) = \frac{1}{4}\bigg(\mathcal{S}_+^n(\rho) \otimes |+\rangle\langle+| + \mathcal{S}_-^n(\rho)\otimes |-\rangle\langle-|\bigg)\, ,
\end{equation}
where
\begin{eqnarray}
	\mathcal{S}_+^n(\rho) = \sum_{i_1\cdots i_n} (C_{i_1}^1\cdots C_{i_n}^n + C_{i_n}^n\cdots C_{i_1}^1)\rho(C_{i_1}^1\cdots C_{i_n}^n + C_{i_n}^n\cdots C_{i_1}^1)^\dagger, \label{S_+} \\
	\mathcal{S}_-^n(\rho) = \sum_{i_1\cdots i_n} (C_{i_1}^1\cdots C_{i_n}^n - C_{i_n}^n\cdots C_{i_1}^1)\rho(C_{i_1}^1\cdots C_{i_n}^n - C_{i_n}^n\cdots C_{i_1}^1)^\dagger\, .
\end{eqnarray}

It is troublesome to evaluate the expression $\mathcal{S}_+^n(\rho)$ and $\mathcal{S}_-^n(\rho)$ directly, since there are $4^n$ terms about the product $C_{i_1}^1\cdots C_{i_n}^n$ and $C_{i_n}^n\cdots C_{i_1}^1$. However, since the product $C_{i_n}^n\cdots C_{i_1}^1$ is just the Hermitian conjugate of the product $C_{i_1}^1\cdots C_{i_n}^n$, and for $i\neq j$ and $i,j \in \{1,2,3\}$, Pauli operators anti-commute, i.e., $\{\sigma_i, \sigma_j\} = 0$, it follows that the product $C_{i_1}^1\cdots C_{i_n}^n = t \sqrt{p_{i_1}\cdots p_{i_n}}\sigma_1^{r_1} \sigma_2^{r_2} \sigma_3^{r_3}$, where $t = \pm 1$, $r_s$ is the number of factors of $\sqrt{p_s}\sigma_s$ appearing in the product $C_{i_1}^1\cdots C_{i_n}^n$ and $r_1 + r_2 + r_3 \leq n$. 

In light of this observation, we classify all the products $C_{i_1}^1\cdots C_{i_n}^n$ according to the number of factors of $\sqrt{p_0}\sigma_0$ they contain, which is equivalent to the number of zeros in the $n$-tuple of indices $(i_1,\cdots,i_n)$, where $i_k \in \{0,1,2,3\}$.

Suppose that there exist $n-k$ such Kraus operators  $\sqrt{p_0}\sigma_0$ in the product $C_{i_1}^1\cdots C_{i_n}^n$, so that the rest of the Kraus operators belong to the set $\{ \sqrt{p_1}\sigma_1,\sqrt{p_2}\sigma_2,\sqrt{p_3}\sigma_3\}$. Moreover, suppose the positions at which these $\sqrt{p_0}\sigma_0$ are located are fixed (there are $\binom{n}{k}$ such cases), and define $T_k$ to be the set containing all $n$-tuples of indices $\vec{i} = (i_1, \cdots, i_n)$ such that the positions where the factors of $\sqrt{p_0}\sigma_0$ are located are the same. For example, if $n=6$ and $k=4$, then we have $\binom{6}{4} = 15$ different possibilities for where $\sqrt{p_0}\sigma_0$ are located, and for example one of these cases is when the Kraus operators of first and fifth channels $\mathcal{N}$ are chosen as $\sqrt{p_0}\sigma_0$, and in such a case we have $T_2 = \{(0,i_2,i_3,i_4,0,i_6) \,|\, i_2,i_3,i_4,i_6 = 1,2,3\}$. 

Now let $C_{\vec{i}} := C_{i_1}^1\cdots C_{i_n}^n$ and $\widetilde{C_{\vec{i}}} := C_{i_n}^n\cdots C_{i_1}^1$. Since $C_{\vec{i}} \widetilde{C_{\vec{i}}} = p_{i_1}p_{i_2}\cdots p_{i_n} I$, we have
\begin{equation*}
\widetilde{C_{\vec{i}}} = C_{\vec{i}}^\dagger = t \sqrt{p_{i_1}\cdots p_{i_n}} \sigma_3^{r_3} \sigma_2^{r_2} \sigma_1^{r_1} = (-1)^{r_1r_2+r_1r_3+r_2r_3}t \sqrt{p_{i_1}\cdots p_{i_n}} \sigma_1^{r_1} \sigma_2^{r_2} \sigma_3^{r_3},
\end{equation*}
from which it follows that 
\begin{equation*}
	C_{\vec{i}} + \widetilde{C_{\vec{i}}}  =  \bigg(1+(-1)^{r_1r_2+r_1r_3+r_2r_3}\bigg) t \sqrt{p_{i_1}\cdots p_{i_n}}  \sigma_1^{r_1} \sigma_2^{r_2} \sigma_3^{r_3}\, .
\end{equation*}

Hence, as discussed above, we can compute the sum of these terms whose $n$-tuple of indices $\vec{i} = (i_1,\cdots,i_n)$ belong to $T_k$, which is:
\begin{equation}\label{CK+}
	C_+^k(\rho) := \sum_{\vec{i}\ \in\ T_k} (C_{\vec{i}} + \widetilde{C_{\vec{i}}})\rho (C_{\vec{i}} + \widetilde{C_{\vec{i}}})^\dagger =  \left\{
	\begin{aligned}
	&	4\bigg[\sum_{\substack{r_1,r_2,r_3\text{are even}\\ r_1+r_2+r_3 = k}} \frac{k!}{r_1!r_2!r_3!} p_1^{r_1}p_2^{r_2}p_3^{r_3}\bigg] \rho =: 4d_0^k \rho, & \text{if $k$ is even} \\
	&	4\sum_{i=1}^3 \bigg[\sum_{\substack{r_i,\text{is odd}\\ \text{others are even}, \\ r_1+r_2+r_3 = k}} \frac{k!}{r_1!r_2!r_3!} p_1^{r_1}p_2^{r_2}p_3^{r_3}\bigg] \sigma_i\rho\sigma_i =: 4\sum_{i=1}^{3} d_i^k \sigma_i\rho\sigma_i, & \text{if $k$ is odd}\, .
	\end{aligned}
	\right.
\end{equation}
It is clear that such $T_k$ depends on the position where $\sqrt{p_0}\sigma_0$ located. However, for a fixed $k$, these products $C_{i_1}^1\cdots C_{i_n}^n$ are only potentially different up to sign, and this difference disappears when the sum is computed, because $$(C_{\vec{i}} + \widetilde{C_{\vec{i}}})\rho (C_{\vec{i}} + \widetilde{C_{\vec{i}}})^\dagger = \bigg(1+(-1)^{r_1r_2+r_1r_3+r_2r_3}\bigg)^2 p_{i_1}\cdots p_{i_n} \sigma_1^{r_1} \sigma_2^{r_2} \sigma_3^{r_3} \rho \bigg(\sigma_1^{r_1} \sigma_2^{r_2} \sigma_3^{r_3}\bigg)^\dagger,$$ which is independent of the sign of $C_{\vec{i}} + \widetilde{C_{\vec{i}}}$. 
Since for a fixed $k$ there are $\binom{n}{k}$ such cases for the positions of $\sqrt{p_0}\sigma_0$, and they all lead a same sum $C_+^k(\rho)$, finally, we have
\begin{equation}\label{SN+}
	\mathcal{S}_+^n(\rho) = \sum_{k=0}^n p_0^{n-k} \binom{n}{k}  C_+^k(\rho)\, .
\end{equation}
Furthermore, we can also calculate a similar expression for $\mathcal{S}_-^n(\rho)$:

\begin{equation}\label{SN-}
	 \mathcal{S}_-^n(\rho) = \sum_{k=2}^n p_0^{n-k} \binom{n}{k}  C_-^k(\rho)\, , 
\end{equation}
where 
\begin{equation}\label{CK-}
	C_-^k(\rho) =  \left\{
	\begin{aligned}
		& 4\sum_{i=1}^3 \bigg[\sum_{\substack{r_i\text{is even}\\ \text{others are odd}\\ r_1+r_2+r_3 = k}} \frac{k!}{r_1!r_2!r_3!} p_1^{r_1}p_2^{r_2}p_3^{r_3}\bigg] \sigma_i\rho\sigma_i =:4\sum_{i=1}^{3} e_i^k \sigma_i\rho\sigma_i, & \text{if $k$ is even}, \\
		& 4\bigg[\sum_{\substack{r_1,r_2,r_3 \text{are odd}\\ r_1+r_2+r_3 = k}} \frac{k!}{r_1!r_2!r_3!} p_1^{r_1}p_2^{r_2}p_3^{r_3}\bigg] \rho =: 4e_0^k \rho, & \text{if $k$ is odd}\, .
	\end{aligned}
	\right.
\end{equation}
Here $d_i^k$ and $e_i^k$ denote the coefficients of $\sigma_i$ for convenience, if a Pauli operator $\sigma_i$ disappears in $C_+^k(\rho)$ or $C_-^k(\rho)$, set $d_i^k = 0$ or $e_i^k = 0$ respectively.

\subsection{Properties of $\mathcal{S}_+^n$ and $\mathcal{S}_-^n$}

In this subsection, we will prove an important property of $\mathcal{S}_+^n$ and $\mathcal{S}_-^n$ which is crucial to our proof of Theorem~\ref{DeltaCX}. By the expressions Eq.~\eqref{SN+} and Eq.~\eqref{SN-}, let $S^n_+(\rho) = 4\sum_{i=0}^{3} s^n_i \sigma_i\rho\sigma_i$ and $S^n_-(\rho) = 4\sum_{i=0}^{3} t^n_i \sigma_i\rho\sigma_i$. We can view  $s_i^n$ and $t_i^n$ as polynomials about $\vec{p}$, where $\vec{p} = (p_0,p_1,p_2,p_3)$ is the probability vector of Pauli channel $\mathcal{N}$. Moreove, these coefficients satisfy the following properties:
\begin{prop} \label{prop1}
	For the coefficient of $\mathcal{S}^n_+$:
	\begin{itemize}
		\item[$\bullet$] If $n$ is even, then 
		\begin{itemize}
			\item[(1)] 	$	s_0^n - s_i^n \geq 0$ for $i\in\{1,2,3\},$ and
			equality holds if and only if $p_0 = p_i = 1/2$, and other probability values are zero.
			\item[(2)] for distinct $j,k\in\{1,2,3\}$, $p_0>0$ implies $s_j^n - s_k^n = (p_j-p_k)f_1^{jk}(\vec{p})$, where $f_1^{jk}(\vec{p})$ depends on $j,k$ and is always positive. If $p_0 = 0$, then $s_j^n = s_k^n = 0$.
		\end{itemize}
		\item[$\bullet$] If $n$ is odd, then
		\begin{itemize}
			\item[(3)] $s_0^n - s_i^n = (p_0-p_i)f_2^{i}(\vec{p})$ for $i\in\{1,2,3\}$ and $f_2^{i}(\vec{p})$ depends on $i$ and is always positive.
			\item[(4)]  $s_j^n - s_k^n = (p_j-p_k)f_3^{jk}(\vec{p})$ for $j,k\in\{1,2,3\}$ and $f_3^{jk}(\vec{p})$ depends on $j,k$ and is always positive.
		\end{itemize}
	\end{itemize}
\end{prop}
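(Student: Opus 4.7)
The plan is to reduce all four claims to manifest sign-properties of $n$-th power sums of a small set of real numbers built from $\vec{p}$. First, I replace the parity indicators appearing in $d_0^k$ and $d_i^k$ using $[r\text{ even}]=\tfrac{1+(-1)^r}{2}$ and $[r\text{ odd}]=\tfrac{1-(-1)^r}{2}$. Combined with the factor $p_0^{n-k}\binom{n}{k}$ and the multinomial theorem, the summation over $k$ collapses into the character-sum representation
\begin{equation*}
s_j^n \;=\; \frac{1}{8}\sum_{\epsilon\in\{\pm 1\}^3}\epsilon_j\,A(\epsilon)^n,\qquad j=0,1,2,3,
\end{equation*}
where I set $\epsilon_0:=+1$ and $A(\epsilon):=p_0+\epsilon_1 p_1+\epsilon_2 p_2+\epsilon_3 p_3$. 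The eight values of $A(\epsilon)$ are precisely the Pauli eigenvalues $\lambda_0=1,\lambda_1,\lambda_2,\lambda_3$ of $\mathcal{N}$ together with the auxiliary quantities $2p_0-1$ and $1-2p_i$, $i=1,2,3$. The target differences become short signed sums: $s_0^n-s_i^n=\tfrac{1}{4}\sum_{\epsilon:\epsilon_i=-1}A(\epsilon)^n$ (four terms), and $s_j^n-s_k^n$ pairs up into two differences of the form $A(+,\cdots)^n-A(-,\cdots)^n$.

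For claim (1) with $n$ even, each of the four summands $(1-2p_0)^n$, $(1-2p_i)^n$, $\lambda_j^n$, $\lambda_k^n$ (where $\{j,k\}=\{1,2,3\}\setminus\{i\}$) is an even $n$-th power and hence non-negative, so $s_0^n-s_i^n\geq 0$; equality forces all four to vanish, which together with $\sum_a p_a=1$ pins down $p_0=p_i=\tfrac12$, $p_j=p_k=0$. For claim (2) I introduce the auxiliary variables $\alpha:=p_0-p_3$, $\beta:=p_0+p_3$, $\gamma:=p_1-p_2$ (for the pair $(j,k)=(1,2)$; analogous substitutions handle $(1,3)$ and $(2,3)$ by permutation symmetry). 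These diagonalize the situation: $\lambda_1=\alpha+\gamma$, $\lambda_2=\alpha-\gamma$, $1-2p_1=\beta-\gamma$, $1-2p_2=\beta+\gamma$, and crucially $\beta+\alpha=2p_0$, $\beta-\alpha=2p_3$, so $\beta\geq|\alpha|$. Binomial expansion of $(\beta\pm\gamma)^n$ and $(\alpha\pm\gamma)^n$ leaves only odd powers of $\gamma$ in the differences, extracting $\gamma=p_j-p_k$ as a common factor, and what remains is
\begin{equation*}
f_1^{jk}=\tfrac{1}{2}\sum_{l=0}^{n/2-1}\binom{n}{2l+1}\gamma^{2l}\bigl[\beta^{n-2l-1}+\alpha^{n-2l-1}\bigr].
\end{equation*}
Each exponent $n-2l-1$ is odd, so the bracket factors as $(\beta+\alpha)\,Q_l=2p_0\cdot Q_l$ where $Q_l\geq 0$ by monotonicity of $x\mapsto x^m$ for odd $m$ together with $\beta\geq|\alpha|$; the $l=0$ term contributes $n\beta^{n-1}>0$ to $Q_0$ whenever $p_0>0$, so $f_1^{jk}>0$ on $\{p_0>0\}$. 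The boundary identity $s_j^n=s_k^n=0$ at $p_0=0$ follows from the parity definition: for even $n$, no product of $n$ non-identity Paulis can have exactly one odd parity count $r_s$, since then the total $r_1+r_2+r_3$ would be odd rather than equal to $n$.

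Claims (3) and (4), with $n$ odd, are handled in the same framework but exploit the stronger fact that $x\mapsto x^n$ is strictly increasing for odd $n$, so $x^n-y^n$ and $x-y$ share sign while $x^n+y^n$ and $x+y$ share sign. For claim (3), $(2p_0-1)^n-(2p_i-1)^n=2(p_0-p_i)T_1$ and $\lambda_j^n+\lambda_k^n=(\lambda_j+\lambda_k)T_2=2(p_0-p_i)T_2$ with $T_1,T_2\geq 0$, yielding $s_0^n-s_i^n=\tfrac{p_0-p_i}{2}(T_1+T_2)$ and $f_2^i=(T_1+T_2)/2>0$ away from the isolated point $p_0=p_i=\tfrac12$, $p_j=p_k=0$. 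Claim (4) is analogous: after the $\alpha,\beta,\gamma$ substitution, the same odd-power identities yield $s_j^n-s_k^n=\tfrac{p_j-p_k}{2}(T_1'+T_2')$ with $T_1',T_2'\geq 0$. The main technical obstacle is the factorization in claim (2): extracting the factor $p_0$ from a polynomial whose individual terms can have mixed signs ultimately rests on the simple but crucial inequality $\beta\geq|\alpha|$ together with the monotonicity of odd powers.
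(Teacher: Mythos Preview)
Your argument is correct, and it takes a genuinely different route from the paper. The paper works directly with the multinomial definitions of $d_j^k$, viewing $s_0^n-s_1^n$ and $s_3^n-s_2^n$ as polynomials in a subset of the $p_i$, extracting coefficients by brute force, and collapsing them with binomial identities to obtain e.g.\ $s_0^n-s_1^n=\sum_m\binom{n}{2m}(p_0-p_1)^{n-2m}\sum_i\binom{2m}{2i}p_2^{2i}p_3^{2m-2i}$, from which the sign claims are read off. You instead first pass to the Fourier representation $s_j^n=\tfrac18\sum_{\epsilon}\epsilon_j A(\epsilon)^n$ over $\{\pm1\}^3$, which turns all four differences into short signed sums of $n$-th powers of the eight affine forms $A(\epsilon)$; the required sign properties then reduce to the elementary facts that even powers are nonnegative and odd powers are monotone, combined with the linear identities $\beta+\alpha=2p_0$ and $\lambda_j+\lambda_k=2(p_0-p_i)$. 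Expanding your four-term expressions binomially reproduces exactly the paper's formulas, so the two arguments are equivalent at bottom, but yours is more transparent and avoids the explicit coefficient-matching.

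Two small points. First, your sentence about ``the $l=0$ term contributes $n\beta^{n-1}$ to $Q_0$'' is garbled: what you actually need, and what your inequality $\beta\ge|\alpha|$ gives, is that the $l=0$ bracket $\beta^{n-1}+\alpha^{n-1}$ is strictly positive whenever $p_0>0$, hence $f_1^{jk}>0$ there. Second, for odd $n$ both your $f_2^i=(T_1+T_2)/2$ and the paper's analogous cofactor vanish at the isolated point $p_0=p_i=\tfrac12$, $p_j=p_k=0$; since $p_0-p_i=0$ there, this does not affect the factorization and is handled by redefining $f_2^i:=1$ at that point. You flag this explicitly; the paper's proof glosses over it.
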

\begin{proof}
	We first consider the case when $n$ is even. Without loss of generality, we only prove $s_0^n - s_1^n \geq 0$ and $s_3^n - s_2^n = (p_3-p_2)f(\vec{p})$ for $p_0>0$, since if $p_0 = 0$, by Eq.~\eqref{SN+} and \eqref{CK+}, one can see $s_3^n = s_2^n = 0$.

 Since $s_0^n = \sum_{k=0}^{n/2} d_0^{2k}$ and $s_1^n = \sum_{k=0}^{\frac{n}{2}-1} d_0^{2k+1}$ can be viewed as polynomials about $p_2,p_3$, and exponents of $p_2,p_3$ in $d_0^{2k},d_1^{2k+1}$ are all even, so we can consider the coefficient of $p_2^{2i}p_3^{2m-2i}$ in $s_0^n-s_1^n$, which is 
	\begin{equation}
		\begin{aligned}
		& \sum_{l=2m}^n (-1)^l\binom{n}{l}
		\frac{l!}{(l-2m)!(2i)!(2m-2i)!}p_0^{n-l}p_1^{l-2m}\\
		& =\binom{n}{2m}\binom{2m}{2i} \sum_{l=0}^{n-2m} (-1)^l\binom{n-2m}{n-l-2m}p_0^{n-l-2m}p_1^{l}\\
		& = \binom{n}{2m}\binom{2m}{2i} (p_0-p_1)^{n-2m}\geq 0\, ,
		\end{aligned}
	\end{equation}
	where $\binom{n}{k} = \frac{n!}{(n-k)!k!}$ is the binomial coefficient. We then have 
	\begin{equation}
		s_0^n-s_1^n = \sum_{m=0}^{n/2}\binom{n}{2m} (p_0-p_1)^{n-2m}\bigg[\sum_{i=0}^{m}\binom{2m}{2i} p_2^{2i}p_3^{2m-2i}\bigg] \geq 0\, ,
	\end{equation} 
	with equality if and only if $p_0 = p_1$ and $p_2 = p_3 =0$.
	
	On another hand, $s_3^n-s_2^n$ can be viewed as a polynomial about $p_1$ whose exponents are even. Moreover the coefficient of $p_1^{2i}$ is
	\begin{equation}
		\begin{aligned}
			& \sum_{k=i}^{\frac{n-2}{2}} \binom{n}{2k+1}p_0^{n-2k-1} \sum_{j=0}^{k-i} \frac{(2k+1)!}{(2i)!(2j)!(2k+1-2i-2j)!} (p_2^{2j}p_3^{2k+1-2j-2i} - p_3^{2j}p_2^{2k+1-2j-2i}) \\
			& = (p_3-p_2) \sum_{k=i}^{\frac{n-2}{2}} \binom{n}{2k+1}p_0^{n-2k-1} \sum_{j=0}^{k-i} \frac{(2k+1)!}{(2i)!(2j)!(2k+1-2i-2j)!} p_2^{2j}p_3^{2k-2j-2i} \\
			& =:(p_3-p_2) g_{i}(p_0,p_2,p_3)\, ,
		\end{aligned}
	\end{equation} 
	thus $g_{i}$ equals to zero if and only if $p_0 = 0$. Hence, 
	\begin{equation}
		s^n_3-s^n_2 = (p_3-p_2) \sum_{i=0}^{n/2-1} p_1^{2i}g_i \, .
	\end{equation}
	If $p_2 = p_3$, then $s_3^n-s_2^n = 0$ and we can set $f_1^{23}(\vec{p}) \equiv 1$. If $p_2 \neq p_3$, as $g_i$ defined above is always positive, then we can set $f_1^{23}(\vec{p}) = \sum_{i=0}^{n/2-1} p_1^{2i}g_i$, which is also positive.
	
	Now we consider the case when $n$ is odd. Similarly, we only consider $s^n_0 - s^n_1$ and $s^n_3 -s^n_2$. For $s^n_0 -s^n_1$, we also consider the coefficient of $p_2^{2i}p_3^{2m-2i}$, which is equal to $\binom{n}{2m}\binom{2m}{2i} (p_0-p_1)^{n-2m}$, so
	\begin{equation}
		s^n_0 - s^n_1 = \sum_{m=0}^{\frac{n-1}{2}}\binom{n}{2m} (p_0-p_1)^{n-2m}\bigg[\sum_{i=0}^{m}\binom{2m}{2i} p_2^{2i}p_3^{2m-2i}\bigg]\, .
	\end{equation}
	As $n$ is odd, the sign of $s^n_0 - s^n_1$ is same as the sign of $p_0-p_1$, and $s^n_0 - s^n_1 =0$ if and only if $p_0 = p_1$.
	
	For $s^n_3 -s^n_2$, the coefficient of $p_1^{2i}$ is $$(p_3-p_2) \sum_{k=i}^{\frac{n-1}{2}} \binom{n}{2k+1}p_0^{n-2k-1} \sum_{j=0}^{k-i} \frac{(2k+1)!}{(2i)!(2j)!(2k+1-2i-2j)!} p_2^{2j}p_3^{2k-2j-2i},$$
	so that the conclusion follows \emph{mutatis mutandis} as in the case when $n$ is even, thus concluding the proof.
\end{proof}

We now prove a similar result for $\mathcal{S}^{n}_{-}$:
\begin{prop}\label{prop2}
	 For the coefficient of $\mathcal{S}^n_-$:
	\begin{itemize}
		\item[$\bullet$] If $n$ is even, then 
		\begin{itemize}
			\item[(1)] 	$	t_0^n - t_i^n \leq 0$ for $i\in\{1,2,3\}$, and
			equality holds if and only if $p_0 = p_i = 1/2$, and other probability values are zero.
			\item[(2)] for distinct $i,j,k\in\{1,2,3\}$, $p_i >0$ implies $t_j^n - t_k^n = -(p_j-p_k)\widetilde{f}_1^{jk}(\vec{p})$, where $\widetilde{f}_1^{jk}(\vec{p})$ depends on $j,k$ and is always positive; if $p_i = 0$, then $t_j^n = t_k^n = 0$.
		\end{itemize}
		\item[$\bullet$] If $n$ is odd, then
		\begin{itemize}
			\item[(3)] $t_0^n - t_i^n = -(p_0-p_i)\widetilde{f}_2^{i}(\vec{p})$ for $i\in\{1,2,3\}$ and $\widetilde{f}_2^{i}(\vec{p})$ depends on $i$ and is always positive.
			\item[(4)] for distinct $i,j,k\in\{1,2,3\}$, $p_0p_i>0$ implies $t_j^n - t_k^n = -(p_j-p_k)\widetilde{f}_3^{jk}(\vec{p})$, where $\widetilde{f}_3^{jk}(\vec{p})$ depends on $j,k$ and is always positive; if either $p_i=0$ or $p_0 = 0$, then $t_j^n = t_k^n = 0$.
		\end{itemize}
	\end{itemize}
\end{prop}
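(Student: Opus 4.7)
The plan is to mirror the proof of Proposition~\ref{prop1}, exploiting the fact that each $t_i^n$ is a multinomial sum over the \emph{complementary} parity pattern on $(r_1,r_2,r_3)$ relative to the corresponding $s_i^n$: absorbing the prefactor $p_0^{n-k}$ of \eqref{SN-} into the multinomial, $t_0^n$ becomes the sum over all triples with $r_1,r_2,r_3$ odd, while for $i\in\{1,2,3\}$, $t_i^n$ is the sum over those with $r_i$ even and the other two odd. By permuting $\{1,2,3\}$ it suffices to take $i=1$ in parts (1) and (3), and $(j,k)=(2,3)$ (hence $i=1$) in parts (2) and (4).

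For parts (1) and (3), both $t_0^n$ and $t_1^n$ have $r_2,r_3$ odd and differ only in the parity of $r_1$. Fixing $M:=n-r_2-r_3$ and using $\sum_{r_1}(-1)^{r_1}\binom{M}{r_1}p_0^{M-r_1}p_1^{r_1}=(p_0-p_1)^M$ gives
\[
t_0^n-t_1^n=-\sum_{\substack{r_2,r_3\,\mathrm{odd}\\ r_2+r_3\le n}}\frac{n!}{r_2!\,r_3!\,M!}\,(p_0-p_1)^{M}\,p_2^{r_2}p_3^{r_3}.
\]
Because $r_2+r_3$ is even, $M$ has the parity of $n$. When $n$ is even, $(p_0-p_1)^M\ge 0$ term-by-term, so $t_0^n-t_1^n\le 0$, proving (1). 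When $n$ is odd, $(p_0-p_1)^M=(p_0-p_1)(p_0-p_1)^{M-1}$ with $M-1$ even, and factoring out $-(p_0-p_1)$ leaves a polynomial $\widetilde f_2^{1}(\vec p)$ with non-negative coefficients whose leading $(r_2,r_3)=(1,1)$ term gives strict positivity in the interior of the simplex, proving (3).

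For parts (2) and (4), the exchange $p_2\leftrightarrow p_3$ swaps $t_2^n$ and $t_3^n$, so $t_2^n-t_3^n$ is antisymmetric in $(p_2,p_3)$ and divisible by $(p_2-p_3)$. Pairing the contributions to $t_2^n$ indexed by $(r_1,a,b)$ (with $a$ even, $b$ odd) against those to $t_3^n$ indexed by $(r_1,b,a)$ gives
\[
t_2^n-t_3^n=\sum_{\substack{r_1,b\,\mathrm{odd},\,a\,\mathrm{even}\\ r_0+r_1+a+b=n}}\frac{n!}{r_0!\,r_1!\,a!\,b!}\,p_0^{r_0}p_1^{r_1}\bigl(p_2^a p_3^b-p_2^b p_3^a\bigr).
\]
Collecting by the exponent of $p_1$ and applying the geometric-sum rearrangement from the proof of Proposition~\ref{prop1}(2)/(4) to the inner $(a,b)$-sum forces the coefficient of each $p_1^{r_1}$ to factor as $-(p_2-p_3)$ times a polynomial with non-negative coefficients in $p_0,p_2,p_3$. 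Since $r_1\ge 1$ in every surviving term, $t_2^n=t_3^n=0$ whenever $p_1=0$, proving (2); in case (4), the parity constraints together with $n$ odd additionally force $r_0\ge 1$, explaining the extra hypothesis $p_0>0$.

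The main obstacle is the sign analysis of the $(a,b)$-sum in parts (2) and (4): because $a$ is even while $b$ is odd, $p_2^a p_3^b-p_2^b p_3^a$ does not factor uniformly as $(p_2-p_3)$ times a non-negative polynomial, since its sign depends on whether $a<b$ or $a>b$. The resolution is the pairing and geometric-sum identity $p_2^d-p_3^d=(p_2-p_3)\sum_{l=0}^{d-1}p_2^{d-1-l}p_3^{l}$ used \emph{mutatis mutandis} in the proof of Proposition~\ref{prop1}(2)/(4), which ensures that after summation over $(a,b)$ the quotient $\widetilde f_1^{23}$ (respectively $\widetilde f_3^{23}$) has uniformly non-negative coefficients.
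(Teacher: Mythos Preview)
Your overall strategy matches the paper's: reduce by symmetry to $i=1$ and $(j,k)=(2,3)$, collapse the $(p_0,p_1)$ dependence via $\sum_{r}(-1)^r\binom{M}{r}p_0^{M-r}p_1^{r}=(p_0-p_1)^M$, and then analyze the remaining $(p_2,p_3)$ sum. Parts (1) and (3) go through as you describe, and your parity remark that every surviving term carries $r_1\ge1$ (and additionally $r_0\ge1$ when $n$ is odd) correctly yields the vanishing clauses in (2) and (4).

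The genuine gap is your resolution of the sign obstacle in (2)/(4). Factoring each individual $p_2^ap_3^b-p_2^bp_3^a$ via the geometric series produces $(p_2-p_3)$ times a polynomial whose sign flips with $a-b$, exactly the difficulty you flag; simply invoking the identity ``\emph{mutatis mutandis}'' does not remove it, and the quotient $\widetilde f_1^{23}$ does \emph{not} have uniformly non-negative coefficients as you assert. What actually works is to evaluate the inner $(a,b)$-block in closed form: for fixed $r_0,r_1$ and $K:=n-r_0-r_1$ (odd),
\[
\sum_{\substack{a\ \mathrm{even}\\ a+b=K}}\binom{K}{a}\bigl(p_2^ap_3^b-p_2^bp_3^a\bigr)
=\tfrac12\bigl[(p_2{+}p_3)^K+(p_3{-}p_2)^K\bigr]-\tfrac12\bigl[(p_2{+}p_3)^K-(p_3{-}p_2)^K\bigr]
=(p_3-p_2)^K,
\]
so the contribution is $-(p_2-p_3)\cdot(p_2-p_3)^{K-1}$ with $K-1$ even, hence the quotient is non-negative as a \emph{function} on the simplex (though its coefficients alternate). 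This one-line binomial evaluation is the missing step that makes the sign analysis rigorous; note that the paper's own displayed quotient in the parallel step of Proposition~\ref{prop1}(2) is not literally correct either, so leaning on it verbatim is unsafe.
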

The proof is similar to Proposition~\ref{prop1} so we provide a sketch as follows:
\begin{proof}
Also, we just prove the cases $t_0^n-t_1^n$ and $t_3^n-t_2^n$.

If $n$ is even, one can see $$t_0^n-t_1^n = -\sum_{m=1}^{n/2}\binom{n}{2m} (p_0-p_1)^{n-2m}\bigg[\sum_{i=0}^{m-1}\binom{2m}{2i+1} p_2^{2i+1}p_3^{2m-2i-1}\bigg] \leq 0\, , $$
and if $p_1 = 0$, it is clear to see $t_3^n=t_2^n=0$ by Eq.~\eqref{SN-} and \eqref{CK-}. If $p_1>0$, we have
$$t^n_3-t^n_2 = (p_2-p_3) \sum_{i=0}^{n/2-1}p_1^{2i+1}\sum_{k=i+1}^{\frac{n}{2}} \binom{n}{2k}p_0^{n-2k} \sum_{j=0}^{k-i-1} \frac{(2k)!}{(2i+1)!(2j)!(2k-1-2i-2j)!} p_2^{2j}p_3^{2k-2j-2i-2}. $$
The proof is then completed by the same argument as Proposition~\ref{prop1}.

If $n$ is odd, we have $$t_0^n-t_1^n = -\sum_{m=1}^{\frac{n-1}{2}}\binom{n}{2m} (p_0-p_1)^{n-2m}\bigg[\sum_{i=0}^{m-1}\binom{2m}{2i+1} p_2^{2i+1}p_3^{2m-2i-1}\bigg]\, , $$ and 
$$t^n_3-t^n_2 = (p_2-p_3) \sum_{i=0}^{\frac{n-3}{2}}p_1^{2i+1}\sum_{k=i+1}^{\frac{n-1}{2}} \binom{n}{2k}p_0^{n-2k} \sum_{j=0}^{k-i-1} \frac{(2k)!}{(2i+1)!(2j)!(2k-1-2i-2j)!} p_2^{2j}p_3^{2k-2j-2i-2}\, , $$
from which we obtain the result.
\end{proof}

\section{Classical capacity of $\mathcal{S}^n$ produced by Pauli channels with forward and backward orders }

Let $\mathcal{N}(\rho) = \sum_{i=0}^{3} p_i \sigma_i \rho \sigma_i$ be a Pauli channel, and $\mathcal{N}^n(\rho) = \sum_{i=0}^{3} q_i \sigma_i \rho \sigma_i$ be the $n$-fold composition $\mathcal{N}\circ\cdots\circ\mathcal{N}$. Suppose that the $n$ copies of $\mathcal{N}$ are put into the quantum $\tt SWITCH$ in a superposition of forward ($\mathcal{N}_n \circ\cdots\circ\mathcal{N}_1$) and backward ($\mathcal{N}_1 \circ\cdots\circ\mathcal{N}_n$) orders, where $\mathcal{N}_i = \mathcal{N}$ for all $i$, and the control state is $\omega = |+\rangle\langle+|$. So by Eq.~\eqref{switchN} in the main text, the effective channel is 
\begin{equation}
	\begin{aligned}
		\mathcal{S}^n(\rho) & = \frac{1}{4} \bigg(\mathcal{S}_+^n(\rho) \otimes |+\rangle\langle+| + \mathcal{S}_-^n(\rho)\otimes |-\rangle\langle-|\bigg) \, .
	\end{aligned}
\end{equation}
As computed in Eq.~(\ref{SN+}), (\ref{SN-}), the two maps $\mathcal{S}_+^n(\rho)$ and $\mathcal{S}_-^n(\rho)$ are Pauli maps in the sense that $S^n_+(\rho) = 4\sum_{i=0}^{3} s^n_i \sigma_i\rho\sigma_i$ and $S^n_-(\rho) = 4\sum_{i=0}^{3} t^n_i \sigma_i\rho\sigma_i$, where $s^n_i$ and $t^n_i$ are non-negative numbers for all $i$. Moreover, by Eq.~\eqref{PnPauli} in Appendix A we have 
\begin{equation}
    \mathcal{P}_n(\rho) = \frac{1}{4}\text{Tr}(\mathcal{S}^n_-(\rho)) = \sum_{i=0}^3 t_i^n\, ,
\end{equation}
which is independent of the input state $\rho$, and as such, is equal to the quantity $\mathcal{P}_n$ as defined in Eq.~\eqref{generalPn} of main text. Therefore, we can rewrite the effective channel $\mathcal{S}^n$ as
\begin{equation}\label{nSwitch}
	\mathcal{S}^n(\rho) = \big(1-\mathcal{P}_n\big) \Phi_+(\rho)  \otimes |+\rangle\langle+| \ + \  \mathcal{P}_n\Phi_-(\rho) \otimes |-\rangle\langle-|\, .
\end{equation}
If $\mathcal{P}_n = 0$, we can see $\mathcal{S}^n(\rho) = \mathcal{N}^n(\rho) \otimes |+\rangle\langle+|$. Since the control state $|+\rangle\langle+|$ is independent of initial state $\rho$, the classical capacity of  $\mathcal{S}^n$ is always equal to that of $\mathcal{N}^n$. Hence, in this section, unless stated otherwise, we always assume that $\mathcal{P}_n > 0$ ($\mathcal{P}_n < 1$ always holds, see discussion following Eq.~\eqref{PnPauli}).

\subsection{An upper bound for classical capacity of $\mathcal{S}^n$}
In this subsection, we give an upper bound for classical capacity of $\mathcal{S}^n$ following the proof of section VIII in Supplemental Material of~\cite{QSWofCDC}. According to the Holevo-Schumacher-Westmoreland theorem~\cite{HSW1,HSW2}, the classical capacity of a generic quantum channel $\mathcal{M}$ is
$$
C(\mathcal{M})=\liminf _{m \rightarrow \infty} \frac{\chi\left(\mathcal{M}^{\otimes m}\right)}{m},
$$
where $\chi(\mathcal{M})$ is the Holevo information of $\mathcal{M}$, which is defined as 
\[
\chi(\mathcal{M}):=\sup _{\{p_x,\rho_x\}} S\left(\mathcal{M}(\sum_x p_x \rho_x)\right)-\sum_x p_x S\left(\mathcal{M}(\rho_x)\right)\, ,
\]
with the maximum being over all ensembles $\{p_x,\rho_x\}$, where $\rho_x$ is a quantum state and $p_x$ is a probability.

We now consider the effective channel $\mathcal{S}^n$, which has the form 
$$\mathcal{S}^n(\rho)=\big(1-\mathcal{P}_n\big) \Phi_+(\rho)  \otimes |+\rangle\langle+| \ + \  \mathcal{P}_n\Phi_-(\rho) \otimes |-\rangle\langle-|\, , $$ 
where $\Phi_+(\rho) = \sum_{i=0}^{3} s_i \sigma_i \rho \sigma_i$ and $\Phi_-(\rho) = \sum_{i=0}^n t_i\sigma_i \rho \sigma_i$ are two Pauli channels.
It then follows that $m$-fold tensor product $(\mathcal{S}^n)^{\otimes m}$ has the form
$$
(\mathcal{S}^n)^{\otimes m}=\sum_j p_j^{(m)} \mathcal{E}_j^{(m)}\otimes \rho_j^{(m)},
$$
where for each $j$, $\mathcal{E}_j^{(m)}$ is the tensor product of $k$ copies of channel $\Phi_+$ and $(m-k)$ copies of the channel $\Phi_-$ for some $k \in\{0, \ldots, m\})$, and $\rho_j^{(m)}$ is the tensor product of $k$ copies of the state $|+\rangle\langle+|$ and $(m-k)$ copies of the state $|-\rangle\langle-|$ for some $k \in\{0, \ldots, m\})$. For example, for $m=2$ we have
\begin{eqnarray*}
(\mathcal{S}^n)^{\otimes 3}&=(1-\mathcal{P}_n)^2 \Phi_+\otimes \Phi_+\otimes |++\rangle \langle ++ |+(1-\mathcal{P}_n)\mathcal{P}_n \Phi_+\otimes \Phi_- \otimes |+-\rangle \langle +- | \\
&+(1-\mathcal{P}_n)\mathcal{P}_n \Phi_-\otimes \Phi_+ \otimes |-+\rangle \langle -+ |+\mathcal{P}_n^2 \Phi_-\otimes \Phi_- \otimes |--\rangle \langle -- |\, .
\end{eqnarray*}

By convexity in the signal state of the Holevo information, one has the inequality
$$
\begin{aligned}
	\chi\left((\mathcal{S}^n)^{\otimes m}\right) & \leq \sum_j p_j^{(m)} \chi\left(\mathcal{E}_j^{(m)}\otimes \rho_j^{(m)}\right) \\
	& =\sum_j p_j^{(m)} \chi\left(\mathcal{E}_j^{(m)}\right)\, ,
\end{aligned}
$$
where the second equality follows from the fact that the state $\rho_j^{(m)}$ is independent of the input of the channel.

Now since Holevo information is additive over channels consisting of a single qubit~\cite{Additivity}, and $\Phi_+$ and $\Phi_-$ are unit qubit channels, we have
\begin{equation}
	\chi\left(\mathcal{E}_j^{(m)}\right) = k\ \chi(\Phi_+) + (m-k)\ \chi(\Phi_-)\, ,
\end{equation}
where $k$ is the number of copies of $1-\mathcal{P}_n$ appearing in $p_j^{(m)}$. We now let $\mu$ and $\nu$ be the maximum of the set of absolute values of the eigenvalues of $\Phi_+$ and $\Phi_-$ respectively, so that the Holevo information of $\Phi_+$ and $\Phi_-$ are 
\begin{equation}
	\chi(\Phi_+) = 1 - h(\mu) \qquad \text{and} \qquad \chi(\Phi_-) = 1 - h(\nu)\, ,
\end{equation}
where $h(x) = -\frac{1+x}{2}\log \frac{1+x}{2} - \frac{1-x}{2}\log \frac{1-x}{2}$. We then obtain the bound
\begin{equation}
	\begin{aligned}
		\chi\left((\mathcal{S}^n)^{\otimes m}\right) & \leq \sum_{k=0}^m (1-\mathcal{P}_n)^k \mathcal{P}_n^{m-k}\binom{m}{k}\left[ k\ \chi(\Phi_+) + (m-k)\ \chi(\Phi_-)\right] \\
		& =m\bigg[(1-\mathcal{P}_n)\chi(\Phi_+) + \mathcal{P}_n\chi(\Phi_-) \bigg]\, ,
	\end{aligned}
\end{equation}
and therefore
\begin{equation}\label{upp}
	\begin{aligned}
		C(\mathcal{S}^n) &\leq (1-\mathcal{P}_n)\  \chi(\Phi_+) + \mathcal{P}_n\ \chi(\Phi_-) \\
		& = (1-\mathcal{P}_n)\bigg[1-h(\mu)\bigg] +
		\mathcal{P}_n\bigg[1-h(\nu)\bigg] \\
		& = 1 - (1-\mathcal{P}_n)h(\mu)-\mathcal{P}_n h(\nu)\, .
	\end{aligned}
\end{equation}

\subsection{Holevo information of $\mathcal{S}^n$}
Since Holevo information is always less than classical capacity, we can compute the Holevo information of $\mathcal{S}^n$ to give a lower bound for its classical capacity. The proof below follows the Supplemental Material of~\cite{CSwitchPRL} and~\cite{Wilde}. Since it is sufficient to consider optimizing the Holevo information of $\mathcal{S}^n$ over a classical-quantum state with conditional states that are pure, and the Holevo information is $\max_\rho I(X;B)_{\sigma}$ where $\sigma_{XB}$ is the output state, we have
\begin{equation}
	\sigma_{XBC} = \sum_{x} p_x |x\rangle\langle x|_X \otimes \mathcal{S}^n(\psi_A^x) ,
\end{equation}
here the lower subscripts denote explicitly the Hilbert spaces of the state. We consider an extend input state of the form
\begin{equation}
	\omega_{XIJAC} = \frac{1}{4} \sum_{x,i,j} p_x |x\rangle\langle x|_X \otimes |i\rangle\langle i|_I \otimes |j\rangle\langle j|_J\otimes X^iZ^j\psi_A^x Z^j X^i ,
\end{equation}
where system $I$ and $J$ are two new classical registers and $\psi^x_A$ are some pure states of the target system. Moreover, the probability distribution for the registers $I$ and $J$ is assumed to be uniform, and $X$ and $Z$ are the usual Pauli operators.

The mutual information $I(X;BC)_\sigma$ can then be bounded as
\begin{equation}\label{MutualInf}
	\begin{aligned}
		I(X;BC)_\sigma & = H(BC)_\sigma - H(BC|X)_\sigma \\
		& \leq H(BC)_{\mathcal{S}^n(\omega)} - H(BC|X)_\sigma \\
		& = 1 + H(\mathcal{P}_n) - H(BC|X)_\sigma\, ,
	\end{aligned}
\end{equation}
where the first inequality follows from concavity of the von Neumann entropy, and the final equality follows from the fact that
\begin{equation}
	\text{Tr}_{XIJ}[(I \otimes\mathcal{S}^n)(\omega_{XIJAC})] = \frac{I}{2} \otimes \bigg((1-\mathcal{P}_n)|+\rangle\langle +| + \mathcal{P}_n|-\rangle\langle -| \bigg),
\end{equation}
where $ H(\mathcal{P}_n)$ is the binary entropy of $\mathcal{P}_n$.
To further bound the mutual information from above, we analyse the conditional entropy $H(BC|X)_\sigma$, for which we have
\begin{equation}
	\begin{aligned}
		H(BC|X)_\sigma & =  \sum_x p_x H(BC)_{\mathcal{S}^n(\psi_A^x)}\\
		& =  \frac{1}{4}\sum_{x,i,j} p_x H(BC)_{[(X^iZ^j)\otimes I] \mathcal{S}^n(\psi_A^x ) [(Z^j X^i)\otimes I]} \\
		& = \frac{1}{4} \sum_{x,i,j} p_x H(BC)_{\mathcal{S}^n(X^iZ^j \psi_A^x Z^jX^i )} \\
		& = H(BC|XIJ)_{\mathcal{S}^n(\omega_{XIJAC})}\, ,
	\end{aligned}
\end{equation}
where the second equality follows from the fact that the von Neumann entropy is invariant under isometric transformations, and the third equality follows from the fact that $X^i$ and $Z^j$ act solely on the system state while leaving the control state fixed. Hence, we can bound Eq.~\eqref{MutualInf} as 
\begin{equation}
	\begin{aligned}
		I(X;BC)_\sigma &\leq 1 + H(\mathcal{P}_n) - H(BC|XIJ)_{\mathcal{S}^n(\omega)} \\
		& = 1 + H(\mathcal{P}_n) -  \sum_x p_x H(BC)_{\mathcal{S}^n(\psi_A^x)} \\
		& \leq 1 + H(\mathcal{P}_n) - \min_x H(BC)_{\mathcal{S}^n(\psi_A^x)} \\
		& \leq 1 + H(\mathcal{P}_n) - H^{\text{min}}(\mathcal{S}^n)\, ,
	\end{aligned}
\end{equation}
where the first inequality follows from the fact that the expectation value can never be smaller than the minimum value, and in the last equality, $H^{\text{min}}(\mathcal{S}^n) := \min_\rho H(\mathcal{S}^n(\rho))$.

In what follows, we compute $H^{\text{min}}(\mathcal{S}^n)$. From Eq.~\eqref{nSwitch}, we can denote the right hand side as the matrix
\begin{equation}
	\mathcal{S}^n(\rho) = \frac{1}{2}
	\begin{pmatrix}
		A+B & A-B\\
		A-B & A+B
	\end{pmatrix}\, ,
\end{equation}
where $A = (1-\mathcal{P}_n) \Phi_+(\rho)$ and $B = \mathcal{P}_n\Phi_-(\rho)$. The eigenvalues of such a matrix are the union of eigenvalues of $A$ and $B$. Therefore, finding the eigenvalues of $A$ and $B$ is sufficient. Suppose 
\begin{equation*}
	\rho = \frac{1}{2} 
	\begin{pmatrix}
		1+z & x-iy\\
		x+iy & 1-z
	\end{pmatrix}
	= \frac{1}{2}(I+xX+yY+zZ)\, ,
\end{equation*}
where $r = (x,y,z)$ is a real vector. We then have 
\begin{equation}
	\begin{aligned}
		A & = (1-\mathcal{P}_n) \Phi_+(\rho) = (1-\mathcal{P}_n)\sum_{i=0}^3 s_i \sigma_i \rho \sigma_i\\
		& = \frac{1}{2}(1-\mathcal{P}_n)
		\begin{pmatrix}
			\mu_0 + \mu_3 z & \mu_1 x - \mu_2 iy\\
			\mu_1 x + \mu_2 iy & \mu_0 - \mu_3 z
		\end{pmatrix}\, ,
	\end{aligned}
\end{equation}
where $\mu_0 = \sum_{i=0}^3 s_i = 1$ and $\mu_i = s_0 + 2s_i - \sum_{j=1}^3 s_j, i\in\{1,2,3\}$, are the eigenvalues of the Pauli channel $\Phi_+$. It then follows that the eigenvalues of $A$ and $B$ may be written as  
\begin{equation*}
	\lambda_{\pm}^A = \frac{1}{2}(1-\mathcal{P}_n)\bigg(1 \pm \sqrt{\mu_1^2x^2 + \mu_2^2y^2 + \mu_3^2 z^2}\bigg)
\end{equation*}
and 
\begin{equation*}
	\lambda_{\pm}^B = \frac{1}{2}\mathcal{P}_n\bigg(1 \pm \sqrt{\nu_1^2x^2 + \nu_2^2 y^2 + \nu_3^2 z^2}\bigg)\, ,
\end{equation*}
where $\nu_i$ are eigenvalues of Pauli channel $\Phi_-$, i.e. $\nu_0 = 1, \nu_i = t_0 + 2t_i -\sum_{j=1}^3 t_j$ for $i=1,2,3$. 

Now, the minimal entropy of $\mathcal{S}^n$ is 
\begin{equation}
	\begin{aligned}
		H^{\text{min}}(\mathcal{S}^n) & = \min_\rho H(\mathcal{S}^n(\rho)) \\
		& = \min_\rho -\{\lambda_{\pm}^A \log\lambda_{\pm}^A + \lambda_{\pm}^B\log\lambda_{\pm}^B\} \\
		& = \min_\rho \bigg( H(A)+H(B)\bigg) \\
		& = H(\mathcal{P}_n) + \min_\rho \left[ (1-\mathcal{P}_n) h\big(\sqrt{K}\big) + \mathcal{P}_n h\big(\sqrt{K'}\big) \right]\, ,
	\end{aligned}
\end{equation}
where $K = \mu_1^2x^2 + \mu_2^2y^2 + \mu_3^2 z^2$ and $K' = \nu_1^2 x^2 + \nu_2^2 y^2 + \nu_3^2 z^2$, and $$h(x) = -\frac{1+x}{2}\log \frac{1+x}{2} - \frac{1-x}{2}\log \frac{1-x}{2}.$$ 
The minimal output entropy always decreases with respect to $x,y,z$, so we have 
\begin{equation}
	H^{\text{min}}(\mathcal{S}^n) =  H(\mathcal{P}_n) + \min_{i=1,2,3} \left[ (1-\mathcal{P}_n) h\big(|\mu_i|\big) + \mathcal{P}_n h\big(|\nu_i|\big) \right]\, .
\end{equation}
Since $\nu_i$ and $\mu_i$ are fixed, $x^2+y^2+z^2 \leq 1$, and the entropy is a concave function, the minimum value is attained when $x^2+y^2+z^2 = 1$, i.e., $\rho$ is a pure state. Hence, we obtain the following the upper bound for the mutual information: 
\begin{equation}
	I(X;BC) \leq 1 + H(\mathcal{P}_n) - H^{min}(\mathcal{S}^n)\, .
\end{equation}

Indeed, when $\rho_+ = \frac{1}{2}(I + \vec{r}\cdot \vec{\sigma}), \ |\vec{r}| =1$ minimizes the minimal entropy, the pure state $\rho_- = \frac{1}{2}(I - \vec{r}\cdot \vec{\sigma})$ also achieves the minimum. 
Therefore, we can choose the state ensemble as $\rho_+$ and $\rho_-$ with equal probability to achieve this bound, i.e. $\sigma_{XBC} = \frac{1}{2} \sum_\pm |\pm\rangle\langle \pm|\otimes \mathcal{S}^n(\rho_\pm)$. As such, it follows that the Holevo information of $\mathcal{S}^n$ is 
\begin{equation}\label{HoveloSn}
	\chi(\mathcal{S}^n) = 1 -\min_{i=1,2,3} \bigg[ (1-\mathcal{P}_n) h\big(|\mu_i|\big) + \mathcal{P}_n h\big(|\nu_i|\big) \bigg]\, .
\end{equation}

\subsection{The classical capacity of $\mathcal{S}^n$ equals its Holevo information}
In this subsection, we will show that the classical capacity of $\mathcal{S}^n$ equals its Holevo information. For this, we only need to prove that the upper bound for the classical capacity of $\mathcal{S}^n$ as given by Eq.~\eqref{upp} is equal to its Holevo information, which is given by formula ~\eqref{HoveloSn}. Moreover, these two bounds are equal if and only if the minimal output entropy of $\Phi_+$ and $\Phi_-$ are attained at the same input state, i.e., there exists a fixed $i$ such that $\mu = |\mu_i|$ and $\nu = |\nu_i|$. Now we use the properties of $\Phi_+,\Phi_-$ to prove this simple fact, which we will need for the proof of Theorem~\ref{thm1}.

\begin{thm}
	The minimal output entropy of $\Phi_+$ and $\Phi_-$ are attained in the same input state, hence $C(\mathcal{S}^n) = (1-\mathcal{P}_n) C(\Phi_+) + \mathcal{P}_n C(\Phi_-)$.
\end{thm}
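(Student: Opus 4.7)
The plan is to reduce the theorem to showing that the indices $i^*\in\{1,2,3\}$ maximizing $|\mu_i|$ for $\Phi_+$ and $|\nu_i|$ for $\Phi_-$ coincide. Once established, both channels attain their minimum output entropy on any eigenstate of $\sigma_{i^*}$, because for a Pauli channel $\mathcal{M}$ the minimum output entropy equals $h(\max_i|\mu_i(\mathcal{M})|)$ and is attained on the eigenstates of the corresponding $\sigma_i$. Matching the Holevo formula of Eq.~\eqref{HoveloSn} at index $i^*$ with the upper bound \eqref{upp} on $C(\mathcal{S}^n)$ then forces $\chi(\mathcal{S}^n)=C(\mathcal{S}^n)$ and yields the claimed identity $C(\mathcal{S}^n)=(1-\mathcal{P}_n)C(\Phi_+)+\mathcal{P}_n C(\Phi_-)$.

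To establish $\arg\max_i|\mu_i|=\arg\max_i|\nu_i|$, I would use $\mu_i=2(s_0+s_i)-1$, $\nu_i=2(t_0+t_i)-1$ together with the normalizations $\sum_i s_i=\sum_i t_i=1$ to derive, for any distinct $j,k,l\in\{1,2,3\}$,
\begin{equation*}
|\mu_j|^2-|\mu_k|^2 \;=\; 4(s_j-s_k)(s_0-s_l),\qquad |\nu_j|^2-|\nu_k|^2 \;=\; 4(t_j-t_k)(t_0-t_l).
\end{equation*}
By Proposition~\ref{prop1}, $(s_j-s_k)$ carries the sign of $(p_j-p_k)$, while $(s_0-s_l)$ is nonnegative (even $n$) or carries the sign of $(p_0-p_l)$ (odd $n$); Proposition~\ref{prop2} states the dual: $(t_j-t_k)$ carries the opposite sign of $(p_j-p_k)$, and $(t_0-t_l)$ is nonpositive (even $n$) or carries the opposite sign of $(p_0-p_l)$ (odd $n$). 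In either parity the two extra minus signs in the $t$-version cancel, so $(s_j-s_k)(s_0-s_l)$ and $(t_j-t_k)(t_0-t_l)$ always share the same sign. Hence the orderings of $\{|\mu_i|\}_{i=1}^3$ and $\{|\nu_i|\}_{i=1}^3$ coincide, giving the common $\arg\max\,=:i^*$.

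Since $h$ is monotonically decreasing on $[0,1]$, the index $i^*$ simultaneously minimizes the convex combination $(1-\mathcal{P}_n)h(|\mu_i|)+\mathcal{P}_n h(|\nu_i|)$ appearing in Eq.~\eqref{HoveloSn}, giving $\chi(\mathcal{S}^n)=1-(1-\mathcal{P}_n)h(\mu)-\mathcal{P}_n h(\nu)$ with $\mu=|\mu_{i^*}|$ and $\nu=|\nu_{i^*}|$. This matches the upper bound \eqref{upp}, and since $\chi\leq C$ holds universally, equality gives the theorem after rearrangement. The main bookkeeping obstacle is the sign analysis in the odd-$n$ case together with handling boundary points of the probability simplex where some factor vanishes due to the hypotheses in Propositions~\ref{prop1}(2), \ref{prop2}(2), or \ref{prop2}(4); such degeneracies either trivialize one of $\Phi_\pm$ (whence the minimum entropy is achieved on every state) or produce simultaneous ties among the $|\mu_i|$ and among the $|\nu_i|$, both of which are compatible with a common choice of $i^*$.
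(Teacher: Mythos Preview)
Your proposal is correct and follows essentially the same approach as the paper: both use the factorization $|\mu_j|^2-|\mu_k|^2=4(s_j-s_k)(s_0-s_l)$ (and its $t$-analogue) together with Propositions~\ref{prop1} and~\ref{prop2} to match the argmax indices, then sandwich $C(\mathcal{S}^n)$ between $\chi(\mathcal{S}^n)$ and the upper bound~\eqref{upp}. Your ``double minus cancellation'' organizes the sign analysis more uniformly than the paper's explicit even/odd case split, and your handling of the simplex boundary (where some $p_i=0$) is terser than the paper's but lands on the same conclusion, namely that in those cases $\Phi_-$ (or $\Phi_+$) degenerates to a unitary Pauli channel whose minimum output entropy is attained on every pure state.
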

\begin{proof}
If $\mathcal{P}_n = 0$, we have $\mathcal{S}^n(\rho) = \Phi_+(\rho) \otimes |+\rangle\langle+|$ and the conclusion is obvious. 
 
Now suppose that $\mathcal{P}_n > 0$. Let $\Phi_+(\rho) = \sum_{i=0}^{3} s_i \sigma_i \rho \sigma_i$ and $\Phi_-(\rho) = \sum_{i=0}^n t_i\sigma_i \rho \sigma_i$. Recall that eigenvalues of $\Phi_+$ are $\mu_i = s_0 + 2s_i - \sum_{j=1}^3 s_j$ and that the eigenvalues of $\Phi_-$ are $\nu_i = t_0 + 2t_i - \sum_{j=1}^3 t_j$ for $i=1,2,3$. Denote $\mu = \max_i |\mu_i|$ and $\nu = \max_i |\nu_i|$.
	
Fix $n$. Without loss of generality, we show that if $\mu = |\mu_3| \geq \max \{ |\mu_1|, |\mu_2|\}$ then $\nu = |\nu_3| \geq \max \{|\nu_1|,|\nu_2|\}$. As $\Phi_+ = \frac{1}{4(1-\mathcal{P}_n)} \mathcal{S}^n_+$ and $\Phi_- = \frac{1}{4\mathcal{P}_n} \mathcal{S}^n_-$, where $\mathcal{S}^n_+(\rho) = 4\sum_i s^n_i \sigma_i\rho\sigma_i$ and $\mathcal{S}^n_-(\rho) = 4\sum_i t^n_i \sigma_i\rho\sigma_i$, we have
	\begin{equation}\label{mu3}
		\begin{aligned}
		\mu_3^2 \geq \mu_2^2 \Longleftrightarrow  (s^n_0 - s^n_1)(s^n_3 - s^n_2) \geq 0 \qquad \text{and} \qquad
		\mu_3^2 \geq \mu_1^2 \Longleftrightarrow  (s^n_0 - s^n_2)(s^n_3 - s^n_1) \geq 0\, .
		\end{aligned}
	\end{equation} 
We then need to prove that $\nu = |\nu_3| \geq \max \{|\nu_1|,|\nu_2|\}$, i.e.,
 	\begin{equation}
 	\begin{aligned}
 		\nu_3^2 \geq \nu_2^2 \Longleftrightarrow  (t^n_0 - t^n_1)(t^n_3 - t^n_2) \geq 0  \qquad \text{and} \qquad
 		\nu_3^2 \geq \nu_1^2 \Longleftrightarrow  (t^n_0 - t^n_2)(t^n_3 - t^n_1) \geq 0\, .
 	\end{aligned}
 \end{equation} 
 We first prove $\mu_3^2 \geq \mu_2^2\implies \nu_3^2 \geq \nu_2^2$: 
 
 \uline{When $n$ is even}: 
 \begin{itemize}
 \item[(1).] If $\mu_3^2 = \mu_2^2$, by Eq.~\eqref{mu3}, we have
 \begin{itemize}
     \item[\ding{172}] If $s^n_0 - s^n_1 = 0$, by Proposition~\ref{prop1}, we have $p_0 = p_1 = 1/2$ and $p_2 = p_3 = 0$, hence $t_0^n - t_1^n =0$ and $\nu_3^2 = \nu_2^2$.
     \item[\ding{173}] If $s_3^n = s_2^n$, we have two situations. First is $p_0>0$, then $p_2 = p_3$, by  Proposition~\ref{prop2}, we have $\nu_3^2 = \nu_2^2$ for any $p_1$. Second is $p_0 = 0$, then $s_3^n = s^n_2 = s_1^n =0$. In this case, we have $\mu_1 = \mu_2 = \mu_3$, and of course, the index $i$ for $\nu = |\nu_i|$ is equal to the index for $\mu = |\mu_i|$.
 \end{itemize}
\item[(2).] If $\mu_3^2 > \mu_2^2$, we have $p_0>0$ and $p_2 < p_3$, hence $t_0^n - t_1^n <0$ and $t_3^n - t_2^n \leq 0$ for any $p_1$, thus $\nu_3^2 \geq \nu_2^2$.
\end{itemize}
 
\uline{When $n$ is odd}:
\begin{itemize}
    \item[(1).] If $\mu_3^2 = \mu_2^2$, by Eq.~\eqref{mu3}, we have
 \begin{itemize}
     \item[\ding{172}] If $s^n_0 - s^n_1 = 0$, by Proposition~\ref{prop1}, we have $p_0 = p_1$, hence $t_0^n - t_1^n =0$ and $\nu_3^2 = \nu_2^2$.
     \item[\ding{173}] If $s_3^n = s_2^n$, we have $p_2 = p_3$, by  Proposition~\ref{prop2}, we have $\nu_3^2 = \nu_2^2$ for any $p_0$ and $p_1$.
 \end{itemize}
\item[(2).] If $\mu_3^2 > \mu_2^2$, we have 
 \begin{itemize}
     \item[\ding{174}] If $p_0>p_1$ and $p_3>p_2$, we have $t_0^n-t_1^n<0$ and $t^n_3 - t^n_2 \leq 0 $, hence $\nu_3^2 \geq \nu_2^2$.
     \item[\ding{175}] If $p_0<p_1$ and $p_3<p_2$, we have $t_0^n-t_1^n>0$ and $t^n_3 - t^n_2 \geq 0 $, hence $\nu_3^2 \geq \nu_2^2$.
 \end{itemize}
 \end{itemize}
 
By a similar argumant we can prove $\mu_3^2 \geq \mu_1^2\implies \nu_3^2 \geq \nu_1^2$ as well. Thus, the minimal output entropy pf $\Phi_\pm$ are attained in the same input state, and moreover, we have 
\begin{equation*}
    \chi(\mathcal{S}^n) = 1 -\min_{i=1,2,3} \bigg[ (1-\mathcal{P}_n) h\big(|\mu_i|\big) + \mathcal{P}_n h\big(|\nu_i|\big) \bigg] = 1 - (1-\mathcal{P}_n)h(\mu)-\mathcal{P}_n h(\nu) \geq C(\mathcal{S}^n) \geq \chi(\mathcal{S}^n),
\end{equation*}
thus the classical capacity of $\mathcal{S}^n$ is equal to its Holevo information $\chi(\mathcal{S}^n)$, as desired.
\end{proof}

\section{Necessary and sufficient conditions for zero causal gain}
In this section, we prove necessary and sufficient conditions for zero classical and coherent information causal gain of Pauli channels with forward and backward orders. Recall that in this case, the effective channel is given by
$$\mathcal{S}^n(\rho)=\big(1-\mathcal{P}_n\big) \Phi_+(\rho)  \otimes |+\rangle\langle+| \ + \  \mathcal{P}_n\Phi_-(\rho) \otimes |-\rangle\langle-|\, ,$$ 
where $\Phi_+(\rho) = \sum s_i \sigma\rho\sigma_i$ and $\Phi_-(\rho) = \sum t_i \sigma\rho\sigma_i$, and the $n$-fold composition channel is $\mathcal{N}^n(\rho) = \sum_{i=0}^{3} q_i \sigma_i \rho \sigma_i$. If $\mathcal{P}_n > 0$, since $\Phi_+ = \frac{1}{4(1-\mathcal{P}_n)} \mathcal{S}^n_+$ and $\Phi_- = \frac{1}{4\mathcal{P}_n} \mathcal{S}^n_-$, the probabilities $\{s_i\}_{i=0}^3$ of $\Phi_+$ also satisfy the statements in Proposition~\ref{prop1} and the probabilities $\{t_i\}_{i=0}^3$ of $\Phi_-$ satisfy the statements in Proposition~\ref{prop2}.
Let $\{\gamma_i\}, \{\mu_i\}, \{\nu_i\}$ be eigenvalues of $\mathcal{N}^n, \Phi_+, \Phi_-$ respectively, and $h(x) = -\frac{1+x}{2}\log \frac{1+x}{2} - \frac{1-x}{2}\log \frac{1-x}{2}$.

\subsection{The conditions for zero classical causal gain}

Since $\mathcal{N}^n(\rho) = \sum_{i=0}^{3} q_i \sigma_i \rho \sigma_i$, $\gamma_i = q_0 + 2q_i - \sum_{j=1}^3 q_j$ and $\gamma = \max_i \{|\gamma_i|\}$, the classical capacity of $\mathcal{N}^n$ is equal to its Holevo information~\cite{HolevoPauli3,HoveloPauli1,HolevoPauli2}\, i.e.,
\begin{equation}
	C(\mathcal{N}^n) = \chi(\mathcal{N}^n) = 1 - h(\gamma)\, ,
\end{equation}
and since $\mathcal{N}^n(\rho) = (1-\mathcal{P}_n)\Phi_+(\rho) + \mathcal{P}_n\Phi_-(\rho)$, we have that the causal gain $\delta_C$ associated with the classical capacity may be written as
\begin{equation}
	\begin{aligned}
		\delta_C = C(\mathcal{S}^n) - C(\mathcal{N}^n) = h(\gamma) - (1-\mathcal{P}_n)h(\mu) - \mathcal{P}_n h(\nu)\, ,
	\end{aligned}
\end{equation}
where $\mu = \max\{|\mu_i|\}$ and $\nu = \max\{|\nu_i|\}$ (recall here that $\mu_i = s_0 +2s_i - \sum_{j=1}^3 s_j$ and $\nu_i = t_0 +2t_i - \sum_{j=1}^3 t_j$ for $i=1,2,3$, so that $\gamma_i = (1-\mathcal{P}_n)\mu_i + \mathcal{P}_n \nu_i$). 

In this subsection, we  prove the following:
\begin{thm}
    The classical causal gain $\delta_C$ defined above is positive if and only if the quantity $\mathcal{P}_n > 0$, except the cases that $n$ is odd and $\mathcal{N}$ is the completely depolarizing channel.
\end{thm}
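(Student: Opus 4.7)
The strategy is to leverage the closed-form expression
\[
\delta_C = h(\gamma) - (1-\mathcal{P}_n)h(\mu) - \mathcal{P}_n h(\nu)
\]
together with standard properties of $h(x) := H\!\left(\tfrac{1+x}{2}\right)$: it is even, strictly decreasing on $[0,1]$, strictly concave on $[-1,1]$, with $h(0)=1$. The identity $\gamma_i = (1-\mathcal{P}_n)\mu_i + \mathcal{P}_n\nu_i$, coming from the decomposition $\mathcal{N}^n = (1-\mathcal{P}_n)\Phi_+ + \mathcal{P}_n\Phi_-$, will be the main algebraic ingredient.

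The ``only if'' direction is immediate: when $\mathcal{P}_n = 0$, Appendix~A.2 shows that $\mathcal{S}^n(\rho) = \mathcal{N}^n(\rho) \otimes \omega$ with $\omega$ independent of $\rho$, whence $C(\mathcal{S}^n) = C(\mathcal{N}^n)$ and $\delta_C = 0$.

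For the ``if'' direction I assume $\mathcal{P}_n > 0$. The componentwise triangle inequality gives $\gamma \le (1-\mathcal{P}_n)\mu + \mathcal{P}_n\nu$, so applying monotonicity of $h$ followed by concavity,
\[
h(\gamma) \;\ge\; h\bigl((1-\mathcal{P}_n)\mu + \mathcal{P}_n\nu\bigr) \;\ge\; (1-\mathcal{P}_n) h(\mu) + \mathcal{P}_n h(\nu),
\]
which already yields $\delta_C \ge 0$. To upgrade this to strict positivity, I split into cases. If $\mu \neq \nu$, then strict concavity of $h$ combined with $\mathcal{P}_n \in (0,1)$ makes the second step strict and we are done. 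If $\mu = \nu > 0$, I plan to exploit the ``mirror'' ordering of the probability vectors of $\Phi_\pm$ furnished by Propositions~\ref{prop1} and~\ref{prop2}: $s_0$ is maximal among $(s_0,s_1,s_2,s_3)$ while $t_0$ is minimal among $(t_0,t_1,t_2,t_3)$, and on $\{1,2,3\}$ the ordering of the $t_i$'s reverses that of the $s_i$'s. Writing $\mu_i = 2(s_0 + s_i) - 1$ and $\nu_i = 2(t_0 + t_i) - 1$, it follows that $s_0 + s_{i_*}$ is a sum of the two largest $s$'s and hence $\ge 1/2$, while $t_0 + t_{i_*}$ is a sum of the two smallest $t$'s and hence $\le 1/2$, so $\mu_{i_*} \ge 0 \ge \nu_{i_*}$. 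A short check shows that both inequalities are strict whenever $\mu = \nu > 0$, so $|\gamma_{i_*}| = |(1-\mathcal{P}_n)\mu_{i_*} + \mathcal{P}_n\nu_{i_*}| = |1-2\mathcal{P}_n|\,\mu < \mu$; a parallel argument at any other index $j$ with $|\mu_j| = \mu = |\nu_j|$ yields the same strict inequality, giving $\gamma < \mu$ and hence $h(\gamma) > h(\mu)$. Finally, the subcase $\mu = \nu = 0$ forces the probability vectors of both $\Phi_\pm$ to be uniform, which by Propositions~\ref{prop1} and~\ref{prop2} is impossible for $n$ even with $\mathcal{P}_n > 0$ and happens for $n$ odd only when $\mathcal{N}$ is completely depolarizing, precisely the excluded case.

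The main obstacle is tracking the signs carefully in the $\mu = \nu > 0$ case, particularly when the extremal $|\mu_i|$ is attained at several indices at once; a purely qualitative argument is not enough to rule out equality $\gamma = \mu$. The explicit polynomial identities of Propositions~\ref{prop1} and~\ref{prop2}, with their controlled signs, reduce this to a finite mechanical case analysis that can be dispatched by exhausting the possible orderings of $(p_0,\ldots,p_3)$ and the corresponding choices of $i_*$ for each parity of $n$.
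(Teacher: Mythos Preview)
Your approach is essentially the same as the paper's: reduce $\delta_C=0$ to the condition $\gamma=\mu=\nu$ via strict concavity of $h$, then rule this out using the sign information in Propositions~\ref{prop1} and~\ref{prop2}. One correction: the ``mirror ordering'' you state---$s_0$ maximal and $t_0$ minimal---is only what Propositions~\ref{prop1}(1) and~\ref{prop2}(1) give for $n$ even; for $n$ odd the propositions say the full ordering of $(s_0,\dots,s_3)$ matches that of $(p_0,\dots,p_3)$ while $(t_0,\dots,t_3)$ is reversed, so $s_0$ need not be maximal. Your sign claim $\mu_{i_*}>0>\nu_{i_*}$ can indeed be proved cleanly for $n$ even (since $\mu_i+\mu_j=2(s_0-s_k)>0$ forces at most one $\mu_i$ negative, and that one cannot be maximal in absolute value), but for $n$ odd the paper does not prove it directly; instead it considers both possibilities $\nu_{i_*}=\pm\mu_{i_*}$ and derives a contradiction from the same-sign case via the propositions. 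Your last paragraph already anticipates this case analysis, so the gap is only in the intermediate narration, not in the overall plan.
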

Since $h(x)$ is a strict concave function,
$\delta_C = 0$ if and only if $\gamma = \mu =\nu$ or $\mathcal{P}_n = 0$ ($\mathcal{P}_n$ is always less than 1, see discussion following Eq.~\eqref{PnPauli}). Since $\mathcal{P}_n = 0$ implies there is no causal order superposed in the quantum $\tt SWITCH$, i.e. $\mathcal{S}^n$ is just $\mathcal{N}^n$ tensor with a fixed flagged state, and the classical causal gain is zero obviously. So we only need to prove that the situation $\gamma = \mu =\nu$ if and only if $n$ is odd and $\mathcal{N}$ is the completely depolarizing channel (in this case, we have $\mathcal{P}_n > 0$ but $\delta_C = 0$).

\begin{proof}
Firstly, we prove $\gamma = \mu =\nu$ only happens if $n$ is odd and $\mathcal{N}$ is the completely depolarizing channel. As shown in Appendix C, we have $\mu$ and $\nu$ are obtained on the same index, without loss of generality, we can assume that $\mu = |\mu_3|$ and $\nu = |\nu_3|$.

We claim that $\mu = |\mu_i| = |\nu_i| = \nu$, and give the proof by considering the converse below. 

\begin{itemize}
	\item[I.]
If $|\mu_3| > \max\{|\mu_2|,|\mu_1|\}$ and $|\nu_3|\geq \max\{|\nu_2|,|\nu_1|\}$, then we can consider such two cases:
\begin{itemize}
	\item[(1)] \uline{If $\nu_3 = -\mu_3$}, we can get $|\gamma_3| = |1-2\mathcal{P}_n| \mu <\mu$, and $|\gamma_1|\leq (1-\mathcal{P}_n)|\mu_1| + \mathcal{P}_n |\nu_1| <\mu$, $|\gamma_2| <\mu$ as well, hence $\gamma <\mu$, which leads to a contradiction.
	\item[(2)] \uline{If $\nu_3 = \mu_3$}.
 
 \qquad Firstly suppose that $\mu_3>0$, so that by Proposition~\ref{prop1} we have $\mu_3>\mu_2, \mu_3>\mu_1$, which implies $p_3>p_2$ and $p_3>p_1$. 
 
\qquad On the other hand, since $\nu_3\geq \max\{\nu_1,\nu_2\}$, by Proposition~\ref{prop2}, when $n$ is even, $p_3>\max\{p_1,p_2\}$ happens only for $p_1 = p_2 = 0$, which means $\mathcal{P}_n = 0$ by Prop.~\ref{PauliPn}, which is impossible. When $n$ is odd, $p_3>\max\{p_1,p_2\}$ happens only for $p_0p_1 = 0$ and $p_0p_2=0$. If $p_0 >0$, which also implies $\mathcal{P}_n = 0$. So it is necessary that $p_0 =0$. But this contradicts $0< \mu_3^2 - \mu_2^2 = 4(s_0-s_1)(s_3-s_2)$ since $s_3>s_2$ and $s_1\geq 0$. 
 
\qquad If $\mu_3<0$, we also get a contradiction by a similar argument.
\end{itemize}
\item[II.] The cases $|\mu_3| = |\mu_2| > |\mu_1|$ and $|\mu_3| = |\mu_1| > |\mu_2|$ are ruled out for the same reasons as above.
\end{itemize}

Hence, it must hold that $\mu = |\mu_i| = |\nu_i| = \nu$. Now we analyse the sign of these eigenvalues.
\begin{itemize}
\item[\ding{172}] \uline{If $\mu_3 = \mu_2 = \mu_1$}, which implies $s_1 = s_2 = s_3$. 
 
\qquad When $n$ is odd, we have $p_1 = p_2 = p_3$, and moreover $\nu_1 = \nu_2 = \nu_3$, $t_1 = t_2 = t_3$. In this case, if $\nu_3 = -\mu_3$, by a similar argument above, we can deduce $\gamma<\mu$, which is impossible. Therefore, it is necessary that $\mu_3 = \nu_3$. While $\mu_3 = s_0 - s_3$ and $\nu_3 = t_0 - t_3$, using Proposition~\ref{prop1} and \ref{prop2}, we have $p_0 = p_i$, i.e. $\mathcal{N}$ is a completely depolarizing channel. 

\qquad When $n$ is even and $p_0>0$, we also have $p_1 = p_2 = p_3$, and $t_1 = t_2 = t_3$, do same argument above, we find $n$ must be odd, which is impossible. Now, consider $n$ is even and $p_0 =0$, in this case, $s_1 = s_2 = s_3 = 0$ and $\mu = s_0 = 1 = \nu$. Since $\sum_{i=0}^3 t_i = 1$, there must exist only one $t_i = 1$ and others are zero. If $t_0 = 1$, then $t_0 - t_1 = 1>0$ contradicts to Prop.~\ref{prop2}. If $t_0 = 0$ and $t_i = 1$, then for different $i,j,k\in\{1,2,3\}$, $t_0 - t_j = t_0 - t_k =0$ means $p_0 = p_j = p_k = 1/2$, it is impossible.

\item[\ding{173}] \uline{If $\mu_3 = \mu_2 = -\mu_1$}, which implies $s_0 = s_2 = s_3$ and $n$ is odd and $p_0 = p_2 = p_3$. By the same argument, we can obtain that $\mathcal{N}$ is a completely depolarizing channel.

\item[\ding{174}] Other cases also happen only if $n$ is odd and $\mathcal{N}$ is the completely depolarizing channel.
\end{itemize}

In summary, it is necessary that $n$ is odd and $\mathcal{N}$ is a completely depolarizing channel for $\gamma = \mu = \nu$. 

Finally, if $n$ is odd and $\mathcal{N}$ is the completely depolarizing channel, by Proposition~\ref{prop1} and \ref{prop2}, we have $s_i = t_i = 1/4$ for all $i\in\{0,1,2,3\}$, which means $\Phi_\pm$ are all completely depolarizing channels and $\gamma = \mu = \nu =0$.
\end{proof}

\subsection{The conditions for zero coherent information causal gain}

Notice that $\Phi_+(\rho) = \sum_{i=0}^3 s_i \sigma_i\rho\sigma_i$,\  $\Phi_-(\rho) = \sum_{i=0}^3 t_i \sigma_i\rho\sigma_i$, and $\mathcal{N}^n = \sum_{i=0}^3 q_i \sigma_i\rho\sigma_i = (1-\mathcal{P}_n) \Phi_+ + \mathcal{P}_n \Phi_-.$
Since the coherent information causal gain may be written as
\begin{equation}
	\delta_I = H(\vec{q}) - (1-\mathcal{P}_n)H(\vec{s}) - \mathcal{P}_n H(\vec{t})\, ,
\end{equation}
it follows that $\delta_I=0$ if and only if $\vec{s} = \vec{t}$ for the concavity of entropy, oo $\mathcal{P}_n=0$.
Here, we will prove the following result:
\begin{thm}
     The coherent information causal gain $\delta_I$ defined above is positive if and only if the quantity $\mathcal{P}_n > 0$, except the cases that $n$ is odd and $\mathcal{N}$ is the completely depolarizing channel.
\end{thm}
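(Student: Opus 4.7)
The plan is to exploit strict concavity of the Shannon entropy. Since the Pauli channel decomposition gives $\N^n = (1-\mathcal{P}_n)\Phi_+ + \mathcal{P}_n\Phi_-$, the associated probability vectors satisfy $\vec{q} = (1-\mathcal{P}_n)\vec{s} + \mathcal{P}_n\vec{t}$. Strict concavity of $H$ on the simplex then yields $\delta_I \ge 0$, with equality if and only if either $\mathcal{P}_n \in \{0,1\}$ or $\vec{s}=\vec{t}$. The case $\mathcal{P}_n = 0$ has already been handled in Appendix A.2, and $\mathcal{P}_n < 1$ always holds by the discussion following Eq.~\eqref{PnPauli}. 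The proof therefore reduces to showing that, under the standing hypothesis $\mathcal{P}_n>0$, one has $\vec{s}=\vec{t}$ precisely when $n$ is odd and $\N$ is the completely depolarizing channel.

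Using the identifications $s_i = s_i^n/(1-\mathcal{P}_n)$ and $t_i = t_i^n/\mathcal{P}_n$ supplied by Eq.~\eqref{nSwitch}, the condition $\vec{s}=\vec{t}$ is equivalent to the three equalities $\mathcal{P}_n\bigl(s_0^n - s_i^n\bigr) = (1-\mathcal{P}_n)\bigl(t_0^n - t_i^n\bigr)$ for $i\in\{1,2,3\}$ (the normalization $\sum s_i = \sum t_i = 1$ is automatic). The key idea is to feed these into the sign-controlled factorizations of $s_0^n - s_i^n$ and $t_0^n - t_i^n$ provided by Propositions~\ref{prop1} and~\ref{prop2}.

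For even $n$, Proposition~\ref{prop1}(1) gives $s_0^n - s_i^n \ge 0$ while Proposition~\ref{prop2}(1) gives $t_0^n - t_i^n \le 0$, so with $\mathcal{P}_n\in(0,1)$ both sides of the equality must vanish. But $s_0^n = s_i^n$ forces $p_0 = p_i = 1/2$ with the remaining two probabilities zero, and this cannot hold simultaneously for all three $i\in\{1,2,3\}$; hence $\vec{s}\neq\vec{t}$, i.e.\ $\delta_I > 0$. For odd $n$, Propositions~\ref{prop1}(3) and~\ref{prop2}(3) give $s_0^n - s_i^n = (p_0-p_i)\,f_2^i(\vec{p})$ and $t_0^n - t_i^n = -(p_0-p_i)\,\widetilde{f}_2^i(\vec{p})$ with $f_2^i,\widetilde{f}_2^i>0$. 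Substituting yields
\[
(p_0-p_i)\bigl[\mathcal{P}_n f_2^i(\vec{p}) + (1-\mathcal{P}_n)\widetilde{f}_2^i(\vec{p})\bigr] = 0,
\]
and since the bracketed factor is strictly positive we must have $p_0 = p_i$ for every $i\in\{1,2,3\}$, forcing $\N$ to be the completely depolarizing channel. Conversely, in that case the symmetry of $\vec{p}$ under the action permuting Pauli indices renders $\Phi_+$ and $\Phi_-$ both completely depolarizing, giving $\vec{s}=\vec{t}$ and $\delta_I = 0$, and Proposition~\ref{PauliPn} guarantees $\mathcal{P}_n > 0$ in this regime.

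The main obstacle I anticipate is ensuring that the opposite-sign structures of $\mathcal{S}_+^n$ and $\mathcal{S}_-^n$ in Propositions~\ref{prop1} and~\ref{prop2} are invoked in the correct direction for each parity, together with ruling out the boundary probability vectors in the even-$n$ case (each of which in fact satisfies $\mathcal{P}_n=0$ by Proposition~\ref{PauliPn}). Beyond this bookkeeping, the argument is short, because the combinatorial work has already been packaged into the two preparatory propositions.
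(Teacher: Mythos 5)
Your argument is correct and follows essentially the same route as the paper's proof: reduce via strict concavity of $H$ to the condition $\vec{s}=\vec{t}$, then use the sign/factorization structure of Propositions~\ref{prop1} and~\ref{prop2} to force both differences to vanish in the even-$n$ case (yielding the impossible $p_0=p_i=1/2$ for all $i$) and to force $p_0=p_i$ for all $i$ in the odd-$n$ case. Your explicit rescaling to the unnormalized coefficients $s_i^n, t_i^n$ and the displayed identity $(p_0-p_i)[\mathcal{P}_n f_2^i+(1-\mathcal{P}_n)\widetilde{f}_2^i]=0$ merely make precise what the paper states tersely.
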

Similar to the classical causal gain, we only need to prove that $\vec{s} = \vec{t}$ holds if and only if $n$ is odd and $\mathcal{N}$ is the completely depolarizing channel, in which case we have $\mathcal{P}_n > 0$ but $\delta_I = 0$. Moreover, since $\Phi_+ = \frac{1}{4(1-\mathcal{P}_n)} \mathcal{S}^n_+$ and $\Phi_- = \frac{1}{4\mathcal{P}_n} \mathcal{S}^n_-$, the probability coefficients $\{s_i\}_{i=0}^3$ of $\Phi_+$ also satisfy Proposition~\ref{prop1} and the probability coefficients $\{t_i\}_{i=0}^3$ of $\Phi_-$ satisfy Proposition~\ref{prop2}.
\begin{proof}
Firstly, assume that $\vec{s}=\vec{t}$. If $n$ is even, by Proposition~\ref{prop1} and \ref{prop2}, we have $s_0 - s_i \geq 0$ and $t_0 - t_i \leq 0$. Since $\vec{s} = \vec{t}$, we have $s_0 - s_i = t_0 - t_i = 0$, which implies $p_0 = p_i = 1/2$ for $i\in\{1,2,3\}$, which is impossible.
So it is necessary that $n$ is odd. Also by Proposition~\ref{prop1}, \ref{prop2} and $s_0 - s_i = t_0 - t_i$, we have $p_0 = p_1 = p_2 = p_3 = 1/4$, that is $\mathcal{N}$ is a completely depolarizing channel. In this case, $\Phi_\pm$ are both completely depolarizing channels.

If $n$ is odd and $\mathcal{N}$ is the completely depolarizing channel, by Proposition~\ref{prop1} and \ref{prop2}, we have $s_i = t_i = 1/4$ for all $i\in\{0,1,2,3\}$, which means $\Phi_\pm$ are all completely depolarizing channels.
\end{proof}

\section{Communication enhanced for qudit depolarizing channels via quantum $\tt SWITCH$}

In this section, we show that the quantity $\mathcal{P}_n$ also determines whether the quantum $\tt SWITCH$ can enhance classical communication for qudit depolarizing channels. Given $p\in [0,1]$, let $\mathcal{D}_p$ be the depolarizing channel given by
$$\mathcal{D}_p(\rho) = (1-p)\rho + p\text{Tr}(\rho) \frac{I}{d} = (1-p) \rho + \frac{p}{d^2} \sum_{i=1}^{d^2} U_i\rho U_i^\dagger,$$
where $\{U_i\}_{i=1}^{d^2}$ are unitary operators and form an orthonormal basis of $d\times d$ matrices. Here we assume $p>0$, since no communication advantage can be achieved by the quantum $\tt SWITCH$ when $p=0$. 

Suppose that $n$ copies of the qudit depolarizing channel $\mathcal{D}_p$ are put into the quantum $\tt SWITCH$ in a superposition of forward and backward orders. Fixed the control state $\omega = |+\rangle\langle+|$, so that the effective channel is given by
\begin{equation}
	\mathcal{S}^n(\rho) = \frac{1}{2} \sum_{\pi^i,\pi^j\in \mathbf{S}} C_{\pi^i \pi^j}(\rho) \otimes |i\rangle\langle j|_C\, ,
\end{equation}
where $C_{\pi^i \pi^j}(\rho) = \sum_{s_1,\dots,s_n} C^{\pi^i(1)}_{s_{\pi^i(1)}}\cdots C^{\pi^i(n)}_{s_{\pi^i(n)}} \rho \bigg( C^{\pi^j(1)}_{s_{\pi^j(1)}}\cdots C^{\pi^j(n)}_{s_{\pi^j(n)}}\bigg)^\dagger$ and $\{C^{k}_{s_k}\}$ are Kraus operators of $k$-th $\mathcal{D}_p$.

\subsection{An exact expression of the effective channel $\mathcal{S}^n(\rho)$}
In this subsection, we will compute the exact expression of $\mathcal{S}^n$. We let $[[0,m]] := \{0,1,\dots,m\}$, and we let $\pi^1$ be the identity permutation $(1,\dots,n)$ and we let $\pi^2$ be the reversal $(n,\dots,1)$.
Notice that the Kraus operators of $\mathcal{D}_p$ are $C_0 = \sqrt{1-p} I$ and $C_i = \frac{\sqrt{p}}{d}U_i$ for $ i=1\dots,d^2$. We then have
\begin{equation}
	\begin{aligned}
		C_{\pi^1 \pi^2}(\rho) & = \sum_{s_1,\dots,s_n \in [[0,d^2]]} C^{1}_{s_{1}}\cdots C^{n}_{s_{n}} \rho  \  C^{1\dagger}_{s_{1}}\cdots C^{n\dagger}_{s_{n}} \\
		& = \sum_{k=0}^{n} \binom{n}{k} (1-p)^k \bigg(\frac{p}{d^2}\bigg)^{n-k} \sum_{s_{i_1},\dots,s_{i_{n-k}}\in[[1,d^2]]} U^{i_1}_{s_{i_1}}\cdots U^{i_{n-k}}_{s_{i_{n-k}}} \rho  \  U^{i_1\dagger}_{s_{i_1}}\cdots U^{i_{n-k}\dagger}_{s_{i_{n-k}}},
	\end{aligned}
\end{equation}
where the second equality is obtained by counting the number of $C_0$-factors in the product $C^{1}_{s_{1}}\cdots C^{n}_{s_{n}}$. Moreover, we have
\begin{equation}
	\begin{aligned}
		\sum_{s_{i_1},\dots,s_{i_t} \in [[1,d^2]]} U^{i_1}_{s_{i_1}}\cdots U^{i_t}_{s_{i_t}} \rho  \  U^{i_1\dagger}_{s_{i_1}}\cdots U^{i_t\dagger}_{s_{i_t}} & = d \sum_{s_{i_1},\dots,s_{i_{t-1}}\in [[1,d^2]]} U^{i_1}_{s_{i_1}}\cdots U^{i_{t-1}}_{s_{i_{t-1}}} \text{Tr}\bigg(\rho  \  U^{i_1\dagger}_{s_{i_1}}\cdots U^{i_{t-1}\dagger}_{s_{i_{t-1}}}\bigg) \\
		& = d^2 \sum_{s_{i_1},\dots,s_{i_{t-2}}\in [[1,d^2]]} U^{i_1}_{s_{i_1}}\cdots U^{i_{t-2}}_{s_{i_{t-2}}} \rho  \  U^{i_1\dagger}_{s_{i_1}}\cdots U^{i_{t-2}\dagger}_{s_{i_{t-2}}} \\
		& = \bigg\{
		\begin{array}{ll}
			d^t I  &\qquad \text{if $t$ is odd}\, , \\
			d^t \rho  &\qquad \text{if $t$ is even}\, ,
		\end{array} \bigg.
	\end{aligned}
\end{equation}
where the first equality follows from the fact that $\sum_{i=1}^{d^2} U_i A U_i^\dagger = d\text{Tr}(A)\ I$ for a matrix $A$, and the second equality follows from  $\sum_{i=1}^{d^2} U_i \text{Tr}(A U_i^\dagger) = d A$. 

Therefore, we have 
\begin{itemize}
	\item[(1)] \uline{If $n$ is even}, 
\begin{equation}
	\begin{aligned}
	C_{\pi^1 \pi^2}(\rho) & = \binom{n}{0} \big(\frac{p}{d}\big)^n \rho + \binom{n}{1}(1-p)^1\big(\frac{p}{d}\big)^{n-1} I +\cdots + \binom{n}{n-1}(1-p)^{n-1}\big(\frac{p}{d}\big)^{1} I + \binom{n}{n} (1-p)^n \rho \\
	& = \frac{1}{2}\bigg[(1-p+\frac{p}{d})^n - (1-p-\frac{p}{d})^n\bigg] I + \frac{1}{2}\bigg[(1-p+\frac{p}{d})^n + (1-p-\frac{p}{d})^n\bigg] \rho.
	\end{aligned}
\end{equation}
\item[(2)] \uline{If $n$ is odd}, 
\begin{equation}
	\begin{aligned}
		C_{\pi^1 \pi^2}(\rho) & = \binom{n}{0} \big(\frac{p}{d}\big)^n I + \binom{n}{1}(1-p)^1\big(\frac{p}{d}\big)^{n-1} \rho +\cdots + \binom{n}{n-1}(1-p)^{n-1}\big(\frac{p}{d}\big)^{1} I + \binom{n}{n} (1-p)^n \rho \\
		& = \frac{1}{2}\bigg[(1-p+\frac{p}{d})^n - (1-p-\frac{p}{d})^n\bigg] I + \frac{1}{2}\bigg[(1-p+\frac{p}{d})^n + (1-p-\frac{p}{d})^n\bigg] \rho.
	\end{aligned}
\end{equation}
\end{itemize}
Altogether, we have 
\begin{equation}
  B:=	C_{\pi^1 \pi^2}(\rho) = \frac{1}{2}\bigg[(1-p+\frac{p}{d})^n - (1-p-\frac{p}{d})^n\bigg] I + \frac{1}{2}\bigg[(1-p+\frac{p}{d})^n + (1-p-\frac{p}{d})^n\bigg] \rho.
\end{equation}
Here we compute $C_{\pi^1 \pi^2}(\rho)$ for $\pi^1$ is identity permutation and $\pi^2$ is the reversal permutation, in which case we find that $C_{\pi^1 \pi^2}(\rho)$ is linear about $\rho$ and $I$. Indeed, for any two permutations $\pi^1$ and $\pi^2$, as they can also be regarded as two orders, a similar conclusion can be obtained, see~\cite{permutation}.

Hence, the effective channel may be written in matrix form as
\begin{equation}
	\mathcal{S}^n(\rho) = \frac{1}{2} 
	\begin{pmatrix}
		(1-p)^n\rho + [1-(1-p)^n] \frac{I}{d} & B \\
		B & (1-p)^n\rho + [1-(1-p)^n] \frac{I}{d}
	\end{pmatrix},
\end{equation}
and moreover, in such a case the quantity $\mathcal{P}_n$ may be given by
\begin{equation}\label{p_n}
	\mathcal{P}_n = \text{Tr}[(I\otimes |-\rangle\langle-|)\mathcal{S}^n(\rho)] = \frac{1}{2}-\frac{1}{4}\bigg((d+1)(1-p+\frac{p}{d})^n - (d-1)(1-p-\frac{p}{d})^n\bigg),
\end{equation}
which is independent to $\rho$. By direct calculation one may also show that $\mathcal{P}_n$ increases monotonically with $p$, and that $0\leq \mathcal{P}_n \leq \frac{1}{2} - \frac{(d+1)+(-1)^{n+1}(d-1)}{4d^n} \leq \frac{1}{2}$.

As such, we can rewrite the effective channel as
\begin{equation}
	\mathcal{S}^n(\rho) = (1-\mathcal{P}_n)\Phi_+ (\rho) \otimes |+\rangle\langle+| + \mathcal{P}_n \Phi_-(\rho) \otimes |-\rangle\langle-|\, ,
\end{equation}
where $\Phi_\pm$ are two trace-preserving linear maps given by
\begin{equation}
	\Phi_+(\rho) = (1-\lambda_1) \rho + \lambda_1 \frac{I}{d} \qquad \text{and} \qquad \Phi_-(\rho) = (1-\lambda_2) \rho + \lambda_2 \frac{I}{d},
\end{equation}
where $\lambda_1 = \frac{1-(1-p)^n+\frac{d}{2}[(1-p+\frac{p}{d})^n-(1-p-\frac{p}{d})^n]}{1+\frac{1}{2}[(d+1)(1-p+\frac{p}{d})^n-(d-1)(1-p-\frac{p}{d})^n)]}$ and $\lambda_2 = \frac{1-(1-p)^n-\frac{d}{2}[(1-p+\frac{p}{d})^n-(1-p-\frac{p}{d})^n]}{1-\frac{1}{2}[(d+1)(1-p+\frac{p}{d})^n-(d-1)(1-p-\frac{p}{d})^n)]}$. 

Notice that $0\leq \lambda_1 \leq 1$ and $1\leq \lambda_2 \leq \frac{d^2}{d^2-1}$, so that $\Phi_\pm$ are two depolarizing channels.

\subsection{The classical causal gain for $\mathcal{D}_p$}
In this subsection, we will show that $\mathcal{P}_n$ also determines whether quantum $\tt SWITCH$ can enhance classical communication for qudit depolarizing channels $\mathcal{D}_p(\rho) = (1-p)\rho + p \frac{I}{d}$. 

Firstly, as mentioned above, the effective channel may be written as 
\[
\mathcal{S}^n(\rho) = (1-\mathcal{P}_n) \Phi_+(\rho)\otimes|+\rangle\langle+| + \mathcal{P}_n \Phi_-(\rho) \otimes |-\rangle\langle-|\, ,
\]
where 
\begin{equation}
	\Phi_+(\rho) = \big(1-\lambda_1\big)\rho + \lambda_1\frac{I}{d}\qquad \text{and} \qquad \Phi_-(\rho) = \big(1-\lambda_2\big)\rho + \lambda_2\frac{I}{d}\, ,
\end{equation}
and in such a case the quantity $\mathcal{P}_n$ defined in Eq.~\eqref{p_n} is zero if and only if $p=0$.

Since $\Phi_\pm$ are both depolarizing channels (in fact, $\Phi_-$ is entanglement-breaking), their Holevo information is additive~\cite{King2003,Shor2002}, so that $\chi(\Phi_\pm \otimes \mathcal{M}) = \chi(\mathcal{M}) + \chi(\Phi_\pm)$ for an arbitrary channel $\mathcal{M}$. Moreover, as $\Phi_\pm$ are both covariant with respect to the group $SU(d)$, the Holevo information $\chi(\Phi_{\pm})$ may be written as~\cite{holevo2002remarks}
\begin{equation}
	\chi(\Phi_\pm) = \log d - H^{\text{min}}(\Phi_\pm)\, ,
\end{equation}
where $H^{\text{min}}(\Phi_\pm)$ is the minimal output entropy of $\Phi_\pm$. In addition, by the concavity of entropy, the minimal output entropy $H^{\text{min}}(\Phi_\pm)$ are all attained on an arbitrary pure state.

Therefore, using the same argument as subsection 1 of Appendix C, the classical information of $\mathcal{S}^n$ is bounded as
\begin{equation}\label{bound}
	C(\mathcal{S}^n) \leq \log d - (1-\mathcal{P}_n) H^{\text{min}}(\Phi_+) - \mathcal{P}_n H^{\text{min}}(\Phi_-) = (1-\mathcal{P}_n) C(\Phi_+) + \mathcal{P}_n C(\Phi_-)\, ,
\end{equation}
the upper bound is achieved by an ensemble of orthogonal pure states $\{|i\rangle\langle i|\}_{i=1}^d$ with uniform probabilities $p_i = 1/d$, i.e.,
\begin{equation}
	H\big(\mathcal{S}^n(|i\rangle\langle i|)\big) = (1-\mathcal{P}_n) H^{\text{min}}(\Phi_+) + \mathcal{P}_n H^{\text{min}}(\Phi_-) + H(\mathcal{P}_n)\, .
\end{equation}
Now since
\begin{equation}
	\mathcal{S}^n(\sum_{i=1}^{d} p_i|i\rangle\langle i|) = \frac{I}{d}\otimes \bigg((1-\mathcal{P}_n)|+\rangle\langle+| + \mathcal{P}_n|-\rangle\langle -|\bigg)\, ,
\end{equation}
it follows that
\begin{equation}
	H\bigg(\mathcal{S}^n(\sum_{i=1}^{d} p_i|i\rangle\langle i|)\bigg) = \log d +  H(\mathcal{P}_n)\, ,
\end{equation}
thus the Holevo information of $\mathcal{S}^n$ is no less than the bound given by .~\eqref{bound}. As classical capacity is always no less than Holevo information, we obtain
\begin{equation}
	\begin{aligned}
		C(\mathcal{S}^n) & =  \log d - (1-\mathcal{P}_n) H^{\text{min}}(\Phi_+) - \mathcal{P}_n H^{\text{min}}(\Phi_-) \\
		& =  \log d \\ 
        & + \bigg[\frac{1}{2}+\frac{1}{4}\bigg((d+1)(1-p+\frac{p}{d})^n - (d-1)(1-p-\frac{p}{d})^n\bigg)\bigg]\bigg[\big(1-\lambda_1 + \frac{\lambda_1}{d}\big)\log\big(1-\lambda_1 + \frac{\lambda_1}{d}\big) + (d-1)\frac{\lambda_1}{d}\log (\frac{\lambda_1}{d})\bigg] \\
		& + \bigg[\frac{1}{2}-\frac{1}{4}\bigg((d+1)(1-p+\frac{p}{d})^n - (d-1)(1-p-\frac{p}{d})^n\bigg)\bigg] \bigg[\big(1-\lambda_2 + \frac{\lambda_2}{d}\big)\log\big(1-\lambda_2 + \frac{\lambda_2}{d}\big) + (d-1)\frac{\lambda_2}{d}\log (\frac{\lambda_2}{d})\bigg]\, .
	\end{aligned}
\end{equation}
On the another hand, since $\mathcal{D}_p^{n}(\rho) := (\mathcal{D}_p \circ \cdots \circ \mathcal{D}_p)(\rho) = (1-p)^n\rho + \bigg(1-(1-p)^n\bigg) \frac{I}{d}$, it follows that
\begin{equation}
	C(\mathcal{D}_p^n) = \log d + \bigg((1-p)^n+\frac{1-(1-p)^n}{d}\bigg)\log\bigg((1-p)^n+\frac{1-(1-p)^n}{d}\bigg) + (d-1)\frac{1-(1-p)^n}{d}\log\bigg(\frac{1-(1-p)^n}{d}\bigg)\, ,
\end{equation}
thus
\begin{equation}
	\delta_C = C(\mathcal{S}^n) - C(\mathcal{D}_p^n) \geq 0\, .
\end{equation}
After some simple but lengthy calculations, we arrive at the following:
\begin{thm}
	The classical causal gain $\delta_C$ for qudit depolarizing channels is equal to zero if and only if either (i) $\mathcal{P}_n = 0$, i.e. $p=0$. or (ii) $p=1$ i.e. $\mathcal{D}$ is completely depolarizing and $n$ is odd.
\end{thm}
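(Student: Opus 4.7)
The plan is to recognize $\delta_C$ as a Jensen gap for a strictly convex scalar function of the depolarizing parameter, and then to characterize when the gap vanishes.

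First, I would introduce $f(\lambda) := C(\Lambda_\lambda)$, where $\Lambda_\lambda(\rho) = (1-\lambda)\rho + \lambda I/d$ is the depolarizing channel with parameter $\lambda$. Using the closed form for the classical capacity of a depolarizing channel stated just before the theorem, $f(\lambda) = \log d - H(\mathbf{v}(\lambda))$, with $\mathbf{v}(\lambda) = (1-\lambda+\lambda/d,\lambda/d,\dots,\lambda/d)$ being the output eigenvalue vector of a pure input. Since the Shannon entropy is strictly concave on the probability simplex and $\lambda \mapsto \mathbf{v}(\lambda)$ is an affine injection for $d>1$, the scalar function $f$ is strictly convex on the full admissible range of $\lambda$.

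Next, all three channels $\mathcal{D}_p^n$, $\Phi_+$, $\Phi_-$ are depolarizing, with respective parameters $\lambda_0 = 1-(1-p)^n$, $\lambda_1$, and $\lambda_2$. The operator identity $\mathcal{D}_p^n = (1-\mathcal{P}_n)\Phi_+ + \mathcal{P}_n\Phi_-$, obtained by tracing the control out of the decomposition of $\mathcal{S}^n$, reduces to the scalar relation $\lambda_0 = (1-\mathcal{P}_n)\lambda_1 + \mathcal{P}_n\lambda_2$. Combined with the formula $C(\mathcal{S}^n) = (1-\mathcal{P}_n)f(\lambda_1) + \mathcal{P}_n f(\lambda_2)$ already established in the subsection, this gives
\[
\delta_C = (1-\mathcal{P}_n)\,f(\lambda_1) + \mathcal{P}_n\,f(\lambda_2) - f\bigl((1-\mathcal{P}_n)\lambda_1 + \mathcal{P}_n\lambda_2\bigr).
\]
By Jensen's inequality and strict convexity of $f$, $\delta_C \ge 0$ with equality iff $\mathcal{P}_n(1-\mathcal{P}_n)=0$ or $\lambda_1 = \lambda_2$; the bound $0 \le \mathcal{P}_n \le 1/2$ established earlier in the subsection rules out $\mathcal{P}_n = 1$.

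To finish, I would characterize the remaining two equality conditions. Writing $a := (1-p+p/d)^n$ and $b := (1-p-p/d)^n$, the closed form $\mathcal{P}_n = 1/2 - [(d+1)a - (d-1)b]/4$ together with its monotonicity in $p$ (noted in the subsection) immediately gives $\mathcal{P}_n = 0 \iff p=0$, accounting for case (i). Cross-multiplying the defining fractions for $\lambda_1$ and $\lambda_2$, the equation $\lambda_1 = \lambda_2$ reduces to the clean identity $(1-(1-p)^n)(a+b) = d(1-p)^n(a-b)$; direct substitution verifies that $p=0$ (both sides zero) and $p=1$ with $n$ odd (both sides equal to $2d/d^n$) are solutions. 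The main obstacle is ruling out any other roots in $p \in (0,1)$. I expect to handle this by noting that $|b| < a$ for $p \in (0,1)$, so both sides are strictly positive, and then either comparing the two sides as polynomials in $p$ (via binomial expansion of $a \pm b$) or showing that their ratio is strictly monotone on $(0,1)$ with the correct boundary behavior, to conclude that equality is impossible there. This last algebraic verification, while elementary, is the technically involved part of the proof.
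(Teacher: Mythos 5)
Your proposal is correct and is actually more structured than the paper's own treatment. The paper derives the closed forms for $C(\mathcal{S}^n)$ and $C(\mathcal{D}_p^n)$ and then asserts the equality characterization ``after some simple but lengthy calculations''; you instead package $\delta_C$ as the Jensen gap of the strictly convex scalar function $f(\lambda)=\log d-H(\mathbf{v}(\lambda))$ evaluated at the affine relation $\lambda_0=(1-\mathcal{P}_n)\lambda_1+\mathcal{P}_n\lambda_2$, which makes both $\delta_C\ge0$ and the dichotomy ``$\mathcal{P}_n\in\{0,1\}$ or $\lambda_1=\lambda_2$'' immediate, with $\mathcal{P}_n=1$ correctly excluded by the bound $\mathcal{P}_n\le1/2$. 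Your reduction of $\lambda_1=\lambda_2$ to $(1-q)(a+b)=dq(a-b)$ with $q=(1-p)^n$ checks out. Two remarks. First, a small arithmetic slip: at $p=1$ with $n$ odd both sides of that identity equal $0$, not $2d/d^n$ (the right side carries the factor $q=0$, and $a+b=d^{-n}+(-d^{-n})=0$); the conclusion is unaffected. Second, the step you leave open --- no roots in $p\in(0,1)$ --- is precisely the step the paper also elides, but your binomial route does close it: writing $w=1-p$ and $s=p/d$, so that $a\pm b=2\sum_k\binom{n}{k}w^{n-k}s^k$ over even (resp.\ odd) $k$ and $ds=1-w$, the difference $(1-w^n)(a+b)-dw^n(a-b)$ has leading piece $2(1-w)w^n\bigl[\sum_{j=0}^{n-1}w^j-nw^{n-1}\bigr]$, which is strictly positive for $n\ge2$ and $w\in(0,1)$, while for each odd $k\ge3$ the negative term is dominated by the even-$(k-1)$ term because $1-w^n\ge n(1-w)w^{n-1}$ and $n\binom{n}{k-1}\ge\binom{n}{k}$. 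Hence the identity has no interior roots and your argument is complete. (As in the paper, the statement implicitly requires $n\ge2$: for $n=1$ the two orders coincide and $\mathcal{P}_1\equiv0$ for all $p$.)
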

\begin{figure}
    \centering
    \includegraphics[width=0.49\textwidth]{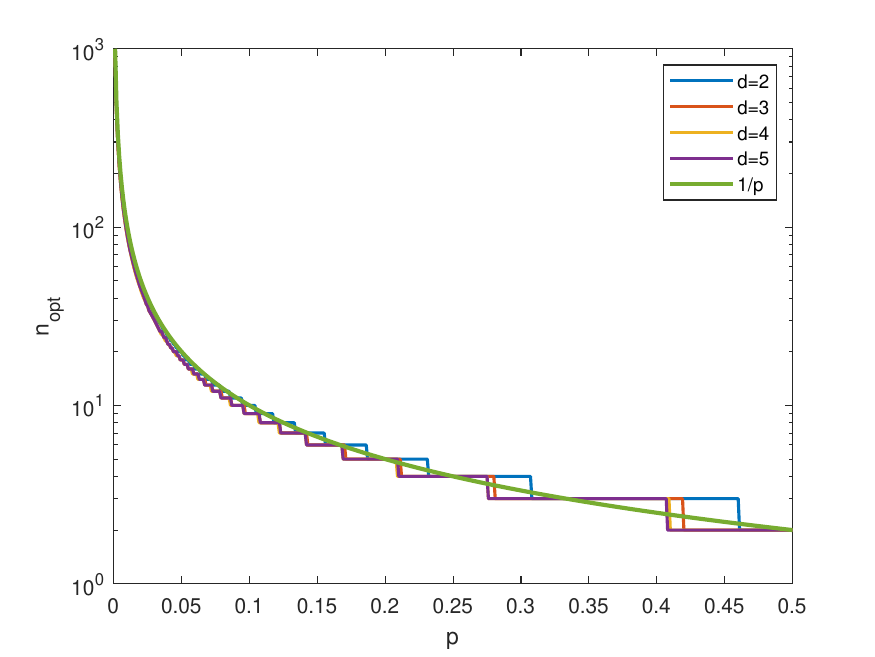}
    \caption{The optimal number of channels for communication enhancement via quantum $\tt SWITCH$. The $y$ axis $n_{opt}$ is the optimal number of channels which is approximately inversely proportional to low error probability $p<0.5$. For larger $p\geq 0.5$, $n_{opt}$ is always 2.  }
    \label{Fig5}
\end{figure}
\begin{figure*}
	\subfigure[d=2]{\includegraphics[width=0.4\textwidth]{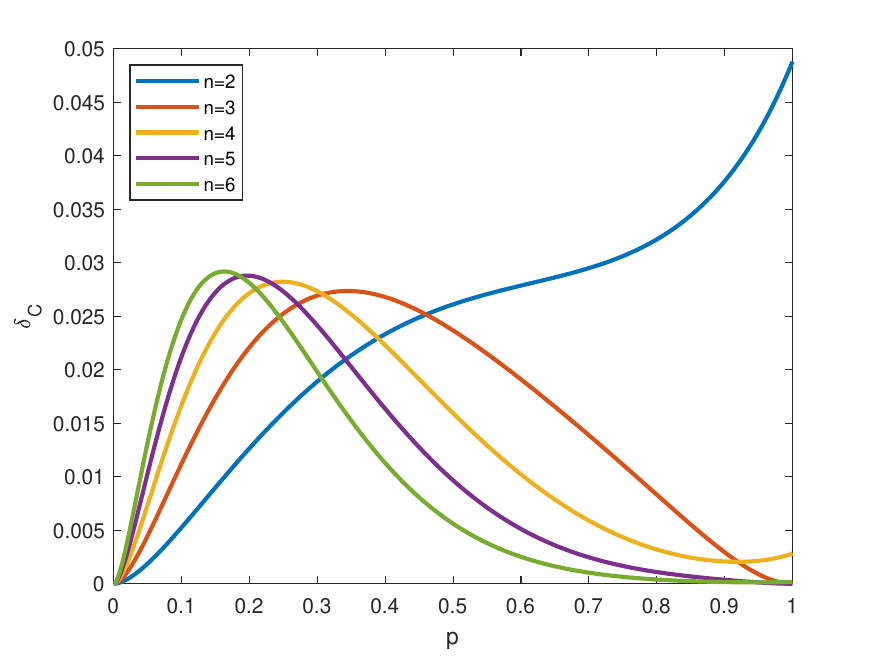}}
	\subfigure[d=3]{\includegraphics[width=0.4\textwidth]{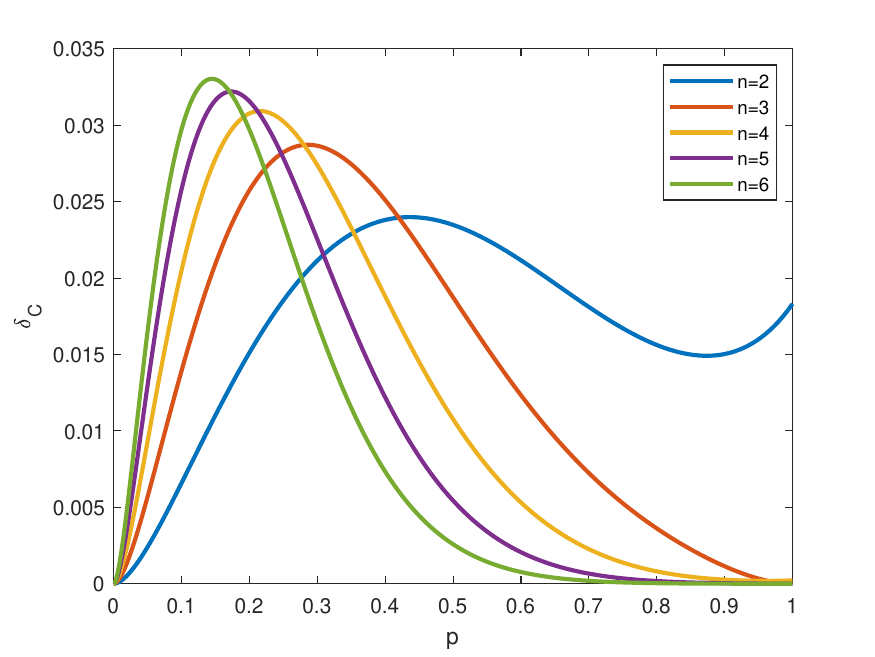}}
        \subfigure[p=0.1]{\includegraphics[width=0.4\textwidth]{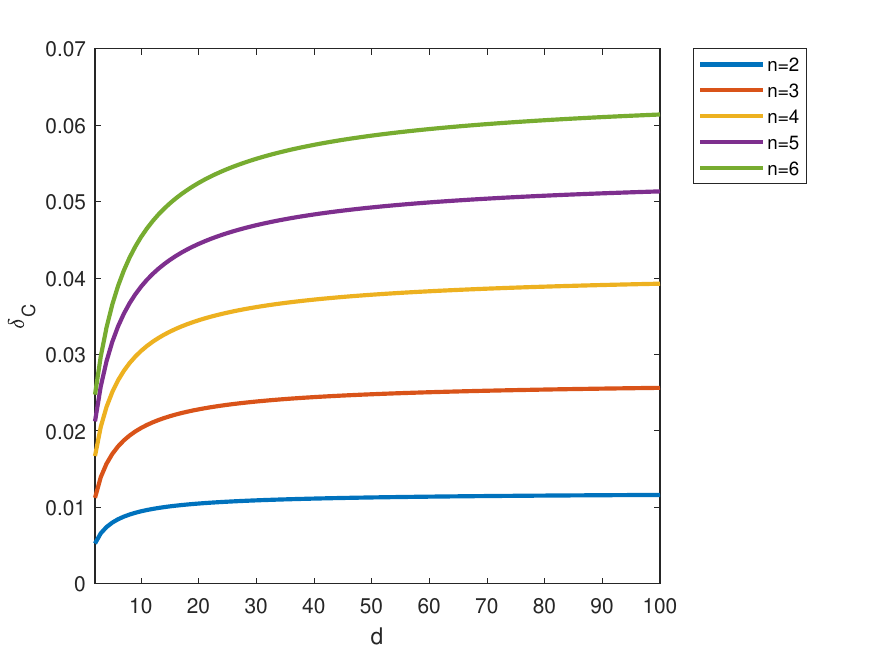}}
	\subfigure[p=0.5]{\includegraphics[width=0.4\textwidth]{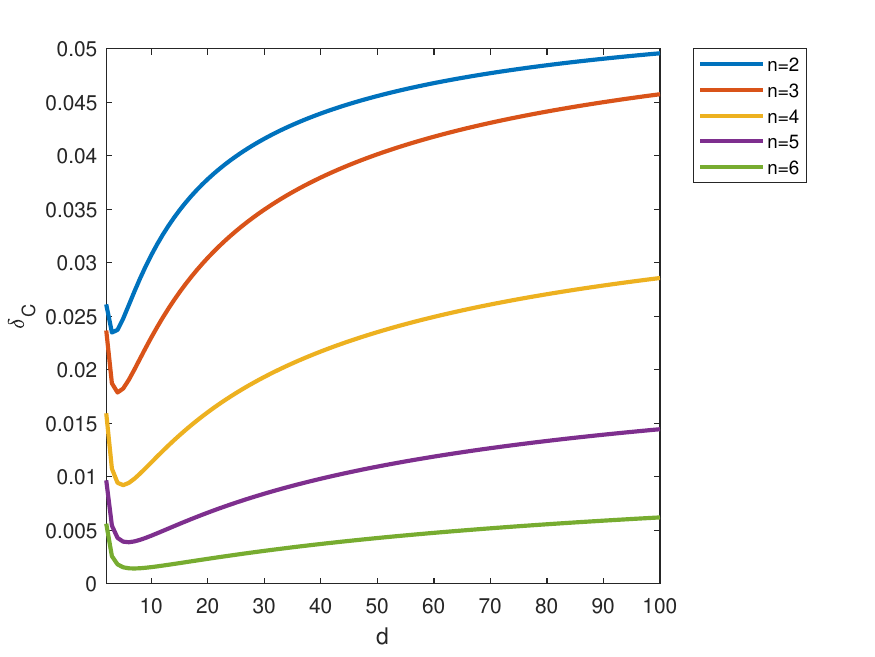}}
	\caption{{\label{Fig6}} The classical causal gain of depolarizing channels. (a) and (b): The $x$ axis is noise probability $p$ and $y$ axis is classical causal gains. (a) is $\delta_C$ For qubit depolarizing channels. (b) is $\delta_C$ For qutrit depolarizing channels. (c) and (d): The classical causal gain for different noise areas. (c) shows $\delta_C$ with the number of channels and dimensions of noise probability $p=0.1$. (d) shows the classical causal gain of middle noise probability decreases with $n$, but is increasing with $d>5$.}
\end{figure*}

We note that this theorem takes the  same form as Theorem~\ref{DeltaCX}, showing $\mathcal{P}_n$ is also useful in arbitrary dimension. In Fig.~\ref{Fig6} we plot classical causal gains. In particular, we find that for low noise areas, the classical causal gain is increasing with respect to $d$. In Fig.~\ref{Fig5} we plot the optimal number of channels via quantum $\tt SWITCH$, which approximately satisfies $p\cdot n_{opt} \approx 1$. Moreover, for middle noise areas, it is decreasing with $n$ but increasing for $d>5$.

\end{widetext}

\end{document}